\def\BibTeX{{\rm B\kern-.05em{\sc i\kern-.025em b}\kern-.08em
    T\kern-.1667em\lower.7ex\hbox{E}\kern-.125emX}}
\newtheorem{theorem}{Theorem}
\newtheorem{definition}{Definition}
\theoremstyle{remark}
\theoremstyle{proposition}
\newtheorem{remark}{Remark}
\newtheorem{proposition}{Proposition}
\theoremstyle{Assumption}
\theoremstyle{Lemma}
\newtheorem{corollary}{\textbf{Corollary}}
\newcommand{\myref}[1]{%
    \ifthenelse{\equal{#1}{proof:proposition_1}}{A}{%
    \ifthenelse{\equal{#1}{proof:proposition_2_1}}{B}{%
    \ifthenelse{\equal{#1}{proof:proposition_2_2}}{C}{%
    \ifthenelse{\equal{#1}{proof:proposition_3}}{D}{%
    \ifthenelse{\equal{#1}{proof:proposition_3_3}}{E}{%
    {\textbf{??}}}}}}}%
}
\definecolor{darkgreen}{RGB}{0,200,0}
\begin{document}
\title{
PartialLoading: User Scheduling and Bandwidth Allocation for Parameter-sharing Edge Inference  \\
\thanks{Guanqiao Qu, Qian Chen, Xianhao Chen, and Kaibin Huang are with the Department of Electrical and Electronic Engineering, University of Hong Kong, Hong Kong SAR, China. (e-mail: gqqu@eee.hku.hk; qchen@eee.hku.hk; xchen@eee.hku.hk; huangkb@eee.hku.hk). Yuguang Fang is with the Hong Kong JC STEM Lab of Smart City and the Department of Computer Science, City University of Hong Kong, Hong Kong SAR, China. (e-mail: my.Fang@cityu.edu.hk). The work was supported in part by the Research Grants Council of Hong Kong under Grant 27213824 and CRS HKU702/2. The work of Y. Fang was supported in part by the Hong Kong SAR Government under the Global STEM Professorship. \textit{(Corresponding author: Xianhao Chen.)}

}
}
\author{Guanqiao Qu,~\IEEEmembership{Graduate Student Member,~IEEE}, Qian Chen,~\IEEEmembership{Member,~IEEE}, \\Xianhao Chen,~\IEEEmembership{Member,~IEEE}, Kaibin Huang,~\IEEEmembership{Fellow,~IEEE}, Yuguang Fang,~\IEEEmembership{Fellow,~IEEE}}

\maketitle

\begin{abstract}
By provisioning inference offloading services, edge inference drives the rapid growth of AI applications at network edge. However, how to reduce the inference latency remains a significant challenge. To address this issue, we develop a parameter-sharing AI model loading (PartialLoading) framework for multi-user edge inference, which exploits two key insights: 1) the majority of latency arises from loading AI models into server GPU memory, and 2) different AI models can share a significant number of parameters, for which redundant loading should be avoided. Towards this end, we formulate a joint multi-user scheduling and spectrum bandwidth allocation problem to maximize task throughput by exploiting shared parameter blocks across models. The intuition is to judiciously \textit{schedule} user requests to reuse the shared parameter blocks between consecutively loaded models, thereby reducing model loading time substantially. To facilitate solution finding, we decouple the problem into two sub-problems, i.e., user scheduling and bandwidth allocation, showing that solving them sequentially leads to the solution to the original problem. Due to the NP-hardness of the problem, we first study an important special case called the ``backbone-sharing'' case, and design a dynamic programming-based algorithm to obtain the \textit{optimal} solution in polynomial time. For the general case, we propose a greedy heuristic to obtain the sub-optimal solution efficiently. Simulation results demonstrate that the proposed framework significantly improves task throughput under deadline constraints compared with user scheduling without exploiting parameter sharing.
\end{abstract}
\begin{IEEEkeywords}
Edge AI, edge inference, 6G, user scheduling, bandwidth allocation, task throughput maximization.
\end{IEEEkeywords}
\section{Introduction}
The sixth-generation (6G) mobile networks are fundamentally reshaping the network paradigm through ``Integrated Artificial Intelligence (AI) and Communications"~\cite{ITU-R-M.2160-0}. With this integration, 6G is envisioned to support ubiquitous AI applications at the network edge. A pivotal usage scenario in this vision is \textit{edge inference}, where end devices offload compute-intensive AI inference tasks to nearby wireless edge servers (e.g., base stations), thereby significantly alleviating the computing burden on end devices~\cite{qu2024mobile}. Unlike cloud-based inference, by processing data in proximity to end users, edge inference reduces data transmission latency~\cite{zhang2019mobile,chen2024space}, thus enabling a wide range of real-time mobile AI applications~\cite{lim2022realizing,dai2019industrial,amin2020edge}.

Despite having low transmission latency, edge inference can still face significant end-to-end (E2E) latency due to high inference latency. Within overall inference latency, \textit{model loading/swapping} is often the dominating contributor. For instance, loading an AI model into the GPU memory can account for an average of 88.94\% of the total inference latency (as shown in Fig. \ref{fig:intro_comp}). For large models with billions of parameters, e.g., GPT-2, this overhead can become more severe, reach 96.3\% of the inference latency~\cite{han2024hermes}. This excessive overhead stems from the fundamental separation of processing and memory units, known as the ``\textit{Von Neumann bottleneck}". Although this bottleneck affects both cloud and edge deployments, it is more pronounced at the edge where edge servers possess more limited computing capabilities. This limitation can hinder timely responses for real-time AI applications, such as mobile Augmented Reality (AR), which demands 7-15 ms E2E latency~\cite{3gpp.22.874}. While various resource allocation schemes have been developed to jointly optimize communication and computing resources for real-time edge inference~\cite{9134426,9311935}, most of these efforts have overlooked the critical model loading bottleneck.

\begin{figure}[t]
\centering
\includegraphics[width=0.27\textwidth]{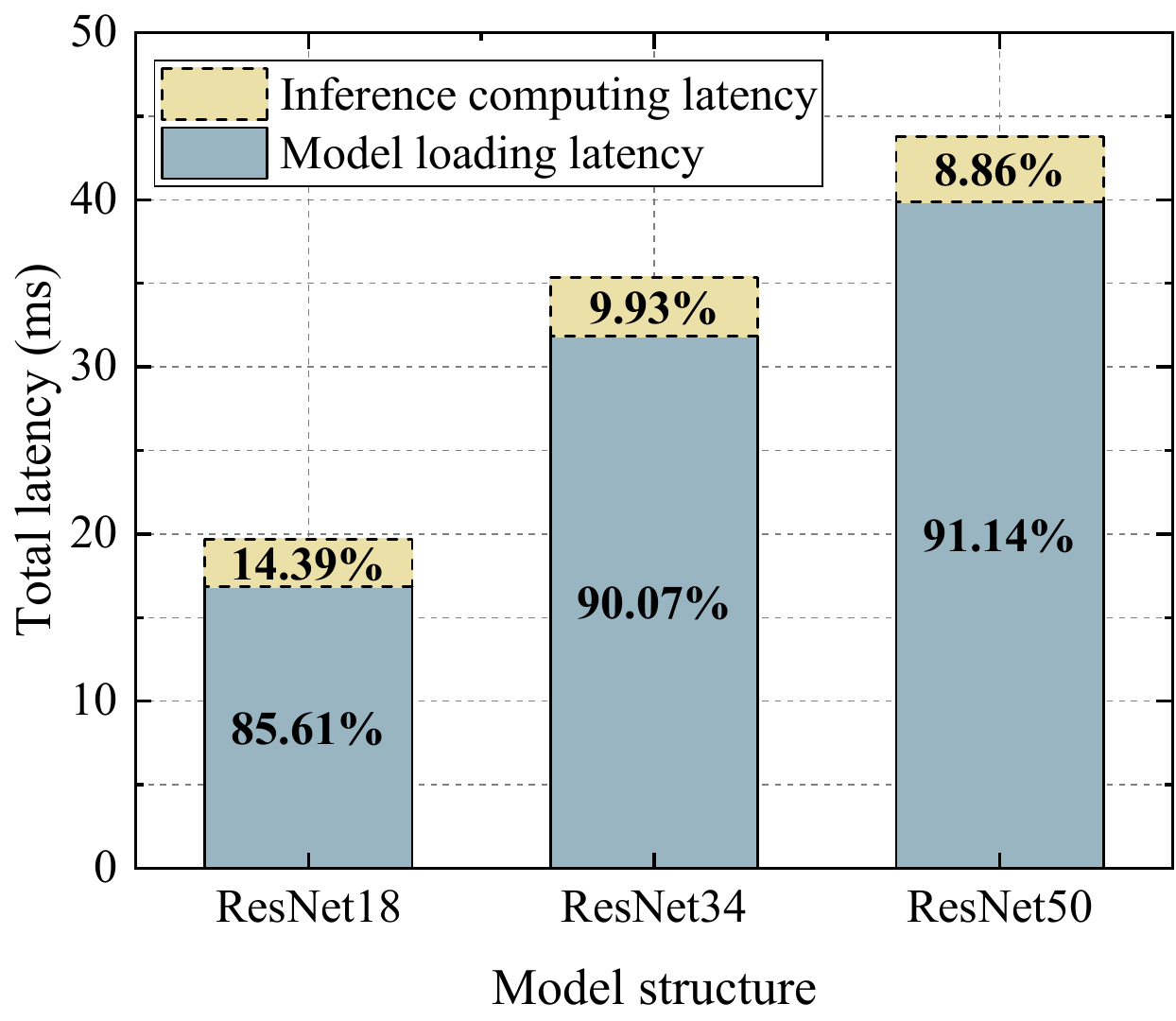}
\vspace{-0.25cm}
\caption{Total latency across various model structures. Model loading refers to the process of loading an AI model to GPU memory, while 
inference computing includes data tensorization and batching, moving data to GPU memory, and forward propagation. The models are from the ResNet family \cite{he2016deep}, with a batch size of 32, evaluated on the CIFAR10 dataset \cite{krizhevsky2009learning}. Inference is executed on a Linux server equipped with an Intel Core i9-13900K CPU, a GeForce RTX 4090 GPU with 24 GB GPU memory, a Toshiba 8 TB SATA3 HDD, and two Kingston 32 GB DDR5 RAM modules.}
\label{fig:intro_comp}
\end{figure}


To overcome the aforementioned challenges, our key idea is to leverage \textit{parameter sharing} across AI models to accelerate the edge inference process. Parameter sharing is a common property in AI models like convolutional neural networks (CNNs) and large language models (LLMs). Specifically, with the emergence of layer-freezing or parameter-efficient fine-tuning (PEFT) techniques, a substantial proportion of parameters of the pre-trained model can be reused across downstream models, with only a small subset of parameters being updated\footnote{Parameter sharing during model fine-tuning effectively reduces computing workload and prevents overfitting while incurring minimal degradation on accuracy. For example, Fig. \ref{fig:intro_50_acc} shows that two downstream models incur no accuracy degradation in the corresponding tasks compared with full-parameter fine-tuning (0 frozen layers), even when up to 93\% and 89\% of the bottom layers are shared from the pre-trained model.}~\cite{tenser2023,zhuang2020comprehensive,qu2024trimcaching,Guo_2019_CVPR,10.5555/2969033.2969197,hu2021lora}. Nevertheless, while parameter sharing has been widely exploited to improve training efficiency, its potential for accelerating \textit{inference} has been largely untapped. An open question remains: \textit{Given a set of models with shared parameters on an edge server, how can this fact be optimally leveraged to enhance inference efficiency by reducing model loading time?} 


\begin{figure}[t]
\centering
\includegraphics[width=0.27\textwidth]{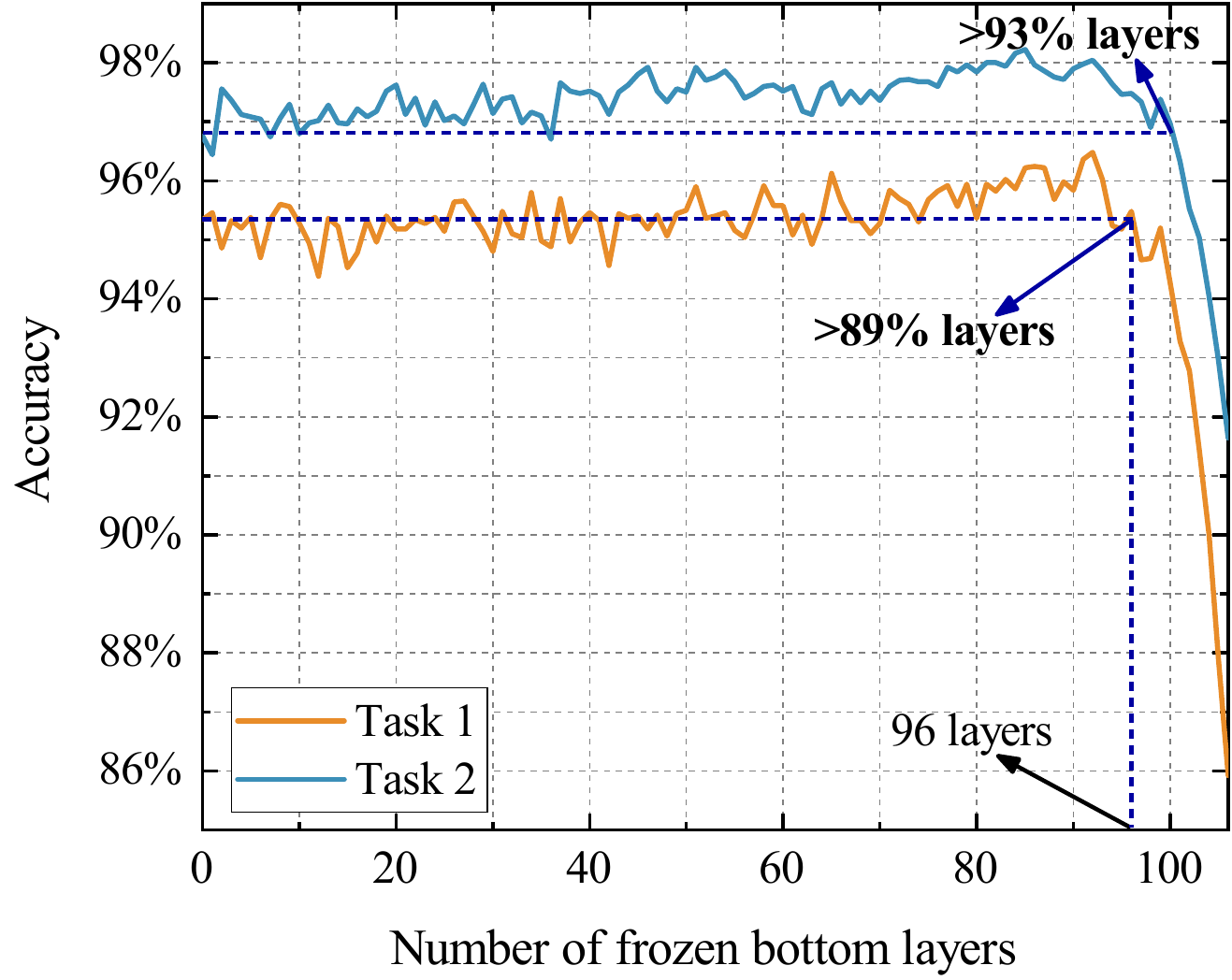}
\vspace{-0.25cm}
\caption{Inference accuracy vs. the number of frozen bottom layers in fine-tuned ResNet-50 models~\cite{he2016deep}, where the ResNet-50 is first pre-trained on CIFAR-100~\cite{krizhevsky2009learning} and then fine-tuned on CIFAR-10~\cite{krizhevsky2009learning}. Tasks 1 and 2, respectively, correspond to the classification of labels 0-4 and 5-9 in CIFAR10. This example shows that different downstream models can share a significant proportion of layers without performance degradation compared with full-parameter fine-tuning (0 frozen layers).}
\vspace{-10pt}
\label{fig:intro_50_acc}
\end{figure}

In this paper, we consider an edge inference scenario in which a resource-constrained edge server provides inference services for users using multiple models with shared parameters. Moreover, user requests for the same model can be processed in batches to improve inference throughput. Our key insight is that, when models in two consecutively executed batches share parameters, the shared parameters can remain in GPU memory and be directly reused by the subsequent batch to avoid redundant parameter loading. 
However, developing an optimal user scheduling policy for the system presents a complex challenge. Simply maximizing parameter overlapping alone is \textit{insufficient}, as the E2E latency of a batch is also constrained by user uplink communication latency: a straggler in a batch can prolong the latency of the entire batch and subsequent batches. Consequently, the scheduling policy must balance two potentially conflicting goals: (i) scheduling requests for the same model together or parameter-sharing models in consecutive batches to minimize model loading latency, and (ii) scheduling users by their uplink latency to mitigate the E2E delay caused by stragglers in a batch. This gives rise to a tradeoff between model loading latency and uplink communication latency.

To tackle the above challenges, this paper addresses the joint multi-user scheduling and spectrum bandwidth allocation problem for parameter-sharing edge inference. By jointly considering model loading and user uplink communication latency, we design scheduling strategies to schedule user inference requests for AI models to maximize inference efficiency. To the best of our knowledge, this is the first work to optimize edge inference in wireless networks by exploiting parameter sharing among AI models. Our main contributions are summarized as follows.
\begin{enumerate}[label=\arabic*)]
\item We propose a framework leveraging parameter sharing across AI models to enable partial model loading for wireless edge inference. 
We formulate a joint user scheduling and spectrum bandwidth allocation problem in multi-user parameter-sharing edge inference systems to maximize the task throughput under latency, communication, and computing resource constraints. 

\item To facilitate solution finding, we effectively decouple the problem into two sub-problems, i.e., user scheduling and spectrum bandwidth allocation, demonstrating that solving the original problem is equivalent to solving these two sub-problems sequentially. Then, we derive the optimal closed-form expression for bandwidth allocation under given user scheduling solutions. 


\item Given the NP-hardness of the problem, we first investigate an important special case of the original problem, where AI models within clusters share backbones. For this backbone-sharing case, we first obtain the optimal model loading order, based on which a dynamic programming (DP)-based algorithm is proposed to obtain the optimal solution with polynomial-time complexity.

\item For the general case, where shared parameter blocks appear at arbitrary positions across models, we design a greedy heuristic algorithm to find a sub-optimal solution. Both of our algorithms considerably outperform the traditional user scheduling strategy without exploiting the parameter-sharing property during model loading. 
\end{enumerate} 

The rest of this paper is organized as follows. Section II reviews the related work. Section III introduces the system model and the proposed framework. Section IV formulates the task throughput maximization problem, presents the equivalent problem decoupling, derives the optimal bandwidth allocation policy, and establishes the necessary conditions for optimal user scheduling. Section V investigates a special case of the problem, while Section VI addresses the general case. Section VII presents simulation results. Finally, Section VIII concludes the paper.


\section{Related Work}
In multi-user edge inference, enhancing task throughput under latency constraints is a challenging issue. This difficulty primarily comes from the conflict between stringent task latency requirements and limited communication and computing resources on an edge server. To improve task throughput, a common approach is to leverage task batching, i.e., scheduling the requests for the same AI model into one inference batch~\cite{liu2023resource,shi2022multiuser,cang2024joint,zhang2024edge,ma2024efficient,yang2019computation,she2024dynamic,yao2022loading}. By batching them into an AI (sub)model, an edge server processes inference computing in parallel based on the parallel computing of GPU. Building on this approach, some research works optimize data selection for inference batching. However, these works focus on scheduling users into a single inference batch~
\cite{liu2023resource,zhang2024edge,yang2019computation,she2024dynamic,yao2022loading}, limiting their applicability in the real world. While some studies have also addressed the user scheduling problems for multiple batches, they make the simplified assumption that users request the same AI model~\cite{shi2022multiuser,cang2024joint,ma2024efficient}. In practice, an edge server typically hosts diverse AI models for its subscribers for different applications.


As alluded to earlier, loading AI models into the server memory is a major bottleneck in the inference process. 
In response to this need, Ogden et al. \cite{ogden2021many} design a model eviction policy that optimizes the models cached in memory for new inference requests by minimizing the cache miss penalty, where the penalty occurs when a requested model is not cached. However, it does not leverage parameter sharing across AI models to reduce model loading time. In \cite{padmanabhan2022gemel}, Padmanabhan et al. propose a loading strategy for edge inference with parameter-sharing models, where the next model is scheduled to maximize the overlapping of shared layers with the previous one, thereby reducing model loading time. Nevertheless, this heuristic fails to account for the data uploading and computing time and does not provide a provable performance guarantee. Beyond these works based on traditional architectures, in-memory computing~\cite{kwon2022yono} has emerged as an alternative architecture to process data directly in memory. However, in-memory computing faces incompatibility with traditional commercial systems that still dominate today's computing industries, limiting its feasibility for widespread deployment in practice. Overall, these works overlook the communication constraints of edge servers, limiting the practicality of their approaches.

The key idea of our proposed framework is strategically scheduling user requests to reuse the shared parameter blocks between consecutively loaded AI models. However, while the performance gain is clear, solving the optimization problem is highly challenging. Specifically, assuming AI models are independent or there is only one AI model, the scheduling problems in edge inference can be typically mapped to knapsack problems and solved with standard DP algorithms~\cite{liu2023resource,shi2022multiuser}. Unfortunately, when considering parameter sharing across models, the interdependencies among models significantly complicate user scheduling, causing standard DP procedures to fail. Moreover, incorporating bandwidth allocation further increases the complexity of the problem. To the best of our knowledge, this is the first paper to address the user scheduling problem in multi-user edge inference by exploiting parameter sharing across models. 


\section{Framework}
\subsection{Network Scenario}
As illustrated in Fig. \ref{fig:system_model}, we consider a single-edge multi-user scenario in wireless edge networks. A set of users, $\mathcal{K}=\left\{1,2,\dots, K\right\}$, are associated with a wireless edge server (e.g., a server colocated with a base station). Without loss of generality, we assume each user generates one inference task requesting a specific AI model. The set of models is denoted as $\mathcal{I}=\left\{1,2,\dots, I\right\}$, which are hosted on the edge server, and each model may be requested by multiple users. Each user uploads input data of the task\footnote{To address privacy concerns, users can upload a feature vector extracted from raw data using a lightweight on-device model or the shallow layers of the requested model, which is equivalent to user data uploading in this paper.} to the edge server for edge inference, after which the server returns the inference results to the user. Moreover, the time horizon is divided into $T$ consecutive time slots with equal duration $\Delta\tau$, indexed by $\tau\in\mathcal{T}=\left\{1,2,\dots\right\}$. 

As a remark, unlike prior works that predominantly focus on an oversimplified \textit{homogeneous} task modeling where all users request the same AI model for inference~\cite{liu2023resource,shi2022multiuser}, this paper addresses a more practical \textit{heterogeneous} task modeling by considering multiple AI models/tasks hosted on an edge server, which is prerequisite for applying our scheduling policy. 

\begin{figure}[t]
\centering
\includegraphics[width=0.35\textwidth]{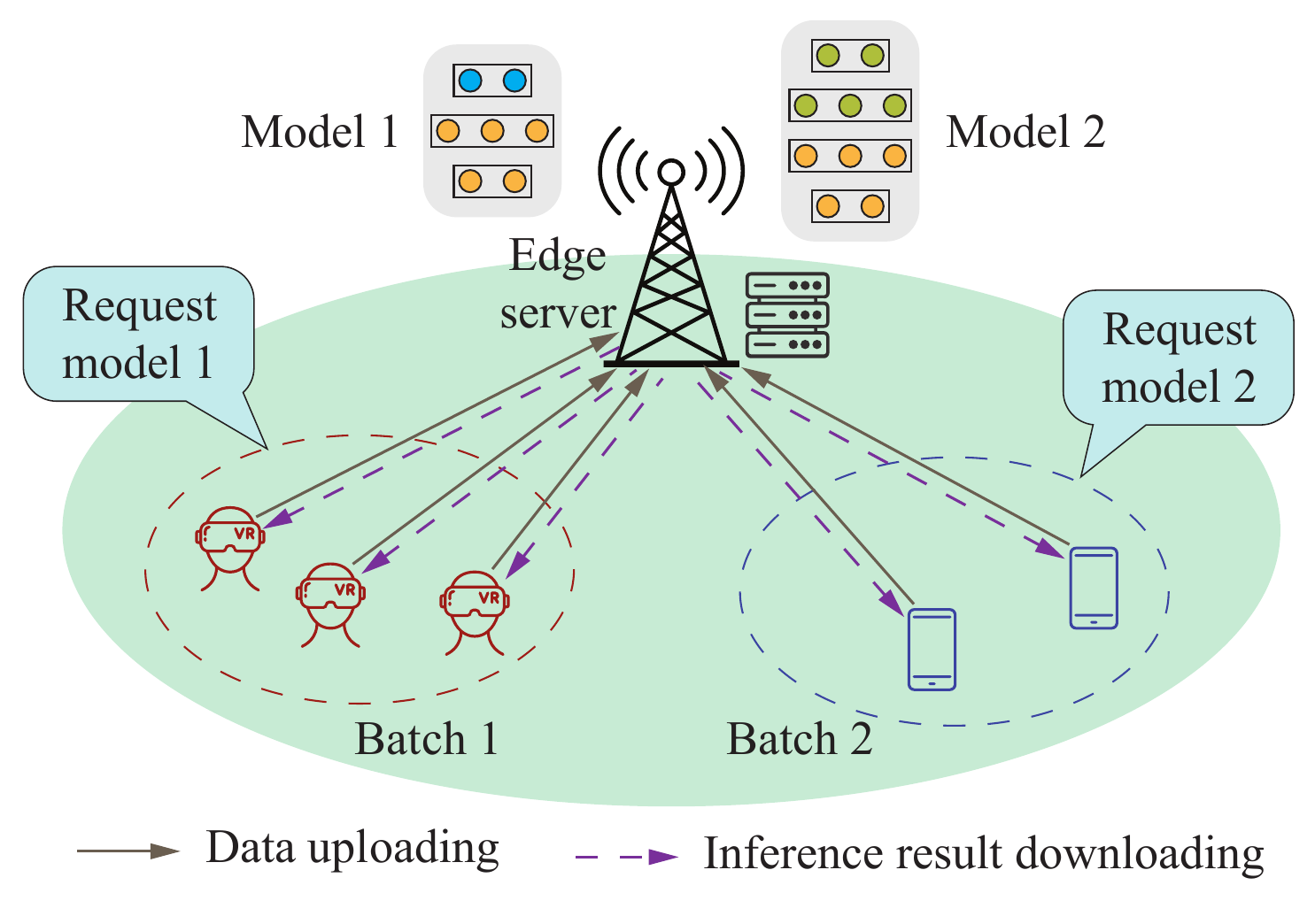}
\vspace{-0.25cm}
\caption{Our considered single-edge multi-user scenario, where parameter blocks can be shared among AI models hosted on the server. In this figure, the first two layers (in orange) are shared between model 1 and model 2.}
\label{fig:system_model}
\vspace{-10pt}
\end{figure}

\subsection{Parameter-sharing AI Model Library}
We consider the parameter-sharing in the AI model library $\mathcal{I}$. The set $\mathcal{J}=\left\{1,2,\dots,J\right\}$ represents the parameter blocks of the models in $\mathcal{I}$. Given that AI models may contain billions of parameters, a parameter block refers to a set of model parameters, such as a (sub-)layer, a model block, or even a backbone (e.g., the bottom layers of a model or a frozen pre-trained model), to reduce the complexity of the problem. This applies to a broad range of AI models like CNNs, Transformers, and LLMs. The definition of a shared parameter block is formally presented below.
\begin{definition}
    \textbf{Shared parameter block}: A parameter block is referred to as a shared parameter block if it is contained in more than one AI model within a considered set of models.
\end{definition}
Moreover, when a parameter block corresponds to a backbone or to a layer, a shared parameter block is also called a \textit{shared backbone} or a \textit{shared layer} in this paper.


\subsection{Inference Batching with Partial Model Loading}
To enhance task throughput, the server processes tasks in batches. Following prior work~\cite{liu2023resource,shi2022multiuser,ren2019collaborative,li2021adaptive}, we assume that all task requests arrive at the edge server at time 0\footnote{Although users can send their task requests at random time to the edge server, the edge server can initiate user data uploading after receiving a number of requests during a time duration to improve inference efficiency, which is equivalently modeled by assuming all task requests arrive at time 0~\cite{10682995,10549973,9708995}. Alternatively, our work can also apply to the case where the arrival time of tasks is predictable at the beginning.}, and multiple tasks requesting the same model can be batched together to leverage the parallel processing capabilities of GPUs. The batch sequence is denoted by $n\in\mathcal{N}=\left\{1,2,\dots, N\right\}$, where the edge server processes batches sequentially in ascending order of $n$. In batch $n$, only users requesting the same AI model can be scheduled. Let a binary variable $x_{n,k}$ be the decision variable for user scheduling, with $x_{n,k}=1$ indicating that the server processes user $k$'s task in batch $n$. As illustrated in Fig. \ref{fig:workflow}, each inference batch in the proposed parameter-sharing AI model loading (PartialLoading) framework consists of the following three steps.

    (1) \textbf{Data uploading}: Scheduled users upload raw data or features to the edge server. The uploaded data is cached in the system memory of the edge server after reception. 
    
    (2) {\textbf{Partial model loading}}: Once all data from the scheduled users is uploaded, the edge server loads the specified AI model from the disk to system memory and then to GPU memory. To reduce latency, the shared parameter blocks between two consecutively loaded models in the current and the previous batch are reused without reloading.
    
    (3) \textbf{Inference computing}: Inference computing comprises three steps. (3.1) \textit{Data tensorization and batching}: The edge server converts the uploaded raw data into tensors, which can be skipped if users upload preprocessed features directly. Then, the edge server assembles the tensorized data into a single batch. (3.2) \textit{Moving data to GPU}: The edge server moves the batched data from the system memory to the GPU memory for further processing. (3.3) \textit{Forward propagation}: The edge server feeds the batched data into the AI model in the GPU memory, performs forward propagation, produces inference results, and sends the results back to users.  
\begin{figure}[t]
\centering
\includegraphics[width=0.48\textwidth]{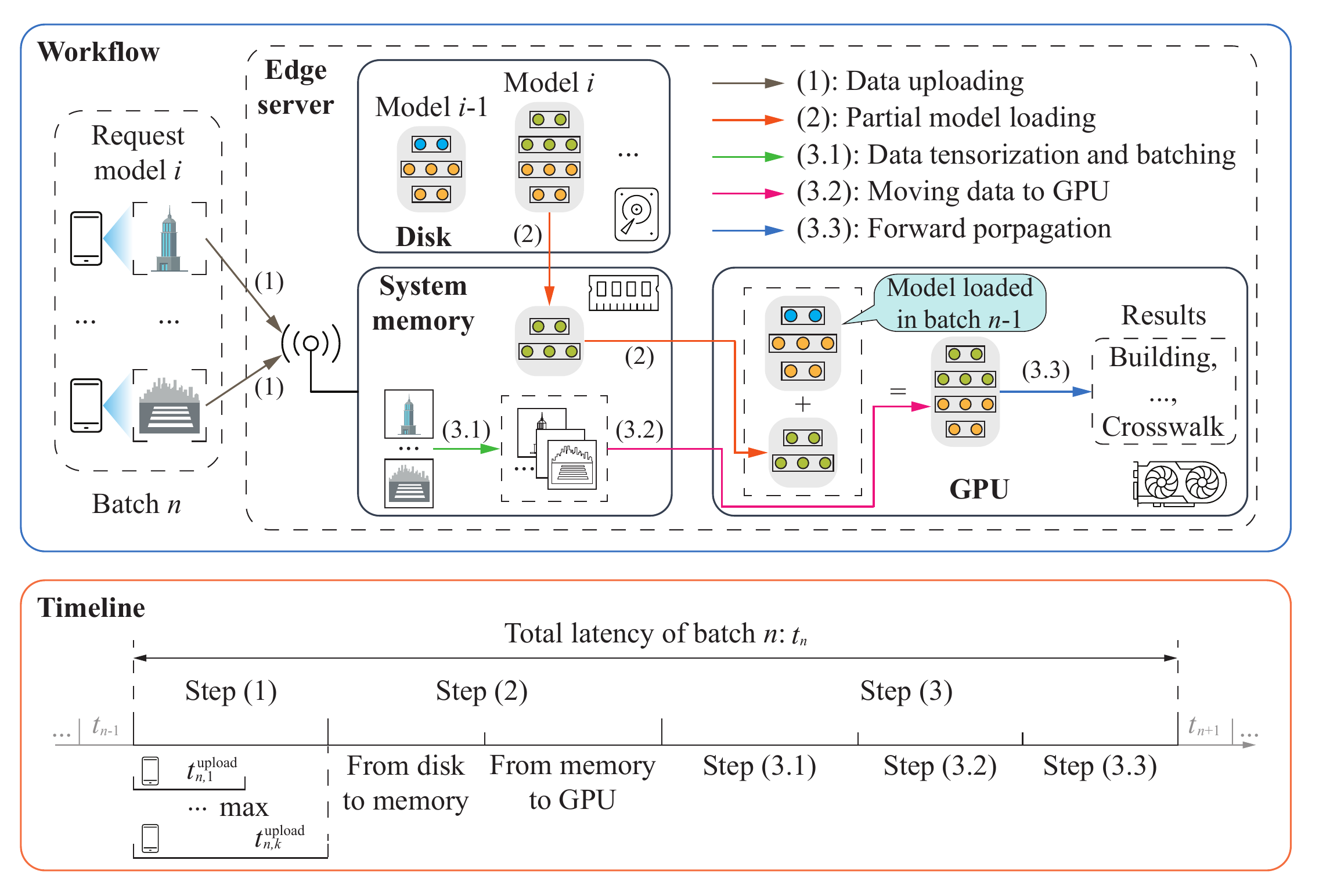}
\vspace{-0.25cm}
\caption{Workflow and timeline of an inference batch. We use object classification/detection applications as an illustrative example. Mobile devices offload the captured images to the edge server and request model $i$ in batch $n$. Since model $i$ shares the first two layers (in orange) with model $i-1$, which has been loaded in the previous batch (batch $n-1$), the edge server only loads the last two layers (in green) of model $i$ into the GPU memory for inference. The main goal of this paper is to schedule users into a sequence of batches to optimally leverage the parameters shared across models, thereby enhancing task throughput under latency constraints.}
\label{fig:workflow}
\vspace{-10pt}
\end{figure}

Due to the small data size of the inference results and the high transmit power of the edge server, the time of downloading inference results is ignored \cite{cang2024joint}. As a result, the total latency for processing batch $n$ is expressed as
\begin{equation}\label{eq_t_n}
    t_{n} = t_{n}^{\text{up}} + t_{n}^{\text{load}} + t_{n}^{\text{comp}},
\end{equation}
where $t_{n}^{\text{up}}$ represents the data uploading time of scheduled users, $t_{n}^{\text{load}}$ denotes the time for partial model loading, and $t_{n}^{\text{comp}}$ is the inference computing time. The detailed timeline of an inference batch is illustrated in Fig. \ref{fig:workflow}, and further details will be provided in the following subsections.

User $k$ can be scheduled and served in batch $n$ if it can obtain the inference result within the E2E latency requirement, satisfying
\begin{equation}\label{const_4}
    x_{n,k}\sum\limits_{n'=1}^{n}t_{n'} \le T\Delta\tau,\ \forall n\in{\mathcal{N}},\ \forall k\in{\mathcal{K}},
\end{equation}
where $\bar{T}=T\Delta\tau$ denotes the E2E latency requirement.
Additionally, since user $k$ can be scheduled in at most one batch, it follows that
\begin{equation}\label{const_1}
    \sum\limits_{n\in\mathcal{N}}x_{n,k}\le 1, \ \forall k \in\mathcal{K}.
\end{equation}

\subsection{Uplink Communication Time}
We consider an FDMA uplink channel between the users and the edge server. In each batch, the total bandwidth $B$ of the edge server is divided into multiple sub-channels, each allocated to a single user. The bandwidth allocated to user $k$ in batch $n$ is denoted by 
\begin{equation}\label{eq_bw}
B_{n,k}=y_{n,k}B,
\end{equation}
where $y_{n,k}\in\left[0,1\right]$ is the ratio of the allocated bandwidth, satisfying
\begin{equation}\label{const_6}
    \sum\limits_{k\in\mathcal{K}}y_{n,k} \le 1, \ \forall n\in\mathcal{N}.
\end{equation}

Moreover, the expected data uploading rate between user $k$ and the edge server is given by $
\bar{C}_{n,k} = B_{n,k}\bar{R}_{k}$, where $\bar{R}_{k}={\rm{log}}_2\left(1+\frac{P_{k}d_{k}^{-\alpha}}{N_{0}}\right)$, $P_{k}$ is the spectral density of the transmit power of user $k$, $d_{k}$ is the distance between user $k$ and the server, $\alpha$ is the path loss
factor, and $N_0$ is the spectral density of the additive white Gaussian noise. Thus, the data uploading time of user $k$ in batch $n$ is
\begin{equation}\label{eq:latency_uploading}
    t_{n,k}^{\text{up}} = 
    \begin{cases}
                    \begin{aligned}
				&\frac{x_{n,k}D_{k}}{y_{n,k}B\bar{R}_{k}},\ \text{if }x_{n,k}=1,\\
				&0,\ \text{if }x_{n,k}=0,
                    \end{aligned}
		\end{cases}
\end{equation}
where $D_{k}$ is the size of the data uploaded by user $k$. Furthermore, since the partial model loading step begins only after all scheduled users in batch $n$ finish uploading their data, $t_{n}^{\text{up}}$ in \eqref{eq_t_n} is expressed as
\begin{equation}  
    t_{n}^{\text{up}} = \mathop{\max}\limits_{k\in\mathcal{K}}\left\{t_{n,k}^{\text{up}}\right\}.
\end{equation}

\subsection{Partial Model Loading Time}
In the partial model loading step of batch $n$, an edge server only loads the parameter blocks not shared between models in batches $n-1$ and $n$. Therefore, $t_{n}^{\text{load}}$ in \eqref{eq_t_n} is
\begin{equation}\label{eq_partial_load}
\begin{aligned}
    t_{n}^{\text{load}} = V_{1}\left(\bar{S}_{n}\right)
    +V_{2}\left(\bar{S}_{n}\right),
\end{aligned}
\end{equation}
where $\bar{S}_{n} =\sum\limits_{j\in\mathcal{J}}S_{j}\rho_{n,j}\left(\rho_{n,j}-\rho_{n-1,j}\right)$ is the size of parameter blocks to be loaded in batch $n$, and $S_{j}$ is the size of parameter block $j$. Here, $\rho_{n,j}=1-\prod\limits_{i\in\mathcal{I}_{j}}\prod\limits_{k\in\mathcal{K}_{i}}\left(1-x_{n,k}\right)$, where $\mathcal{I}_{j}$ is the set of models containing parameter block $j$, $\mathcal{K}_{i}$ is the set of users requesting model $i$. $\rho_{n,j}=1$ indicates that parameter block $j$ is included in the request model in batch $n$, where $\rho_{0,j}=0$. Thus, $\rho_{n,j}\left(\rho_{n,j}-\rho_{n-1,j}\right)=1$ implies that parameter block $j$ has not been loaded in batch $n-1$ and needs to be loaded in batch $n$. Moreover, functions $V_{1}\left(\cdot\right)$ and $V_{2}\left(\cdot\right)$ denote the loading time of data from the disk to system memory and from system memory to GPU memory, respectively, which are monotone increasing functions with data size\footnote{In general, $V_{1}\left(\cdot\right)$ depends on the bandwidth between the disk and system memory, as well as on the page cache of the system memory \cite{nvidiagpu,linux}. $V_{2}\left(\cdot\right)$ can be approximated as the ratio of the data size to the bandwidth between the system memory and GPU memory~\cite{xu2022igniter}.}.

\subsection{Inference Computing Time}
The inference computing time increases linearly as the batch size grows \cite{huang2018multi,wang2024joint,wu2023graft}. 
Given $x_{n,k}$, $t_{n}^{\text{comp}}$ in \eqref{eq_t_n} is
\begin{equation}
    t_{n}^{\text{comp}} = \sum\limits_{i\in\mathcal{I}}\left[1-\prod\limits_{k\in\mathcal{K}_{i}}\left(1-x_{n,k}\right)\right]\left(\mu_{i}\sum\limits_{k\in\mathcal{K}_{i}}x_{n,k}+\beta_{i}\right),
\end{equation}
where $1-\prod\limits_{k\in\mathcal{K}_{i}}\left(1-x_{n,k}\right) = 1$ indicates model $i$ is loaded in batch $n$, $\mu_{i}$ and $\beta_{i}$ are constants related to the model structure and GPU performance when performing inference with model $i$, and $\sum\limits_{k\in\mathcal{K}_{i}}x_{n,k}$ is the batch size of batch $n$ for model $i$.

\section{Task Throughput Maximization Problem}
\subsection{Problem Formulation} 
Our framework aims to maximize task throughput by addressing the joint user scheduling and bandwidth allocation problem within the edge server's GPU memory capacity, communication constraints, and users' task latency requirements. The problem formulation is given as follows. 

\begin{subequations}
	\begin{equation}
		{\mathcal{P}1}:\ \mathop{\max}\limits_{{\bf{X}},{\bf{Y}}}\ U\left({\bf{X}}\right) = \sum\limits_{n\in\mathcal{N}}\sum\limits_{k\in\mathcal{K}}x_{n,k}
	\end{equation}	
        \begin{equation}
		{\rm{s.t.}} \ \eqref{const_4}, \eqref{const_1},\eqref{const_6}, 
	\end{equation}	
	\begin{equation}\label{const_2}
		\sum\limits_{i\in\mathcal{I}}A_{i}\left(\sum\limits_{k\in\mathcal{K}_{i}}x_{n,k}\right)\le Q,\ \forall n\in{\mathcal{N}},
	\end{equation}	
        \begin{equation}\label{const_3}
        \sum\limits_{i\in\mathcal{I}}\left[1-\prod\limits_{k\in\mathcal{K}_{i}}\left(1-x_{n,k}\right)\right]\le 1,\ \forall n\in{\mathcal{N}},
	\end{equation}
        \begin{equation}\label{const_7}
		y_{n,k}\le x_{n,k},\ \forall n\in{\mathcal{N}},\ \forall k\in{\mathcal{K}},
	\end{equation}	
	\begin{equation}\label{const_5}
		x_{n,k}\in\left\{0,1\right\},\ \forall n\in{\mathcal{N}},\ \forall k\in{\mathcal{K}},
	\end{equation}	
        \begin{equation}\label{const_8}
		y_{n,k}\in\left[0,1\right],\ \forall n\in{\mathcal{N}},\ \forall k\in{\mathcal{K}},
	\end{equation}	
\end{subequations}
where $x_{n,k}\in{\bf{X}}$ and $y_{n,k}\in{\bf{Y}}$ denote the decision variables for user scheduling and spectrum bandwidth allocation, respectively. In \eqref{const_2},
$Q$ is the edge server's maximum available GPU memory, and function $A_{i}\left(\sum\limits_{k\in\mathcal{K}_{i}}x_{n,k}\right)$ is the peak GPU memory required for forward propagation with model $i$ under batch size $\sum\limits_{k\in\mathcal{K}_{i}}x_{n,k}$, which is a monotone increasing function~\cite{padmanabhan2022gemel}.  \eqref{const_3} ensures at most one model is loaded into GPU memory in batch $n$. \eqref{const_7} guarantees user $k$ is not allocated bandwidth unless it is scheduled in batch $n$. 

\subsection{Equivalent Problem Decoupling}
$\mathcal{P}1$ is a mixed-integer nonlinear programming problem coupling ${\bf{X}}$ and ${\bf{Y}}$. To simplify the solution process, we derive the optimal closed-form expression for spectrum bandwidth allocation under ${\bf{X}}$. The details are as follows.
\begin{proposition}\label{proposition_1}
    Given any user scheduling decision in batch $n$, the corresponding optimal spectrum bandwidth allocation in batch $n$ can be obtained by minimizing $t_{n}^{\text{up}}$. For given ${\bf{X}}$, the minimum value of $t_{n}^{\text{up}}$ is 
    \begin{equation}\label{optimal_uploading}
        t_{n,\min}^{\text{up}} = \sum\limits_{k\in\mathcal{K}}\frac{x_{n,k}D_{k}}{B\bar{R}_{k}},
    \end{equation}
    and the corresponding optimal spectrum bandwidth allocation under ${\bf{X}}$ is 
    \begin{equation}\label{optimal_bandwidth}
        y_{n,k} = 
        \begin{cases}
                    \begin{aligned}
				&\frac{x_{n,k}D_{k}}{B\bar{R}_{k}\sum\limits_{k'\in\mathcal{K}}\frac{x_{n,k'}D_{k'}}{B\bar{R}_{k'}}},\ \text{if }x_{n,k} = 1,\\
				&0,\ \text{if }x_{n,k}=0.
                    \end{aligned}
		\end{cases}
    \end{equation}
\end{proposition}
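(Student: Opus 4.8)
The plan is to prove the proposition in two stages: first justify that optimizing the bandwidth reduces to minimizing the uplink time $t_n^{\text{up}}$, and then solve the resulting per-batch min-max problem in closed form.

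For the first stage, I would fix the scheduling decision $\mathbf{X}$ and observe that in the batch-latency decomposition \eqref{eq_t_n}, the bandwidth ratios $\mathbf{Y}$ enter only through $t_n^{\text{up}}$: both $t_n^{\text{load}}$ and $t_n^{\text{comp}}$ are determined entirely by which model is loaded and by the batch size, i.e., by $\mathbf{X}$ alone. Since the throughput objective $U(\mathbf{X})$ does not depend on $\mathbf{Y}$, the sole role of $\mathbf{Y}$ is to relax the deadline constraint \eqref{const_4} through the cumulative latency $\sum_{n'\le n} t_{n'}$. Making every such constraint as slack as possible can only enlarge the set of admissible schedules and hence never decreases the attainable throughput, so the bandwidth problem amounts to minimizing $t_n^{\text{up}}$ in each batch $n$ independently, subject to \eqref{const_6} and \eqref{const_7}. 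This establishes the first claim and decouples the bandwidth allocation across batches.

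For the second stage, I would write $\mathcal{S}_n=\{k\in\mathcal{K}:x_{n,k}=1\}$ and, using \eqref{eq:latency_uploading} together with the maximum over users, recast the per-batch problem as minimizing $\max_{k\in\mathcal{S}_n} D_k/(y_{n,k}B\bar{R}_k)$ over $y_{n,k}>0$ subject to $\sum_{k\in\mathcal{S}_n} y_{n,k}\le 1$. The key step is a lower bound: for any feasible allocation whose uplink time equals $T$, every scheduled user must satisfy $D_k/(y_{n,k}B\bar{R}_k)\le T$, i.e. $y_{n,k}\ge D_k/(T B\bar{R}_k)$; summing over $\mathcal{S}_n$ and invoking \eqref{const_6} gives $1\ge (1/T)\sum_{k\in\mathcal{S}_n} D_k/(B\bar{R}_k)$, hence $T\ge \sum_{k\in\mathcal{K}} x_{n,k}D_k/(B\bar{R}_k)$. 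This yields the right-hand side of \eqref{optimal_uploading} as an unconditional lower bound on $t_n^{\text{up}}$.

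Finally, I would exhibit the allocation that equalizes all per-user uplink times and verify that it meets this bound, proving both optimality and the formula \eqref{optimal_bandwidth}. Setting each $D_k/(y_{n,k}B\bar{R}_k)$ equal to the common value $t_{n,\min}^{\text{up}}$ forces $y_{n,k}=D_k/(t_{n,\min}^{\text{up}} B\bar{R}_k)$; substituting $t_{n,\min}^{\text{up}}=\sum_{k'\in\mathcal{K}} x_{n,k'}D_{k'}/(B\bar{R}_{k'})$ produces exactly \eqref{optimal_bandwidth} and shows that the $y_{n,k}$ sum to one, so the allocation is feasible and the lower bound is tight. The conceptual crux, and the step I expect to matter most, is recognizing the straggler-equalization structure: the minimizer must balance bandwidth so that no scheduled user finishes uploading later than any other, a fact the summation lower bound captures cleanly without resorting to Lagrangian or KKT machinery. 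The remaining manipulations are routine algebra.
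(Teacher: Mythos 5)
Your proof is correct, and its key step takes a genuinely different route from the paper's. For the reduction to minimizing $t_n^{\text{up}}$ (your first stage), the two arguments are essentially equivalent: the paper phrases it as a contradiction — a supposedly better allocation cannot exist because $t_n^{\text{load}}$ and $t_n^{\text{comp}}$ are independent of $\mathbf{Y}$ — while you phrase it as monotonicity of the feasible set with respect to slackness of the deadline constraints; these are interchangeable. The difference lies in establishing the minimum value. The paper chains $t_n^{\text{up}} = \max_k t_{n,k}^{\text{up}} \ge \text{arithmetic mean} \ge \text{geometric mean}$ and then extracts the equal-time allocation from the equality condition of the AM--GM step; since the geometric mean still depends on $\mathbf{Y}$, the burden of that argument rests on the equality analysis rather than on a fixed lower bound. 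You instead invert each scheduled user's constraint ($y_{n,k} \ge D_k/(T B \bar{R}_k)$ whenever the batch uplink time is $T$), sum over the batch, and invoke $\sum_{k} y_{n,k} \le 1$ to obtain the constant lower bound $T \ge \sum_{k} x_{n,k} D_k/(B\bar{R}_k)$ unconditionally, after which exhibiting the equalizing allocation proves tightness. Your route is more elementary and arguably tighter logically: the bound is independent of the allocation by construction, so optimality follows from the feasibility of one explicit point, with no Lagrangian or mean-inequality machinery. The only cosmetic omission is the empty-batch case $\{k : x_{n,k}=1\} = \emptyset$, which the paper treats explicitly; there both \eqref{optimal_uploading} and \eqref{optimal_bandwidth} degenerate to zero, so nothing substantive is lost.
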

\begin{proof}
    The proof is provided in Appendix \myref{proof:proposition_1}.
\end{proof}

Building on Proposition \ref{proposition_1}, we can decouple $\mathcal{P}1$ without compromising the optimality, as stated in the following theorem.
\begin{theorem}\label{theorem_1}
    Solving the original problem $\mathcal{P}1$ is equivalent to first solving the below sub-problem $\mathcal{P}2$ and then determining the optimal spectrum bandwidth allocation using \eqref{optimal_bandwidth} based on the obtained user scheduling.
\end{theorem}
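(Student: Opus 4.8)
The plan is to prove the equivalence at the level of feasible sets: I will show that a scheduling matrix $\mathbf{X}$ is feasible for $\mathcal{P}1$ (for \emph{some} admissible $\mathbf{Y}$) if and only if it is feasible for $\mathcal{P}2$, where $\mathcal{P}2$ is understood as the problem obtained from $\mathcal{P}1$ by eliminating $\mathbf{Y}$ together with constraints \eqref{const_6}, \eqref{const_7}, \eqref{const_8} and replacing $t_n^{\text{up}}$ in the latency budget \eqref{const_4} by its minimized value $t_{n,\min}^{\text{up}}$ from \eqref{optimal_uploading}. Since the objective $U(\mathbf{X})=\sum_{n,k}x_{n,k}$ is independent of $\mathbf{Y}$, once this feasibility equivalence is established the optimal scheduler of the two problems must coincide, and the optimal bandwidth is then recovered from \eqref{optimal_bandwidth}.

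First I would isolate where $\mathbf{Y}$ actually enters $\mathcal{P}1$. The objective does not involve $\mathbf{Y}$; constraints \eqref{const_2} and \eqref{const_3} involve only $\mathbf{X}$; and the loading time $t_n^{\text{load}}$ and computing time $t_n^{\text{comp}}$ in \eqref{eq_t_n} likewise depend only on $\mathbf{X}$. Hence, within the cumulative latency $\sum_{n'=1}^{n}t_{n'}$ appearing in \eqref{const_4}, the bandwidth allocation influences \emph{only} the partial sum $\sum_{n'=1}^{n}t_{n'}^{\text{up}}$. The remaining constraints \eqref{const_6}, \eqref{const_7}, \eqref{const_8} are precisely the admissibility conditions on $\mathbf{Y}$ itself.

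The key step is a monotonicity/decoupling argument. Because the bandwidth budget \eqref{const_6} couples only the variables $\{y_{n,k}\}_{k}$ belonging to the same batch $n$, the allocations across distinct batches are independent, so each $t_{n'}^{\text{up}}$ can be minimized separately subject to \eqref{const_6}, \eqref{const_7}, \eqref{const_8}. By Proposition \ref{proposition_1}, the allocation \eqref{optimal_bandwidth} attains the batchwise minimum $t_{n',\min}^{\text{up}}$ in \eqref{optimal_uploading}; therefore it simultaneously minimizes every partial sum $\sum_{n'=1}^{n}t_{n'}^{\text{up}}$ and hence the left-hand side of \eqref{const_4} for every pair $(n,k)$. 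Consequently, if $\mathbf{X}$ is feasible for $\mathcal{P}1$ under any admissible $\mathbf{Y}$, it remains feasible under \eqref{optimal_bandwidth}, since the latter only \emph{decreases} the latencies in \eqref{const_4} while leaving \eqref{const_2}, \eqref{const_3} untouched; conversely, if $\mathbf{X}$ paired with \eqref{optimal_bandwidth} already violates \eqref{const_4}, then no admissible $\mathbf{Y}$ can satisfy it, as every alternative choice yields a larger uplink time. This proves the two feasible $\mathbf{X}$-sets coincide.

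Finally, combining the $\mathbf{Y}$-independence of the objective with the feasibility equivalence yields that the maximizing $\mathbf{X}$ of $\mathcal{P}1$ and $\mathcal{P}2$ are identical, completing the decoupling. The main obstacle I anticipate is making the monotonicity argument fully rigorous, in particular verifying carefully that pushing each $t_{n'}^{\text{up}}$ to its minimum does not tighten any other constraint (it does not, since \eqref{const_2} and \eqref{const_3} depend solely on $\mathbf{X}$) and that the cumulative structure of \eqref{const_4} genuinely inherits the batchwise minimization rather than merely the per-batch one.
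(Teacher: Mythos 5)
Your proposal is correct and follows essentially the same route as the paper: both invoke Proposition \ref{proposition_1} to replace each $t_{n}^{\text{up}}$ by its per-batch minimum $t_{n,\min}^{\text{up}}$, eliminate $\mathbf{Y}$ and its constraints \eqref{const_6}, \eqref{const_7}, \eqref{const_8}, and then use the fact that $U(\mathbf{X})$ is independent of $\mathbf{Y}$ to conclude equivalence with $\mathcal{P}2$. Your write-up merely makes explicit (via the two-directional feasible-set argument and the per-batch decoupling of \eqref{const_6}) what the paper's proof compresses into ``solving $\mathcal{P}2$ obviously yields the optimal $U(\mathbf{X})$,'' so it is a more detailed rendering of the same argument rather than a different one.
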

\begin{subequations}
	\begin{equation}
		{\mathcal{P}2}:\ \mathop{\max}\limits_{{\bf{X}}}\ U\left({\bf{X}}\right)
	\end{equation}	
        \begin{equation}
		{\rm{s.t.}} \ \eqref{const_1},\eqref{const_2},\eqref{const_3},\eqref{const_5},
	\end{equation}	
        \begin{equation}\label{const_9}
		x_{n,k}\sum\limits_{n'=1}^{n}\left(t_{n',\min}^{\text{up}} + t_{n'}^{\text{load}} + t_{n'}^{\text{comp}}\right) \le T\Delta\tau,\ \forall n\in{\mathcal{N}},\ \forall k\in{\mathcal{K}}.
	\end{equation}	
\end{subequations}
\begin{proof}
    From Proposition \ref{proposition_1}, for any given $\bf{X}$, the optimal spectrum bandwidth allocation from \eqref{optimal_bandwidth} results in the minimum data uploading time in \eqref{optimal_uploading}, which is independent of $\bf{Y}$. Therefore, after substituting \eqref{optimal_uploading} into $\mathcal{P}1$ and eliminating the constraints on $\bf{Y}$, we obtain $\mathcal{P}2$ that only depends on $\bf{X}$. Solving $\mathcal{P}2$ obviously yields the optimal $U\left({\bf{X}}\right)$. Therefore, solving the user scheduling decision from $\mathcal{P}2$, followed by determining the spectrum bandwidth allocation from \eqref{optimal_bandwidth}, preserves the optimality of the solution to $\mathcal{P}2$. This completes the proof. 
\end{proof}

Thereafter, we focus on solving $\mathcal{P}2$ in the rest of this paper.

\subsection{Problem Mapping}
$\mathcal{P}2$ is a combinatorial optimization problem, as it involves selecting a user-scheduling solution across batches from a discrete set of feasible user-to-batch assignments under constraints. To facilitate analysis, we equivalently reformulate $\mathcal{P}2$ into $\mathcal{P}3$ with the objective of minimizing the number of unserved users as follows.
\begin{subequations}
	\begin{equation}
		{\mathcal{P}3}:\ \mathop{\min}\limits_{{\bf{X}}}\ K-U\left({\bf{X}}\right)
	\end{equation}	
        \begin{equation}
		{\rm{s.t.}} \ \eqref{const_1},\eqref{const_2},\eqref{const_3},\eqref{const_5},\eqref{const_9}.
	\end{equation}	
\end{subequations}

$\mathcal{P}3$ is NP-hard since it can be mapped to a single machine batch scheduling problem to minimize the number of late jobs under latency constraints with a set-up time \cite{allahverdi2008survey}. 
Specifically, in the mapped problem, selected jobs are grouped into batches to be processed sequentially on a single machine, with a set-up time before processing at the beginning of each batch. Similarly, in $\mathcal{P}3$, scheduled users are batched and served by the edge server, with $t_{n,\min}^{\text{up}}+t_{n}^{\text{load}}$ being the set-up time. In fact, even in a simplified case of $\mathcal{P}3$, where $t_{n,\min}^{\text{up}}+t_{n}^{\text{load}}$ is constant and does not vary with $n$, the resultant problem has been proven to be NP-hard \cite{brucker1996single}. Therefore, the general form of $\mathcal{P}3$ and $\mathcal{P}2$ are NP-hard.

\subsection{Necessary Conditions of the Optimal Solution to $\mathcal{P}2$}
To narrow down the feasible user-scheduling solution space of $\mathcal{P}2$, we analyze the scheduling rules for users requesting the same model. The following theorem provides the necessary conditions for the optimal user scheduling ${\bf{X}}^{*}$ in $\mathcal{P}2$ within the batches of each model.

\begin{theorem}\label{proposition_2}
    ${\bf{X}}^{*}$ satisfies the following conditions.
    \begin{enumerate}[label=2.\arabic*] 
        \item\label{proposition_2_1} Users requesting the same AI model must be scheduled in the same or consecutive batches. Moreover, for a given AI model $i$, users in $\mathcal{K}_{i}$ must be scheduled in ascending order of $p_{k}=\frac{D_{k}}{\bar{R}_{k}}$.
        \item\label{proposition_2_2} For consecutive batches requiring the same model, all batches except the last one should admit the maximum number of users within the GPU memory constraint.  %
    \end{enumerate}
\end{theorem}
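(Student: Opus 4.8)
The plan is to prove both conditions by local interchange (exchange) arguments applied to an optimal schedule, exploiting two facts established earlier. By constraint \eqref{const_3} every nonempty batch serves exactly one model, so an optimal $\mathbf{X}^{*}$ induces a sequence of model-labeled batches that partitions into per-model \emph{runs}; and by Proposition \ref{proposition_1} the uploading cost of a batch equals $\frac{1}{B}\sum_{k}x_{n,k}p_{k}$ with $p_{k}=D_{k}/\bar{R}_{k}$, i.e. it is additive over the users assigned to that batch. I would also record the key observation that, under \eqref{const_9}, user $k$'s deadline reduces to requiring the cumulative time $\sum_{n'\le n}t_{n'}$ up to its batch $n$ to be at most $\bar{T}=T\Delta\tau$; since this cumulative time is nondecreasing in $n$, the whole schedule is feasible if and only if its makespan (the cumulative time of the last nonempty batch) is at most $\bar{T}$. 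Hence any modification that does not increase the makespan and does not reduce $U(\mathbf{X})$ preserves both feasibility and optimality, and every local move below is judged against this single criterion.

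First I would establish the consecutiveness part of \ref{proposition_2_1}. Assuming for contradiction that the batches of some model $i$ form at least two runs, I would pick two of them separated by a block $G$ of other models and relocate one run so that the two become adjacent, leaving all batch contents unchanged. Because total uploading and total computing depend only on the (unchanged) contents of each batch, only the loading terms move. Writing $L(a\mid b)$ for the size of the parameter blocks of model $a$ absent from model $b$ and $V(\cdot):=V_{1}(\cdot)+V_{2}(\cdot)$, the reordering removes the reload $V(L(i\mid g))$ incurred when the run resumed and changes a single seam term from $V(L(F\mid i))$ to $V(L(F\mid g))$, where $g$ and $F$ are the models bordering the relocated block. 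The net change in makespan is therefore governed by whether $V(L(F\mid g))\le V(L(i\mid g))+V(L(F\mid i))$.

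The crux, and the step I expect to be the main obstacle, is certifying that this seam term cannot increase the makespan. I would first prove the triangle inequality $L(F\mid g)\le L(i\mid g)+L(F\mid i)$ directly from the set-difference definition of $L$ (each block of $F$ missing from $g$ is either missing from $i$ or present in $i$ but missing from $g$), and then combine it with monotonicity and subadditivity of $V_{1}+V_{2}$, together with $V(0)=0$, to obtain the displayed inequality. This renders the makespan nonincreasing, so by the reduction above the merged schedule remains feasible and optimal; iterating the merge collapses each model into a single run. The delicate points are that the relocation must touch only one seam, so that the nonlinearity of $V_{1},V_{2}$ is controlled by a single subadditivity step rather than accumulating across batches, and that subadditivity of the loading-time model is precisely the structural property the argument genuinely requires; I would therefore isolate both the triangle inequality and the subadditivity as a standalone lemma.

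For the ascending-$p_{k}$ part of \ref{proposition_2_1} I would argue inside a single (now consecutive) run of model $i$: if two assigned users $k,k'$ with $p_{k}>p_{k'}$ occur with $k$ in an earlier batch than $k'$, swapping them leaves every batch size unchanged (so loading and computing are untouched) and leaves the run's total uploading $\frac{1}{B}\sum p_{k}$ unchanged, so the makespan is invariant and feasibility is preserved; repeating such swaps sorts the run into ascending order without altering $U$. Finally, for \ref{proposition_2_2} I would show that filling every non-final batch to the GPU-memory limit \eqref{const_2} is without loss: moving a user forward into an unfilled earlier batch of the same run preserves total uploading and total forward-propagation work but reduces the number of batches whenever a batch is emptied, each elimination removing one per-batch overhead $\beta_{i}$ from $t^{\text{comp}}$ and thus strictly decreasing the makespan. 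Since no such move increases the makespan or decreases $U$, an optimal schedule may be assumed to satisfy \ref{proposition_2_2}. Throughout, the recurring verification is only that each local move keeps the makespan at most $\bar{T}$, which by the cumulative-time monotonicity noted at the outset automatically implies every per-user deadline in \eqref{const_9} is met.
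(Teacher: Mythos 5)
Your proposal is correct and takes essentially the same route as the paper: consecutiveness is proved by relocating a detached run of model $i$ next to its twin and bounding the single changed seam with the set-difference triangle inequality $\mathcal{J}_{i_{3}}\setminus\mathcal{J}_{i_{2}}\subseteq\left(\mathcal{J}_{i}\setminus\mathcal{J}_{i_{2}}\right)\cup\left(\mathcal{J}_{i_{3}}\setminus\mathcal{J}_{i}\right)$, the $p_{k}$-ordering by an exchange argument inside a run, and condition 2.2 by merging under-filled batches so that each eliminated batch saves the per-batch constant $\beta_{i}$ — exactly the structure of Appendices B and C. The only substantive differences are refinements on your side: you make explicit the subadditivity (with $V(0)=0$) of $V_{1}+V_{2}$ that the paper silently assumes when it passes from ``fewer parameter blocks loaded'' to ``shorter completion time,'' whereas the paper's ordering exchange draws replacement users from all of $\mathcal{K}_{i}$ not yet scheduled (not merely permuting already-scheduled users, as your pairwise swaps do), which additionally yields the lowest-$p_{k}$-prefix selection property that the dynamic program later relies on.
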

\begin{proof}
    The proofs of Theorems \ref{proposition_2_1} and \ref{proposition_2_2} are presented in Appendices \myref{proof:proposition_2_1} and \myref{proof:proposition_2_2}, respectively.
\end{proof}
\begin{remark}
    Theorem \ref{proposition_2_1} establishes a user scheduling rule that ensures users with the shortest uploading time given unit spectrum bandwidth allocation are scheduled first within the batches of any given AI model, thereby significantly simplifying the problem. 
\end{remark}

For ease of presentation, we reorder the indices of users in $\mathcal{K}_{i}$ in ascending order of $p_{k}$ in subsequent sections.

\section{Special Case: Clustered AI Models with Backbone Sharing}
Although we have derived the necessary conditions for the optimal user scheduling within batches of a single model, $\mathcal{P}2$ remains challenging to solve. Considering parameter sharing across models, user scheduling across batches determines the model loaded in each batch and further affects the model loading time for the next batch. Thus, $t_{n}^{\text{load}}$ in $\mathcal{P}2$ varies across batches and depends on model loading order, rendering conventional approaches for combinatorial optimization problems impractical. Specifically, with this dependency, standard DP algorithms fall short due to the need for exhaustive searches over all feasible model loading orders. Additionally, although branch-and-bound algorithms can produce the optimal solution, their worst-case time complexity is exponential with the size of the model library.
In this section, to develop a polynomial-time algorithm for real-time solution finding, we focus on an important special case called the ``backbone-sharing'' case, where the optimal user scheduling strategy can be obtained very efficiently.


We first introduce the special case. Recalling that a backbone is called a shared backbone if it appears in at least two models, we formally define the special case in Definition \ref{definition_1} and provide an illustrative example in Fig. \ref{fig:bls}.

\begin{definition}\label{definition_1}
\textbf{Clustered backbone sharing}: Consider a set of AI models that can be partitioned into disjoint clusters $\mathcal{M}=\left\{1,2\dots, M\right\}$, with $\mathcal{I}_{m}$ denoting the set of models in cluster $m$. Clustered backbone sharing occurs when each model $i\in \mathcal{I}_{m}$ consists of two parts: (1) a cluster-specific shared backbone $\mathcal{W}_{m}$ or a subset of its bottom layers, and (2) task-specific layers, which are unique to model $i$ and not shared with others. 
\end{definition}
\begin{figure}[t]
\centering
\includegraphics[width=0.4\textwidth]{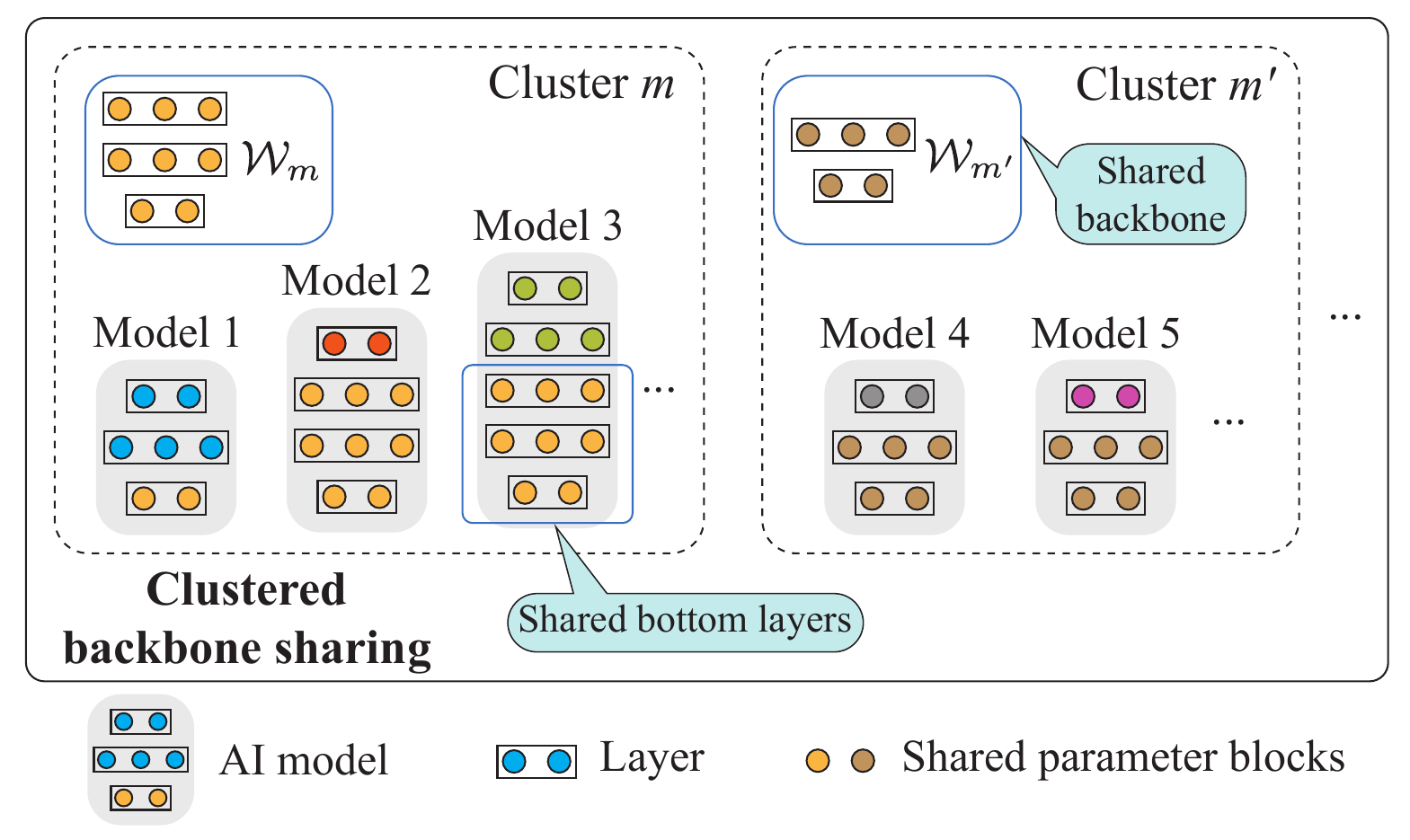}
\vspace{-3pt}
\caption{An illustrative example of backbone sharing within two model clusters, $m$ and $m'$, in the special case. Within each model cluster, models share a backbone or a subset of its bottom layers. In cluster $m$, the first layer of model 1 and the first three layers of models 2 and 3 come from the bottom layers of backbone $\mathcal{W}_{m}$. In cluster $m'$, models 4 and 5 share the entire backbone $\mathcal{W}_{m'}$.}
\vspace{-10pt}
\label{fig:bls}
\end{figure}
Note that different models in a cluster may have different numbers of shared layers, and backbone sharing across clusters is not considered, as shown in Fig. \ref{fig:bls}. 
For brevity, we use the \textit{backbone sharing (BS)} case to refer to the special case.
The BS case commonly exists in practice, as many AI models are fine-tuned from pre-trained models using bottom-layer freezing techniques \cite{10.5555/2969033.2969197,howard2018universal} or PEFT techniques~\cite{ding2023parameter}. In the first approach, the bottom layers of a pre-trained AI model are frozen while only the top layers are updated towards downstream tasks. In the second approach (e.g., adapter tuning~\cite{he-etal-2021-effectiveness}, selective PEFT~\cite{fu2023effectiveness}, and LoRA~\cite{hu2021lora}), one can freeze the entire pre-trained model and update only a small set of parameters during fine-tuning. Moreover, in PEFT settings, the entire shared pre-trained model and all fine-tuned parameters can be mathematically treated as a single shared bottom layer and a single
task-specific layer, respectively, according to Definition~\ref{definition_1}. The rationale of these two approaches comes from the fact that the general knowledge embedded in pre-trained models can be reused in downstream tasks\footnote{For example, the bottom layers in CNNs typically extract low-level features (e.g., edges, corners, and basic shapes), which are transferable and reusable for various downstream tasks.}.


Beyond Theorem \ref{proposition_2}, we provide an additional necessary condition for the optimal user scheduling solution $\tilde{{\bf{X}}}^{*}$ to $\mathcal{P}2$ in the BS case.

\begin{theorem}\label{proposition_3}
    In $\tilde{{\bf{X}}}^{*}$, users requesting models from the same cluster are scheduled in the same or consecutive batches.
\end{theorem}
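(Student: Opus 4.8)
The plan is to prove the statement by a batch-reordering (exchange) argument resting on one structural observation about $\mathcal{P}2$ in the BS case. The first step is to reduce feasibility to a single order-independent budget. Since all users share the common deadline $\bar{T}=T\Delta\tau$ and each batch time $t_{n}=t_{n,\min}^{\text{up}}+t_{n}^{\text{load}}+t_{n}^{\text{comp}}$ is nonnegative, constraint \eqref{const_9} is equivalent to requiring that every batch carrying a scheduled user completes by $\bar{T}$. After discarding empty batches (which cost no time and serve no one), all retained batches must complete by $\bar{T}$; because prefix sums of nonnegative terms are monotone, this holds \emph{if and only if} the total processing time $\sum_{n}t_{n}$ of the retained batches is at most $\bar{T}$, a condition that does not depend on the order of the batches. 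Moreover, for a fixed assignment of users to batches, $t_{n,\min}^{\text{up}}$ and $t_{n}^{\text{comp}}$ depend only on the contents of batch $n$ and are invariant under reordering; only the loading terms $t_{n}^{\text{load}}$ in \eqref{eq_partial_load} depend on the ordering, through the model occupying the preceding batch.

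Given this reduction, it suffices to show that regrouping the batches so that each cluster occupies consecutive positions does not increase $\sum_{n}t_{n}^{\text{load}}$, which I would establish with a block-move lemma. Suppose in $\tilde{{\bf{X}}}^{*}$ some cluster $m$ is fragmented, i.e., two maximal runs $R_{1}$ and $R_{2}$ of cluster-$m$ batches are separated by a nonempty block $G$ of batches from other clusters. Consider the rearranged order $R_{1}R_{2}G$ obtained by moving $G$ to just after $R_{2}$, leaving the prefix, the suffix, and the internal order of $R_{1}$, $R_{2}$, $G$ untouched. Comparing the loading sizes $\bar{S}_{n}$ batch by batch: the first batch of $R_{2}$ now follows a cluster-$m$ batch (the last batch of $R_{1}$) instead of a cross-cluster batch, so the shared portion of $\mathcal{W}_{m}$ is already resident and its $\bar{S}$ strictly decreases; the batch immediately following $G$ (the first suffix batch, which is non-$m$ since $R_{2}$ is maximal) can only acquire a same-cluster predecessor, so its $\bar{S}$ weakly decreases; the first batch of $G$ keeps a cluster-$m$ predecessor with which it shares nothing, so its $\bar{S}$ is unchanged; and all remaining batches retain their predecessors and hence their $\bar{S}$. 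Since $V_{1}(\cdot)$ and $V_{2}(\cdot)$ are monotone increasing, every affected loading term weakly (and at least one strictly) decreases, so $\sum_{n}t_{n}^{\text{load}}$ strictly decreases.

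Iterating this move, each application strictly lowers $\sum_{n}t_{n}^{\text{load}}$, and since a fixed batch set admits only finitely many orderings, the process terminates at an ordering in which no cluster is fragmented, i.e., every cluster occupies consecutive batches. By the lemma, this grouped schedule has total time no larger than that of $\tilde{{\bf{X}}}^{*}$, hence at most $\bar{T}$, so it is feasible and serves exactly the same users. It is therefore also optimal and satisfies the claimed cluster-consecutiveness property, which proves the theorem; the within-cluster ordering prescribed by Theorem \ref{proposition_2} is compatible and can be imposed inside each cluster block by the same type of exchange.

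The main obstacle is feasibility under reordering: naively, moving a block of batches later could push its users past their deadlines. The resolution is precisely the first step's observation that, because the deadline is common to all users, the family of per-user completion-time constraints collapses into the single order-independent budget $\sum_{n}t_{n}\le\bar{T}$; once feasibility is decoupled from ordering, the problem reduces to the purely combinatorial task of minimizing total loading time, which grouping achieves. A secondary technical point I would treat carefully is that models within a cluster may use different subsets of the backbone $\mathcal{W}_{m}$, so the per-batch comparison of $\bar{S}_{n}$ must track exactly which bottom layers are resident in GPU memory; the monotonicity of $V_{1}$ and $V_{2}$ is what converts these size comparisons into the required loading-time comparisons.
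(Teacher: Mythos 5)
Your proof is correct, and its engine is the same as the paper's: exchange the block of foreign-cluster batches with the following run of cluster-$m$ batches, and observe that same-cluster adjacency enables backbone sharing, so every affected loading term in \eqref{eq_partial_load} weakly decreases while uploading and computing times, which depend only on batch contents, are unchanged. The difference is in the logical packaging, and yours is actually tighter. The paper performs a single adjacent swap and concludes by contradiction that ``more users could be served''; strictly speaking the swap yields only a \emph{weak} improvement in completion times, so no genuine contradiction with optimality arises --- what really follows is that \emph{some} optimal schedule is cluster-grouped. Your version delivers exactly that: by first collapsing the per-user deadline constraints \eqref{const_9} into the order-independent budget $\sum_{n} t_{n} \le \bar{T}$ (valid because the deadline is common to all users and batch times are nonnegative), you decouple feasibility from ordering, and then iterate the block move to construct an explicitly feasible, equally good, fully grouped schedule. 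Two small points to tighten: (i) your termination argument leans on a \emph{strict} decrease of total loading time, which can fail in the degenerate case where the last model of $R_{1}$ and the first model of $R_{2}$ share zero bottom layers of $\mathcal{W}_{m}$; a cleaner potential function is the total number of maximal same-cluster runs, which each move strictly decreases (it merges $R_{1}$ with $R_{2}$, and moving $G$ as a block can only merge, never split, runs of other clusters). (ii) For the same reason, the claimed strict decrease of $\bar{S}$ at the first batch of $R_{2}$ should be stated as weak; weak decrease suffices once termination is handled as in (i).
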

\begin{proof}
    The proof is provided in Appendix \myref{proof:proposition_3}.
\end{proof}



With Theorem \ref{proposition_3}, we have the following corollary.
\begin{corollary}\label{proposition_3_3}
    \textbf{The optimal model loading strategy:} The edge server schedules the model loading order across clusters in ascending order of cluster index $m$. For each cluster $m$, it schedules the model loading order for models in $\mathcal{I}_{m}$ in ascending order of $l_{i}$, where $i\in\mathcal{I}_{m}$. 
\end{corollary}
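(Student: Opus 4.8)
The plan is to leverage Theorem~\ref{proposition_3} to reduce the combinatorial question to an ordering problem, and then settle that ordering by an exchange argument built on the nested structure of the shared backbones. First I would invoke Theorem~\ref{proposition_3} to conclude that, in $\tilde{\bf X}^{*}$, the models of each cluster occupy a contiguous block of batches; together with Theorem~\ref{proposition_2_1}, this leaves only two degrees of freedom: (i) the order of the cluster blocks, and (ii) the order in which the distinct models inside a block are loaded. A useful preliminary observation is that between two consecutive batches requesting the \emph{same} model every parameter block satisfies $\rho_{n,j}=\rho_{n-1,j}$, so $\bar S_{n}=0$ and no loading occurs; hence loading time is incurred only at the first batch of each distinct model, and $t_{n}^{\text{load}}$ in \eqref{eq_partial_load} is fully determined by the loading order through the eviction indicators $\rho_{n,j}$.

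For the inter-cluster order, I would use the fact that backbones are not shared across clusters (Definition~\ref{definition_1}). Whatever model is resident in GPU memory at the end of one cluster's block shares no backbone with the first model of the next cluster, so that first model reloads its entire backbone portion regardless of which cluster precedes it. The loading time of each block is therefore invariant to the relative order of the blocks, and we may fix them in ascending order of the cluster index $m$ without loss of optimality.

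The heart of the argument is the intra-cluster order, where I would exploit that the shared portions are \emph{nested}: model $i$ carries the bottom $l_{i}$ layers of $\mathcal{W}_{m}$, and these bottom-layer sets are totally ordered by inclusion. I would then run an adjacent-transposition argument. Suppose the loading order is not ascending in $l_{i}$, so there are two consecutively loaded models $a,b$ (with $a$ before $b$) and $l_{a}>l_{b}$. Swapping them so the smaller-backbone model is loaded first makes the resident backbone grow monotonically, so each backbone layer is loaded exactly once; by contrast, the inverted order evicts the layers in $(l_{b},l_{a}]$ when $b$ is loaded and forces their reloading as soon as a later model requires them again. Since $V_{1}(\cdot)$ and $V_{2}(\cdot)$ are monotone increasing, eliminating this redundant reload does not increase $t_{n}^{\text{load}}$ at any batch, so the cumulative latency on the left-hand side of \eqref{const_9} can only decrease and no scheduled user is pushed past its deadline; thus the served-user count cannot drop. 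Iterating the swaps sorts each block into ascending order of $l_{i}$, which, combined with the inter-cluster result, yields the claimed strategy.

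The step I expect to be the main obstacle is making this exchange rigorous in the presence of eviction and of the nonlinear, per-batch loading functions. Concretely, after a swap I must re-derive $\rho_{n,j}$ at the two affected transitions \emph{and} at the transition into the following model, and verify that the \emph{cumulative} loading time up to every batch is weakly reduced, not merely the total, so that feasibility in \eqref{const_9} is preserved for all users simultaneously. Carrying out this bookkeeping carefully, using the inclusion-ordering of the backbones to control exactly which layers are reloaded under the inverted order, is the delicate part of the proof.
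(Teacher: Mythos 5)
Your use of Theorem \ref{proposition_3} and your inter-cluster argument (loading time is invariant to cluster order because nothing is shared across clusters) match the paper exactly. The gap is in the heart of your intra-cluster proof: the claim that swapping an adjacent inversion ($a$ before $b$ with $l_a>l_b$) ``does not increase $t_{n}^{\text{load}}$ at any batch,'' so that the cumulative latency in \eqref{const_9} can only decrease. This is false, because whether such a swap helps depends on the models \emph{surrounding} the pair, not on the pair alone. Concretely, take one cluster with four models $p,a,b,q$ having $l_p=4$, $l_a=3$, $l_b=1$, $l_q=2$, loaded in the order $p,a,b,q$: the reused backbone depths at the three transitions are $\min\{4,3\}+\min\{3,1\}+\min\{1,2\}=5$. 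After your swap the order is $p,b,a,q$, which reuses only $\min\{4,1\}+\min\{1,3\}+\min\{3,2\}=4$: the batch loading model $a$ must now reload backbone layers $2$ and $3$ (they were resident in the unswapped order), so its $t_{n}^{\text{load}}$ strictly increases, the completion time of the last batch strictly increases, and users can be pushed past the deadline. Hence your bubble-sort invariant fails and ``iterating the swaps'' is unjustified: intermediate schedules can be strictly worse than the one you started from. Your justification (resident backbone grows monotonically, each layer loaded exactly once) is a property of the \emph{final} fully sorted order, not of a single transposition, so it cannot carry the induction.

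This is precisely why the paper's proof in Appendix E does not argue that each swap is an improvement. It first reduces optimality to maximizing the total saved size $\dot{S}_{m}$ over the cluster's block---legitimate because all users share the single deadline $T\Delta\tau$, so only the completion time of the last non-empty batch matters; this also means the prefix-feasibility bookkeeping you flag as the ``main obstacle'' is moot---and then performs the exchange analysis \emph{including both boundary models}: with preceding model $i$ and following model $i_{3}$, it shows $\dot{S}_{m,1}-\dot{S}_{m,2}\ge 0$ when $l_{i}\le l_{i_{3}}$ and the reverse inequality when $l_{i}\ge l_{i_{3}}$, and only then anchors the ascending order using $l_{0}=0$ at the head of the block. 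If you want to avoid that case analysis, a correct route is global rather than local: in ascending order of $l_i$ every backbone layer needed in the block is loaded exactly once, while any order must load each such layer at least once (nothing from the cluster is resident when its block begins, by Definition \ref{definition_1} and Theorem \ref{proposition_3}); ascending order therefore attains a per-layer lower bound on loading, hence minimizes total loading time. Either repair works; your proposal as written does not.
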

\begin{proof}  
    The proof is provided in Appendix \myref{proof:proposition_3_3}.
\end{proof}
\begin{remark}
    Corollary \ref{proposition_3_3} guarantees parameter blocks are never redundantly loaded in the BS case. It enables us to determine the model loading order before scheduling users for each model, thus significantly simplifying the problem.
\end{remark}
Thereafter, we reorder the models in $\mathcal{I}_{m}$ by ascending order of $l_{i}$ for ease of presentation. In the next subsection, the user scheduling strategy will be found with the above strategy. 


\subsection{The Solution Approach}
With the optimal model loading order from Corollary \ref{proposition_3_3}, this subsection will develop a DP-based algorithm to compute the maximum task throughput and design a recursive algorithm to obtain the optimal user-scheduling decision $\tilde{{\bf{X}}}^{*}$ in the BS case. 
The details are as follows.
\subsubsection{Number of served users of a model} Assuming that model $i$ is loaded into the GPU memory after model $\hat{i}$, based on Theorem \ref{proposition_2} and Corollary \ref{proposition_3_3}, we first determine the maximum number of users that can be served by model $i$ in cluster $m$ within $\hat{\tau}$ time slots, which is denoted as
\begin{equation}\label{eq_model_user}
\begin{aligned}
    &q_{m}\left(\hat{i},i,\hat{\tau}\right) = \\
    &\begin{cases}
    \begin{aligned}
    &\max \left\{\hat{k}\left|
            \begin{aligned}
                \Phi_{m}\left(\hat{i},i,\hat{k}\right)
                \le \hat{\tau}\Delta\tau
            \end{aligned}
            \right.
            \right\},\text{if}\ \hat{\tau}\in\left[1,T\right],i\in\mathcal{I}_{m},\\
            &0,\ \text{if}\ i = 0\text{ or } \hat{\tau}=0,
            \end{aligned}
    \end{cases}
\end{aligned}
\end{equation}
where $\Phi_{m}\left(\hat{i},i,\hat{k}\right)$ is the total latency required to perform inference for $\hat{k}$ users with model $i$ loaded after model $\hat{i}$ within model cluster $m$, and is given by
\begin{equation}\label{eq_Phi}
\begin{aligned}
    &\Phi_{m}\left(\hat{i},i,\hat{k}\right)= 
    \sum\limits_{n=1}^{N'_{i}\left(\hat{k}\right)}p_{z\left(n,i,\hat{k}\right)}
    +V\left(\sum\limits_{l=l_{\hat{i}}+1}^{L_{i}}S'_{i,l}\right)\\
    &+\sum\limits_{n=1}^{N'_{i}\left(\hat{k}\right)}e\left(n,i,\hat{k}\right), \hat{k}\in\left[0,\left|\mathcal{K}_{i}\right|\right], \hat{i}\in\left[0,i-1\right],  \hat{i}\in\left\{0\right\}\cup\mathcal{I}_{m}.
\end{aligned}
\end{equation}
In \eqref{eq_Phi}, the first term $\sum\limits_{n=1}^{N'_{i}\left(\hat{k}\right)}p_{z\left(n,i,\hat{k}\right)}$ is the data uploading time for the first $\hat{k}$ users in $\mathcal{K}_{i}$ across multiple batches. Here, $z\left(n,i,\hat{k}\right) = \min\left\{nb_{i,\max},\hat{k}\right\}$ is the index of the user with the highest uploading time in batch $n$, where $b_{i,\max}$ is the maximum allowed batch size for model $i$, determined by the GPU memory capacity and the peak memory required by model $i$.  
$N'_{i}\left(\hat{k}\right)=\lceil\frac{\hat{k}}{b_{i,\max}}\rceil$ is the total number of batches required to serve $\hat{k}$ users with model $i$.
Furthermore, the second term $V\left(\sum\limits_{l=l_{\hat{i}}+1}^{L_{i}}S'_{i,l}\right)=V_{1}\left(\sum\limits_{l=l_{\hat{i}}+1}^{L_{i}}S'_{i,l}\right)+V_{2}\left(\sum\limits_{l=l_{\hat{i}}+1}^{L_{i}}S'_{i,l}\right)$ is the total model loading time for the top $L_{i}-l_{\hat{i}}$ layers of model $i$, which are not shared with model $\hat{i}$. It includes loading time from the disk to system memory and from system memory to GPU memory. Here, $S'_{i,l}$ is the size of the $l$-th layer of model $i$ and $l_{0}=0$. 
Moreover, the last term $\sum\limits_{n=1}^{N'_{i}\left(\hat{k}\right)}e\left(n,i,\hat{k}\right)$ is the total inference computing time for $\hat{k}$ users served in consecutive batches with model $i$. Here, based on Theorem \ref{proposition_2}, $e\left(n,i,\hat{k}\right)$ is given as
\begin{equation}
\begin{aligned}
    e\left(n,i,\hat{k}\right) = \mu_{i}\min\left\{b_{i,\max},\hat{k}-\left(n-1\right)b_{i,\max}\right\}+\beta_{i},
\end{aligned}
\end{equation}
where $\min\left\{b_{i,\max},\hat{k}-\left(n-1\right)b_{i,\max}\right\}$ is the batch size of batch $n$, determined by $b_{i,\max}$ and the remaining number of users after scheduling the first $n-1$ batches.

\subsubsection{Number of served users in a model cluster}We use $g_{m}\left(i,\tau_{2}\right)$ to represent the maximum number of users that can be served with the first $i$ models in $ \mathcal{I}_{m}$ within the first $\tau_{2}$ time slots of a time period consisting of $\tilde{\tau}$ time slots in total. Based on Theorem \ref{proposition_2} and Corollary \ref{proposition_3_3}, $g_{m}\left(i,\tau_{2}\right)$ is expressed as
\begin{equation}\label{eq_incluster_user}
    g_{m}\left(i,\tau_{2}\right) = 
    \begin{cases}
                    \begin{aligned}
                        &\Gamma_{m}\left(i,\tau_{2}\right), \ \text{if}\ \tau_{2}\in\left[1,\tilde{\tau}\right] ,\  i\in\mathcal{I}_{m}\\
                        &0,\ \text{if}\ i = 0\text{ or }\tau_{2} = 0,
                    \end{aligned}
		\end{cases}
\end{equation}
where 
$\Gamma_{m}\left(i,\tau_{2}\right)$ is provided in \eqref{eq_gamma} at the bottom of next page. 

\begin{figure*}[!b]
\begin{equation}\label{eq_gamma}
    \Gamma_{m}\left(i,\tau_{2}\right)=\max \ \left\{
                        \begin{aligned}
                            &\gamma_{m}\left(i,\tau_{2}\right)=\mathop{\max}\limits_{
                            \hat{i}\in\left(0,i-1\right],\hat{i}\in\mathcal{I}_{m},\hat{\tau}\in\left[1,\tau_{2}\right)
                            }\ \left\{
                            \begin{aligned}
                            &g_{m}\left(\hat{i},\tau_{2}-\hat{\tau}\right)\\
                            &+q_{m}\left(\hat{i},i,\hat{\tau}\right)
                            \end{aligned}
                            \ \middle| \
                            \begin{aligned}
                                &g_{m}\left(\hat{i},\tau_{2}-\hat{\tau}\right)>0,\\ &g_{m}\left(\hat{i},\tau_{2}-\hat{\tau}\right) \ne g_{m}\left(\hat{i}-1,\tau_{2}-\hat{\tau}\right)
                            \end{aligned}
                            \right\},\\
                            &q_{m}\left(\hat{i}=0,i,\hat{\tau}=\tau_{2}\right),\ g_{m}\left(i-1,\tau_{2}\right)
                        \end{aligned}
                        \right\}
\end{equation}
\end{figure*}

In $\Gamma_{m}\left(i,\tau_{2}\right)$, three terms correspond to three cases considered for model $i$ in cluster $m$. 
For the first term $\gamma_{m}\left(i,\tau_{2}\right)$, we consider that model $\hat{i}$ is the preceding model of model $i$, and model $i$ is allocated to $\hat{\tau}$ time slots to serve $q_{m}\left(\hat{i},i,\hat{\tau}\right)$ users, where $g_{m}\left(\hat{i},\tau_{2}-\hat{\tau}\right)$ is the number of users served by the first $\hat{i}$ models in cluster $m$ within $\tau_{2}-\hat{\tau}$ time slots. 
For the second term $q_{m}\left(\hat{i}=0,i,\hat{\tau}=\tau_{2}\right)$, we consider that model $i$ does not have a preceding model (i.e., $\hat{i}=0$), and all $\tau_{2}$ time slots are allocated to model $i$ (i.e., $\hat{\tau}=\tau_{2}$).
For the third term $g_{m}\left(\hat{i}=i-1,\tau_{2}\right)$, we consider that there is no need to load model $i$, leading to $g_{m}\left(i,\tau_{2}\right)=g_{m}\left(i-1,\tau_{2}\right)$, as the number of served users remains the same.


\subsubsection{Number of served users across all model clusters}We use $f\left(m,\tau_{1}\right)$ to represent the maximum number of users that can be served with the first $m$ model clusters within the first $\tau_{1}$ time slots. Based on Theorem \ref{proposition_3} and Corollary \ref{proposition_3_3}, $f\left(m,\tau_{1}\right)$ can be given as 
\begin{equation}\label{eq_cluster_user}
\begin{aligned} 
    &f\left(m,\tau_{1}\right) = \\
    &\begin{cases}
    \begin{aligned}
    &\mathop{\max}\limits_{\tilde{\tau}\in \left[0,\tau_{1}\right]}\ \left\{
    \begin{aligned}
        &f\left(m-1,\tau_{1}-\tilde{\tau}\right) \\
        &+ g_{m}\left(I_{m},\tilde{\tau}\right)
    \end{aligned}\right\}, \text{if}\ \tau_{1}\in\left[1,T\right], m\in\mathcal{M},\\
    &0,\ \text{if}\ m=0\text{ or } \tau_{1} = 0,
    \end{aligned}
    \end{cases}
\end{aligned}
\end{equation}  
where $g_{m}\left(I_{m},\tilde{\tau}\right)$, as recalled, represents the maximum number of users that can be served by the first $I_{m}$ models in cluster $m$ in multiple consecutive batches within $\tilde{\tau}$ time slots.

After traversing all clusters and time slots in $f\left(m,\tau_{1}\right)$, the maximum number of served users can be determined by
\begin{equation}
    U\left(\tilde{{\bf{X}}}^{*}\right) = f\left(M,T\right),
\end{equation}
where $\tilde{{\bf{X}}}^{*}=\bigcup\limits_{n=1}^{N}\tilde{{\bf{X}}}^{*}_{n}$ is the determined user scheduling in the BS case, with $\tilde{x}^{*}_{n,k}\in\tilde{{\bf{X}}}^{*}_{n}$, which will be elaborated next. 

\subsubsection{Determination of $\tilde{{\bf{X}}}^{*}$} The user scheduling decision corresponding to $U\left(\tilde{{\bf{X}}}^{*}\right)$ can be obtained by Algorithm \ref{algorithm_recursive}. In the outer loop (Line \ref{line:rec_while_1_s} to \ref{line:rec_while_1_e}) over cluster $m$, we focus on identifying the maximum number of users served by models in cluster $m$, represented by $g_{m}\left(I_{m},\tilde{\tau}^{*}\right)$. We determine the corresponding assigned time slots $\tilde{\tau}^{*}$ for cluster $m$ in Line \ref{line:rec_6} within the remaining $\tau_{1}$ time slots by traversing the clusters in reverse order. In the inner loop (Line \ref{line:rec_while_2_s} to \ref{line:rec_while_2_e}) over model $i$ in cluster $m$, we first identify the optimal preceding model $\hat{i}^{*}$ for model $i$ in cluster $m$ along with the corresponding assigned model slots $\hat{\tau}^{*}$ for model $i$ in Line \ref{line:rec_12} in the reverse order of models in $\mathcal{I}_{m}$. Then, we compute the maximum number of users, $\hat{k}^{*}$, that can be served using model $i$ within $\hat{\tau}^{*}$ time slots, assuming model $i$ is loaded after $\hat{i}^{*}$ in Line \ref{line:rec_13}. Next, we calculate the number of batches for serving $\hat{k}^{*}$ users with model $i$ in Line \ref{line:rec_n'}. Finally, we schedule the first $\hat{k}^{*}$ users in $K_{i}$ across multiple batches and insert these batches at the start of $\tilde{{\bf{X}}}^{*}_{N}$ from Line \ref{line:rec_n_s} to \ref{line:rec_n_e}. Note that there may be more than one user scheduling decision that can achieve the maximum throughput, and Algorithm \ref{algorithm_recursive} produces one of such solutions by starting from the maximum task throughput and identifying the feasible maximum number of served users in each loop.

\begin{algorithm}[!t]
	\caption{Recursive Algorithm for Determining User Scheduling} 
	\label{algorithm_recursive}
	\LinesNumbered
	\KwIn{$q_{m}\left(\hat{i},i,\hat{\tau}\right)$, $g_{m}\left(i,\tau_{2}\right)$, and $f\left(m,\tau_{1}\right)$ calculated in Algorithm \ref{algorithm_DP}. $\mathcal{I}_{m}$ and $\mathcal{K}$.}
	\KwOut{$\tilde{{\bf{X}}}^{*}$.} 
	{\bf Initialize:} $m = M$, $\tau_{1}=T$, $N=0$, $\tilde{{\bf{X}}}^{*}_{N}=\emptyset$.\\
        \While{$m>0$}
        {\label{line:rec_while_1_s}
            \If{$\tau_{1}\le 0$}
            {
                \textbf{Break}.\\
            }
            $\tilde{\tau}^{*}=\mathop{\arg\max}\limits_{\tilde{\tau} \in \left[0,\tau_{1}\right]}\left\{f\left(m-1,\tau_{1}-\tilde{\tau}\right) + g_{m}\left(I_{m},\tilde{\tau}\right)\right\}$.\label{line:rec_6}\\
            $i=I_{m}$ and $\tau_{2}=\tilde{\tau}^{*}$.\\
            \While{$i>0$ }
            {\label{line:rec_while_2_s}
                \If{$\tau_{2}\le0$}
                {
                    \textbf{Break}.\\
                }
                $\left\{\hat{i}^{*},\hat{\tau}^{*}\right\}=\mathop{\arg\max}\limits_{\hat{i},\hat{\tau}}\ 
                \Gamma_{m}\left(i,\tau_{2}\right)$.\label{line:rec_12}\\
                $\hat{k}^{*} = q_{m}\left(\hat{i}^{*},i,\hat{\tau}^{*}\right)$. \label{line:rec_13}\\
                $N'_{i}\left(\hat{k}^{*}\right)=\lceil \frac{\hat{k}^{*}}{b_{i,\max}}\rceil$.\label{line:rec_n'}\\
                \If{$N\ne 0$}
                {\label{line:rec_n_s}
                    $n=N$.\\
                    \For{$1\le n \le N$}
                {
                    $\tilde{{\bf{X}}}^{*}_{n+N'_{i}\left(\hat{k}^{*}\right)} = \tilde{{\bf{X}}}^{*}_{n}$. 
                    $n = n - 1$.\\
                }
                }\label{line:rec_n_m1}
                \For{$1\le n \le N'_{i}\left(\hat{k}^{*}\right)$}
                {\label{line:rec_n_m2}
                    $\tilde{{\bf{X}}}^{*}_{n} =\left\{\tilde{x}^{*}_{n,k}\right\}$, where $k\in\mathcal{K}_{i}$ and $k\in \left\{
                        \begin{aligned}
                            &\left(n-1\right)b_{i,\max}+1,\dots,\left(n-1\right)b_{i,\max}\\
                            &+\min\left\{b_{i,\max},\hat{k}^{*}- \left(n-1\right)b_{i,\max}\right\}
                        \end{aligned}\right\}$.\\
                    $n = n + 1$.\\
                }\label{line:rec_n_e}
                $N=N+ N'_{i}\left(\hat{k}^{*}\right)$, $i = \hat{i}^{*}$, $\tau_{2} = \tau_{2}-\hat{\tau}^{*}$.\\
            }\label{line:rec_while_2_e}
            $m = m-1$, $\tau_{1} = \tau_{1}-\tilde{\tau}^{*}$.\\
        }\label{line:rec_while_1_e}
        $\tilde{{\bf{X}}}^{*}=\bigcup\limits_{n=1}^{N}\tilde{{\bf{X}}}^{*}_{n}$.\\
\end{algorithm}


\subsubsection{Algorithm outline and analysis} To solve $\mathcal{P}1$ in the BS case, we summarize the entire process in Algorithm \ref{algorithm_DP}, which calculates $\tilde{{\bf{X}}}^{*}$ from Algorithm \ref{algorithm_recursive} and derives the optimal bandwidth allocation $\tilde{y}^{*}_{n,k}\in\tilde{{\bf{Y}}}^{*}$ according to \eqref{optimal_bandwidth}. Furthermore, we establish the following theorem for Algorithm \ref{algorithm_DP}. 
\begin{algorithm}[!t]
	\caption{DP-based Algorithm} 
	\label{algorithm_DP}
	\LinesNumbered
	\KwIn{$\mathcal{K}$, $\mathcal{I}$, $\mathcal{M}$, $T$.}
	\KwOut{$\tilde{{\bf{X}}}^{*}$, $\tilde{{\bf{Y}}}^{*}$, $U\left(\tilde{{\bf{X}}}^{*}\right)$.} 
	{\bf Initialize:} 
    Set $q_{m}\left(\hat{i},i,\hat{\tau}\right)$, $g_{m}\left(i,\tau_{2}\right)$, $f\left(m,\tau_{1}\right)$ to 0. \\\label{line:dp_s}
        \For{$1\le m \le M$}
        {\label{line:DP_m1_s}
            \For{$0 \le i\le I_{m}$}
            {
                \For{$0\le\hat{\tau}\le T$}
                {
                    \For{$0\le \hat{i} \le i-1$}
                    {
                        Calculate $q_{m}\left(\hat{i},i,\hat{\tau}\right)$ from \eqref{eq_model_user}.\\
                    }
                }
            }
        }\label{line:DP_m1_e}
        \For{$1\le m \le M$}
        {\label{line:DP_m2_s}
            \For{$0\le i \le I_{m}$}
            {
                \For{$0\le\tau_{2}\le T$}
                {
                    {
                        {
                        Calculate $g_{m}\left(i,\tau_{2}\right)$ from \eqref{eq_incluster_user}.
                        }
                    }
                }
            }
        }\label{line:DP_m2_e}
	\For{$0\le m \le M$}
	{\label{line:DP_m3_s}
            \For{$0\le\tau_{1}\le T$}
            {
                {
                    Calculate $f\left(m,\tau_{1}\right)$ from \eqref{eq_cluster_user}.\\
                }
            }
        }\label{line:DP_m3_e}
        $U\left(\tilde{{\bf{X}}}^{*}\right) = f\left(M,T\right)$.\\
        With $q_{m}\left(\hat{i},i,\hat{\tau}\right)$, $g_{m}\left(i,\tau_{2}\right)$, and $f\left(m,\tau_{1}\right)$, calculate $\tilde{{\bf{X}}}^{*}$ from Algorithm \ref{algorithm_recursive}.\label{line:dp_x_e}\\
        Calculate $\tilde{{\bf{Y}}}^{*}$ using $\tilde{y}^{*}_{n,k}=\frac{\tilde{x}^{*}_{n,k}D_{k}}{B\bar{R}_{k}\sum\limits_{k'\in\mathcal{K}}\frac{\tilde{x}^{*}_{n,k'}D_{k'}}{B\bar{R}_{k'}}}$.\label{line:DP_y}\\
\end{algorithm}
\begin{theorem}\label{theorem_spec}
    The proposed Algorithm \ref{algorithm_DP} achieves the optimal solution to $\mathcal{P}1$ in the BS case with a polynomial-time computational complexity $O\left(MI^{2}K\right)$.
\end{theorem}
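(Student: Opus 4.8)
The plan is to prove the two assertions separately: that Algorithm~\ref{algorithm_DP} returns an optimal solution of $\mathcal{P}1$ in the BS case, and that it runs in $O(MI^2K)$ time. For optimality I would first collapse the original problem to a pure scheduling problem using the machinery already available. By Theorem~\ref{theorem_1}, solving $\mathcal{P}1$ is equivalent to solving $\mathcal{P}2$ and then fixing the bandwidth through the closed form \eqref{optimal_bandwidth}; since Line~\ref{line:DP_y} performs exactly the latter, it suffices to show that the schedule $\tilde{\mathbf{X}}^{*}$ produced by Algorithm~\ref{algorithm_recursive} is optimal for $\mathcal{P}2$. The enabling structural fact is that Theorem~\ref{proposition_2}, Theorem~\ref{proposition_3}, and Corollary~\ref{proposition_3_3} together pin down the shape of at least one optimal schedule: clusters are served in ascending index order, the models of a cluster in ascending order of $l_i$, the users of each model in ascending order of $p_k$, and every batch except the last of each model is filled to its capacity $b_{i,\max}$. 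Crucially, because the loading order is fixed by Corollary~\ref{proposition_3_3} and distinct clusters share no backbone (Definition~\ref{definition_1}), the per-batch loading time of model $i$ depends only on its immediate predecessor $\hat{i}$ and not on the remainder of the schedule; this decoupling is precisely what makes a layered DP legitimate.

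I would then establish optimality by a bottom-up induction over the three DP layers, verifying that each stored value equals the true optimum of its subproblem. At the base layer, I would show that $\Phi_m(\hat{i},i,\hat{k})$ in \eqref{eq_Phi} is exactly the cumulative latency to serve the lowest-$p_k$ $\hat{k}$ users of model $i$, loaded after $\hat{i}$, under the forced batch structure, so that $q_m(\hat{i},i,\hat{\tau})$ in \eqref{eq_model_user} is the genuine maximum number of such users fitting within a budget of $\hat{\tau}$ slots. For the in-cluster layer, I would argue that the three terms of $\Gamma_m$ in \eqref{eq_gamma} exhaustively enumerate the possibilities for model $i$ (it follows a genuinely loaded predecessor $\hat{i}$; it is the first loaded model of the cluster; or it is not scheduled), with the side condition $g_m(\hat{i},\tau_2-\hat{\tau})\neq g_m(\hat{i}-1,\tau_2-\hat{\tau})$ guaranteeing that $\hat{i}$ is the \emph{immediate} loaded predecessor, so the reused-backbone accounting in $\Phi_m$ (charging only the top $L_i-l_{\hat{i}}$ layers) is valid; hence \eqref{eq_incluster_user} is a correct Bellman recursion. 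For the cross-cluster layer, cluster independence makes served-user counts additive, so the budget split in \eqref{eq_cluster_user} is valid and $f(M,T)=U(\tilde{\mathbf{X}}^{*})$ equals the optimum of $\mathcal{P}2$. I would close this part by checking that the backtracking in Algorithm~\ref{algorithm_recursive} reconstructs a feasible schedule attaining this value.

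The main obstacle will be establishing the optimal substructure against the \emph{prefix} nature of the latency constraint \eqref{const_9}: a user admitted in batch $n$ requires the sum of all batch latencies up to and including $n$ to stay within $T\Delta\tau$, so feasibility is not a per-batch condition and the additive time budgeting is not obviously sound. I would resolve this by exploiting the sequential processing order, under which the cumulative completion time is monotone in the batch index: if each subproblem's partial schedule fits within its allotted slot budget and the budgets sum to at most $T$, then concatenating them keeps every batch's prefix sum within the deadline, and conversely any optimal feasible schedule induces a compatible budget split, so the DP discards no optimum. A secondary point to verify carefully is that fixing the loading order via Corollary~\ref{proposition_3_3} never forces a redundant reload, which is exactly where the BS structure is essential and where this argument would break down for the general case.

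For the complexity bound I would count the work in each layer of Algorithm~\ref{algorithm_DP} together with the backtracking. The key to a polynomial (rather than pseudo-polynomial) bound is that each of $q_m$, $g_m$, and $f$ is a non-decreasing step function of its time budget with at most $O(K)$ jumps, one per additional served user; hence it suffices to evaluate the tables at the $O(K)$ distinct breakpoints instead of all $T$ slots. Under this reparametrization the $\Phi_m$ and $q_m$ tables can be built incrementally in the user index at $O(1)$ amortized cost per entry, so the dominant $q_m$ layer costs $\sum_m I_m^2\cdot O(K)$; bounding each per-cluster model count by $I$ over the $M$ clusters gives $O(MI^2K)$. I would finally verify that the in-cluster and cross-cluster sweeps in \eqref{eq_incluster_user}–\eqref{eq_cluster_user} and the backtracking contribute no more than this layer under the same breakpoint reduction, so the overall running time is $O(MI^2K)$, which is polynomial. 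This completes the plan.
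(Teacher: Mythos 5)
Your optimality argument follows essentially the same route as the paper's proof: reduce $\mathcal{P}1$ to $\mathcal{P}2$ via Proposition \ref{proposition_1} and Theorem \ref{theorem_1}, invoke Theorem \ref{proposition_2}, Theorem \ref{proposition_3}, and Corollary \ref{proposition_3_3} to fix the loading order and per-model batch structure, and then verify that \eqref{eq_model_user}, \eqref{eq_incluster_user}, and \eqref{eq_cluster_user} are exact Bellman recursions traversed exhaustively by the DP. Your explicit treatment of the prefix-sum deadline constraint \eqref{const_9} (additivity of slot budgets under sequential processing) and of the immediate-predecessor side condition in \eqref{eq_gamma} is more careful than the paper's proof, which asserts these points without elaboration; this half of your plan is sound.

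The complexity half is where you genuinely depart from the paper, and it contains a gap. The paper simply counts nested loops, obtaining $O(TMI^{2}K + T^{2}MI^{2} + T^{2}M)$ for the tables plus $O(TMI^{2}+MIK)$ for backtracking, and states the bound $O(MI^{2}K)$ by treating the number of time slots $T$ as a constant; it makes no attempt at a $T$-free bound. You instead try to eliminate $T$ by evaluating $q_{m}$, $g_{m}$, $f$ only at their breakpoints. The monotone-step-function observation is correct, and restricting the split point $\hat{\tau}$ in \eqref{eq_gamma} to breakpoints of $q_{m}(\hat{i},i,\cdot)$ is valid by monotonicity of $g_{m}$. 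However, your claim that the in-cluster and cross-cluster sweeps "contribute no more than the $q_{m}$ layer" does not follow. The recursions \eqref{eq_incluster_user} and \eqref{eq_cluster_user} maximize over a split of the time budget between two step functions, i.e., they are max-plus convolutions; to obtain the resulting function $g_{m}(i,\cdot)$ as a step function one must, in the value domain, compute a min-plus convolution of the two ``minimum budget to serve $v$ users'' sequences, whose cost is the \emph{product}, not the sum, of the jump counts. This is $O\left(K_{m}\left|\mathcal{K}_{i}\right|\right)$ per pair $\left(\hat{i},i\right)$, where $K_{m}$ is the number of users requesting models in cluster $m$, hence $O\left(\sum_{m} I_{m}K_{m}^{2}\right)=O\left(IK^{2}\right)$ in the worst case for the in-cluster layer alone. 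Since $IK^{2}$ exceeds $MI^{2}K$ whenever $K \gg MI$, your argument yields a polynomial bound but not the claimed $O(MI^{2}K)$; closing it would require an additional structural property (e.g., concavity of the budget functions, which is not established) or falling back on the paper's loop count with $T$ treated as a constant.
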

\begin{proof}
    We begin by proving that Algorithm \ref{algorithm_DP} achieves the optimal solution in the BS case. First, based on Theorem \ref{proposition_2}, \eqref{eq_model_user} computes the maximum number of served users requesting model $i$ within the scheduled $\hat{\tau}$ time slots, and \eqref{eq_incluster_user} computes the maximum number of served users requesting models in $\mathcal{I}_{m}$. Second, by Corollary \ref{proposition_3_3}, \eqref{eq_cluster_user} computes the maximum number of served users requesting models in $\mathcal{I}$. Third, the proposed algorithm adopts a DP scheme, which guarantees optimality by traversing all models within each model cluster and across all model clusters using \eqref{eq_incluster_user} and \eqref{eq_cluster_user}, respectively. Therefore, $\tilde{{\bf{X}}}^{*}$, obtained in Line \ref{line:dp_x_e} of Algorithm \ref{algorithm_DP}, is the optimal user scheduling for $\mathcal{P}2$. Furthermore, based on Proposition \ref{proposition_1} and Theorem \ref{theorem_1}, $\tilde{{\bf{X}}}^{*}$ and $\tilde{{\bf{Y}}}^{*}$ are the optimal user scheduling and spectrum bandwidth allocation for $\mathcal{P}1$, respectively.
    
    Next, we show the time complexity of Algorithm \ref{algorithm_DP} is $O\left(MI^{2}K\right)$. First, the time complexity from Line \ref{line:DP_m1_s} to \ref{line:DP_m3_e} in Algorithm \ref{algorithm_DP} is $O\left(TMI^{2}K + T^{2}MI^{2}+ T^{2}M \right) = O\left(MI^{2}K\right)$, where $O\left(TMI^{2}K\right)$ is the time complexity of Line \ref{line:DP_m1_s} to \ref{line:DP_m1_e}, $O\left(T^{2}MI^{2}\right)$ accounts for Line \ref{line:DP_m2_s} to \ref{line:DP_m2_e}, and $O\left(T^{2}M\right)$ corresponds to Line \ref{line:DP_m3_s} to \ref{line:DP_m3_e} in Algorithm \ref{algorithm_DP}. 
    Second, the time complexity of Algorithm \ref{algorithm_recursive} is $O\left(MT+MI\left(IT+1+K\right)\right) = O\left(TMI^{2}+MIK\right)$, where $O\left(MT\right)$ corresponds to Line \ref{line:rec_6} in Algorithm \ref{algorithm_recursive},  $O\left(MI\left(IT+1+K\right)\right)$ accounts for Line \ref{line:rec_12}, Line \ref{line:rec_13}, and Line \ref{line:rec_n_s} to \ref{line:rec_n_e} in Algorithm \ref{algorithm_recursive}. Note that the time complexity of Line \ref{line:rec_13} in Algorithm \ref{algorithm_recursive} is $O\left(1\right)$, as the value of $q_{m}\left(\hat{i}^{*},i,\hat{\tau}^{*}\right)$ has been computed in Algorithm \ref{algorithm_DP} before executing Algorithm \ref{algorithm_recursive}. Third, the time complexity of computing $\tilde{\bf{Y}}$ in Line \ref{line:DP_y} of Algorithm \ref{algorithm_DP} is $O\left(K\right)$. 
    Therefore, the total time complexity of Algorithm \ref{algorithm_DP} is $O\left(MI^{2}K\right)$, which completes the proof.
\end{proof}

\begin{remark}\label{remark_independent}
    \textbf{Optimality in independent model loading}: Theorem \ref{theorem_spec} remains valid for multi-user edge inference systems when parameter sharing is not exploited. In such scenarios, all parameter blocks of an AI model are fully loaded during the model loading step of an inference batch, and Theorem \ref{proposition_3} and Corollary \ref{proposition_3_3} continue to ensure optimality. Therefore, while Algorithm \ref{algorithm_DP} is designed for the BS case, it also guarantees an optimal solution for independent model loading with the same computational time complexity.
\end{remark}
\section{General Case}
This section addresses the general case of $\mathcal{P}1$. In the general case, we do not impose any assumptions on the shared parameter blocks across models; these blocks can appear at arbitrary positions within a model rather than forming consecutive bottom layers of the backbone.
Although $\mathcal{P}1$ can still be decoupled into solving $\mathcal{P}2$ followed by calculating the optimal spectrum bandwidth allocation for any given user scheduling solution using \eqref{optimal_bandwidth} in the general case, Theorem \ref{proposition_3} and Corollary \ref{proposition_3_3}, which hold in the BS case, do not apply to $\mathcal{P}2$ in the general case. As $\mathcal{P}2$ is NP-hard, finding the optimal model loading order across batches is computationally intractable in polynomial time. Consequently, finding the optimal user scheduling in the general case is not computationally feasible. Therefore, we will develop a heuristic algorithm to obtain a sub-optimal user scheduling solution to $\mathcal{P}2$ in the general case and compute the corresponding optimal spectrum bandwidth allocation.

Our heuristic employs a greedy procedure. Specifically, we aim to select the next model with the maximum number of users served per unit time slot in each iteration. Given that model $i'$ is loaded before model $i$, the number of users served by model $i$ per unit time slot within $\tau$ time slots is given by 
\begin{equation}
\begin{aligned}
    &v\left(i',i,\tau\right) =\\
    &\begin{cases}
    \begin{aligned}
    &\max \left\{\frac{k}{\tau} \ \middle| \
            \begin{aligned}
                \phi\left(i',i,\tau\right)
                \le \tau\Delta\tau
            \end{aligned}
            \right\},  \text{if}\ \tau\in\left[1,T\right], i\in\mathcal{I},\\
    &0,\ \text{if} \ i=0\text{ or }\tau=0.    
\end{aligned}
    \end{cases}
\end{aligned}
\end{equation}

Analogous to \eqref{eq_Phi}, $\phi\left(i',i,\tau\right)$ is the total inference latency for $k$ users with model $i$ loaded after model $i'$ in the general case, which is
\begin{equation}
\begin{aligned}&\phi\left(i',i,\tau\right)=\sum\limits_{n=1}^{N'_{i}\left(k\right)}p_{z\left(n,i,k\right)}+V\left(\sum\limits_{j\in\mathcal{J}_{i}\setminus\mathcal{J}_{i'}}S_{j}\right)\\
    &+\sum\limits_{n=1}^{N'_{i}\left(k\right)}e\left(n,i,k\right), k\in\left[0,\left|\mathcal{K}_{i}\right|\right], i'\in\mathcal{I}\setminus\left\{i\right\},
\end{aligned}
\end{equation}
where $\sum\limits_{j\in\mathcal{J}_{i}\setminus\mathcal{J}_{i'}}S_{j}$ is the data size of parameter blocks of model $i$ but not of model $i'$, which need to be loaded from the disk to system memory and then to GPU memory in the general case. $\mathcal{J}_{i}$ is the set of parameter blocks of model $i$, with $\mathcal{J}_{0}=\emptyset$. 

The proposed algorithm is outlined in Algorithm \ref{algorithm_greedy}. In the first loop over $\tau'$, we find the next model $i^{*}$ and the corresponding assigned time slots $\tau^{*}$ by identifying the maximum number of users served per unit time slot in Line \ref{line:greedy_i_tau}. Next, we calculate the number of users served by model $i^{*}$ within $\tau^{*}$ time slots, denoted by $k^{*}$, in Line \ref{line:greedy_k}. Then, in Line \ref{line:greedy_n}, we calculate the number of batches required to serve $k^{*}$ users with model $i^{*}$. Finally, from Line \ref{line:greedy_n_s} to Line \ref{line:greedy_n_e}, we schedule the first $k^{*}$ users in $\mathcal{K}_{i}$ into multiple batches according to Theorem \ref{proposition_2_1}, appending these batches to the end of ${\bf{X}}_{N}$. Algorithm \ref{algorithm_greedy} exhibits a polynomial-time complexity as follows.
\begin{theorem}
    The proposed Algorithm \ref{algorithm_greedy} solves $\mathcal{P}1$ with a time complexity of $O\left(K\right)$.
\end{theorem}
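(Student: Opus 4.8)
The plan is to bound the running time of Algorithm \ref{algorithm_greedy} by separately accounting for (i) the cost of selecting the next model in each greedy iteration and (ii) the cost of committing the chosen users to batches, and then to show that the latter dominates and sums to $O\left(K\right)$ over the whole execution. First I would observe that, exactly as $q_{m}$, $g_{m}$, and $f$ are treated as precomputed inputs to the recursive routine within the complexity analysis of Algorithm \ref{algorithm_DP}, the quantities $v\left(i',i,\tau\right)$ and the underlying $\phi\left(i',i,\tau\right)$ are available by table lookup in $O\left(1\right)$ time once tabulated. Under this convention, the selection step in Line \ref{line:greedy_i_tau}, the evaluation of $k^{*}$ in Line \ref{line:greedy_k}, and the batch count $N'_{i}\left(k^{*}\right)$ in Line \ref{line:greedy_n} each reduce to a bounded number of such lookups and arithmetic operations.

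The key step is an amortization argument for the scheduling loop. Each pass of the greedy procedure fixes one model $i^{*}$, determines $k^{*}$ users drawn from $\mathcal{K}_{i^{*}}$, and writes them into $N'_{i^{*}}\left(k^{*}\right)=\lceil k^{*}/b_{i^{*},\max}\rceil$ consecutive batches in Lines \ref{line:greedy_n_s}--\ref{line:greedy_n_e}, following the ordering prescribed by Theorem \ref{proposition_2_1}. Because a model is removed from further consideration once its users are committed, and because constraint \eqref{const_1} forbids any user from appearing in more than one batch, the number of user-insertion operations performed over the entire run equals the total number of served users, which is at most $K$. Hence the cumulative cost of Lines \ref{line:greedy_n_s}--\ref{line:greedy_n_e} is $O\left(K\right)$. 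Since the number of greedy iterations is at most the number of distinct models that ever get scheduled, the per-iteration $O\left(1\right)$ selection overhead is likewise absorbed, and computing the optimal bandwidth allocation from \eqref{optimal_bandwidth} touches each scheduled user once, again costing $O\left(K\right)$; summing gives the overall $O\left(K\right)$ bound.

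The main obstacle I anticipate is making the bound on the selection step airtight rather than hand-waving it. The loop in Line \ref{line:greedy_i_tau} ostensibly ranges over candidate models in $\mathcal{I}$ and over the remaining time budget, so a naive count would introduce factors of $I$ and $T$; the argument must either treat the model library and horizon as fixed system parameters, so that each selection is genuinely $O\left(1\right)$ given the precomputed table, or, more carefully, charge the search cost against the users that are ultimately scheduled so that it telescopes into the $O\left(K\right)$ term. I would therefore state explicitly which quantities are precomputed and regarded as constants, mirroring the convention already adopted in the complexity analysis of Algorithm \ref{algorithm_DP}, and then verify that under this convention no loop in Algorithm \ref{algorithm_greedy} iterates more than $O\left(K\right)$ times in aggregate.
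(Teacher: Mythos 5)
Your amortized accounting of the batch-insertion loop and the bandwidth step agrees with the paper: by \eqref{const_1} each user is written into at most one batch, so Lines \ref{line:greedy_n_s}--\ref{line:greedy_n_e} cost $O(K)$ over the whole run, and Line \ref{line_greedy_y} touches each scheduled user once, another $O(K)$. Your convention of treating $v\left(i',i,\tau\right)$ and $\phi\left(i',i,\tau\right)$ as lookup-cheap is also in the spirit of the paper, which charges nothing for evaluating them. The problem is the selection step, and you concede it yourself: you offer two possible ways to ``absorb'' the cost of Line \ref{line:greedy_i_tau} and commit to neither. The paper closes this step with one concrete observation that never appears in your proposal: it charges Line \ref{line:greedy_i_tau} a total of $O(I)$, concludes the overall cost is $O(\max\{I,K\})$, and then uses the fact that \emph{every model in the library is requested by at least one user}, hence $I \le K$, to get $O(\max\{I,K\}) = O(K)$. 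That inequality $I \le K$ is the missing idea; any bound on the selection cost that scales with $I$ leaves you stranded at $O(\max\{I,K\})$ without it.

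Neither of your fallbacks substitutes for this fact. Declaring the model library and horizon to be ``fixed system parameters'' proves a weaker statement than intended: the paper does not treat $I$ as a constant elsewhere (it appears explicitly in the $O\left(MI^{2}K\right)$ bound of Theorem \ref{theorem_spec}), so the greedy theorem should likewise hold with $I$ variable, and then the constant-parameter convention is unavailable. Your charging alternative also does not telescope as stated: a single execution of Line \ref{line:greedy_i_tau} scans all surviving models in $\mathcal{I}'$, so its cost is on the order of $\left|\mathcal{I}'\right|$ rather than $O(1)$ even with tabulated $v$, and the iteration it belongs to may schedule very few users (possibly none, when the maximizing $v$ is zero), so the scan cost cannot be covered by the users it schedules. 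Making either route rigorous ultimately requires relating the number of models to the number of users --- which is exactly the one-line argument, $I \le K$, that the paper's proof supplies and yours omits.
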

\begin{proof}
    The time complexity of Line \ref{line:greedy_i_tau} in Algorithm \ref{algorithm_greedy} is $O\left(I\right)$, and that of Line \ref{line:greedy_n_s} to \ref{line:greedy_n_e} is $O\left(K\right)$. Additionally, the time complexity of Line \ref{line_greedy_y} in Algorithm \ref{algorithm_greedy} is $O\left(K\right)$. Therefore, the total time complexity of Algorithm \ref{algorithm_greedy} is $O\left(\max\left\{I,K\right\}\right) = O\left(K\right)$. Here, the equality holds since one AI model in the model library is requested by at least one user, which completes the proof.
\end{proof}

\begin{algorithm}[!t]
	\caption{Greedy Algorithm} 
	\label{algorithm_greedy}
	\LinesNumbered
	\KwIn{$\mathcal{I}$, $\mathcal{K}$, $T$.}
	\KwOut{${\bf{X}}$, ${\bf{Y}}$, $U\left({\bf{X}}\right)$.} 
	{\bf Initialize:} $\mathcal{I}'= \mathcal{I}$, $\tau'=0$, $N=0$, $\tilde{{\bf{X}}}^{*}_{N}=\emptyset$, $i'=0$.\\
        
        \While{$\tau' < T$}
        {\label{line:greedy_t_s}
            \If{$\mathcal{I}' = \emptyset$}
            {
                \textbf{Break}.\\
            }
            $\left\{i^{*},\tau^{*}\right\}=\mathop{\arg\max}\limits_{i\in\mathcal{I}',\tau\in\left(0,T - \tau'\right]}\left\{v\left(i',i,\tau\right)\right\}$ \label{line:greedy_i_tau} \\ 
            \If{$i^{*}$ does not exist}
            {
                \textbf{Break}.
            }
            $k^{*} = v\left(i',i^{*},\tau^{*}\right)\tau^{*}$.\label{line:greedy_k}\\
            
            $N'_{i}\left(k^{*}\right)=\lceil \frac{k^{*}}{b_{i^{*},\max}}\rceil$.\label{line:greedy_n}\\
            \For{$1\le n \le N'_{i}\left(k^{*}\right)$}
            {\label{line:greedy_n_s}
                ${\bf{X}}_{n+N} = \left\{x_{n,k}\right\}$, where $k\in\mathcal{K}_{i}$ and 
                $k\in\left\{
                \begin{aligned}
                    &\left(n-1\right)b_{i^{*},\max}+1,\dots,\left(n-1\right)b_{i^{*},\max}\\
                    &+\min\left\{b_{i^{*},\max},k^{*}- \left(n-1\right)b_{i^{*},\max}\right\}
                \end{aligned}\right\}$.\\
                $n = n + 1$.\\
            }\label{line:greedy_n_e}
            $i'=i^{*}$, $\mathcal{I}'=\mathcal{I}'\setminus\left\{i^{*}\right\}$, $\tau'=\tau'+\tau^{*}$, $N=N+N'_{i}\left(k^{*}\right)$.\\
        }\label{line:greedy_t_e}
        ${\bf{X}}=\bigcup\limits_{n=1}^{N}{\bf{X}}_{n}$ and calculate $U\left({\bf{X}}\right)$.\\
        Calculate ${\bf{Y}}$ using \eqref{optimal_bandwidth}.\\\label{line_greedy_y}
\end{algorithm}

\section{Numerical Results}
\subsection{Simulation Setup}
In our simulation settings, the coverage radius of the edge server is 250 m, with $K$ users evenly distributed in the coverage area of the edge server, where $K$ is chosen from \{60, 70, 80, 90, 100\}. $P_{k}$, $N_{0}$, and $\alpha$ of $\bar{R}_{k}$ in \eqref{eq:latency_uploading} are set to $\text{5}\times\text{10}^{\text{-9}}$ Watt/Hz, -174 dBm/Hz, and 4, respectively. The total bandwidth $B$ ranges from \{10, 50, 100, 200, 300, 400\} MHz. The latency requirement $\bar{T}$ varies between 500 and 900 ms, with each time slot lasting 10 ms.

An AI model library is constructed using three model structures: ResNet-18, ResNet-34, and ResNet-50. We first pre-train the three models on CIFAR-100 and then fine-tune the pre-trained models on CIFAR-10. In the BS case, the bottom-layer freezing technique is applied to fine-tune 50 AI models. The fine-tuned models share consecutive bottom layers with the corresponding pre-trained model of the same model structure. As a result, models fine-tuned from the same pre-trained model are grouped into a single model cluster and share bottom layers with the other models within the same cluster. In the general case, 25 AI models fine-tuned from the pre-trained model share parameter blocks at arbitrary positions. Moreover, in both the BLS and general cases, models share an average of $\theta$ = \{75\%, 80\%, 85\%, 90\%, 95\%\} of layers with the corresponding pre-trained models. For different values of $\theta$, compared with full-parameter fine-tuning, the inference accuracy changes, on average, by \{0.03\%, -0.03\%, 0.09\%, 0.19\%, -0.63\%\} in the BS case and by \{0.02\%, -0.42\%, -0.49\%, -0.09\%, -1.15\%\} in the general case, with positive values indicating accuracy enhancement by overcoming overfitting. The dataset for inference computing is CIFAR10, where the original data size is resized from 32$\times$32$\times$3 to 128$\times$128$\times$3 to simulate higher-resolution data uploading.

Three algorithms are compared in the simulation.
\begin{itemize}
    \item \textbf{PartialLoading Spec}. Algorithm \ref{algorithm_DP}, designed to obtain the optimal solution in the BS case.
    \item \textbf{PartialLoading Gen}. Algorithm \ref{algorithm_greedy}, developed for the general case. It is also applicable to the BS case. 
    \item \textbf{Independent Loading}. The \textit{optimal} user scheduling and spectrum bandwidth allocation policy for $\mathcal{P}1$ without considering parameter sharing. As noted in Remark \ref{remark_independent}, Algorithm \ref{algorithm_DP} also produces the optimal solution when parameter sharing is not considered. This baseline serves as the performance upper bound for traditional schemes without exploiting parameter sharing.
\end{itemize}

We consider Rayleigh fading channels in the simulations. The results are averaged from $10^{3}$ channel realizations. To evaluate the performance of the proposed approaches, we use the served user ratio as the metric, which is defined as the ratio of the number of users successfully served to the total number of users subject to the latency constraints. This metric is normalized to $[0,1]$. In addition, we use a Linux edge server with configurations consistent with Fig. \ref{fig:intro_comp}.

\subsection{Performance in the BS case}
We first evaluate the performance of the proposed approaches in the BS case. Although PartialLoading Gen is designed for the general case, it can also be applied to the BS case and is therefore included as a benchmark. 

In Fig. \ref{fig:special}, we compare the performance by varying the total bandwidth $B$, number of users $K$, latency constraint $\bar{T}$, and layer sharing ratio $\theta$. Fig. \ref{fig:spec_bw} demonstrates that increasing the value of $B$ can increase the served user ratio. This is because the uploading time in \eqref{eq:latency_uploading} decreases as more communication resources are allocated to users. Notably, when $B$ ranges from 10 to 100 MHz, the served user ratio increases significantly with growing bandwidth. However, the increasing trend slows when $B$ is larger than 100 MHz. It illustrates that the performance bottleneck of the proposed framework lies in the data uploading when communication resources are limited. In contrast, the bottleneck shifts to model loading and inference computing when communication resources are abundant. 
This insight can guide network operators in allocating communication and computing resources effectively to balance performance across different stages of the framework.  
Moreover, both PartialLoading Spec and PartialLoading Gen outperform Independent Loading. Specifically, PartialLoading Spec and PartialLoading Gen improve the served user ratio by about 24\% and 16\%, respectively, on average. Note that PartialLoading Spec obtains the optimal solution in the BS case. As a result, the small performance gap between PartialLoading Spec and PartialLoading Gen indicates that PartialLoading Gen also performs well.

\begin{figure}[t]
    \centering
	\subfigure[Served user ratio vs. $B$.]{\includegraphics[height=2.9cm, keepaspectratio]{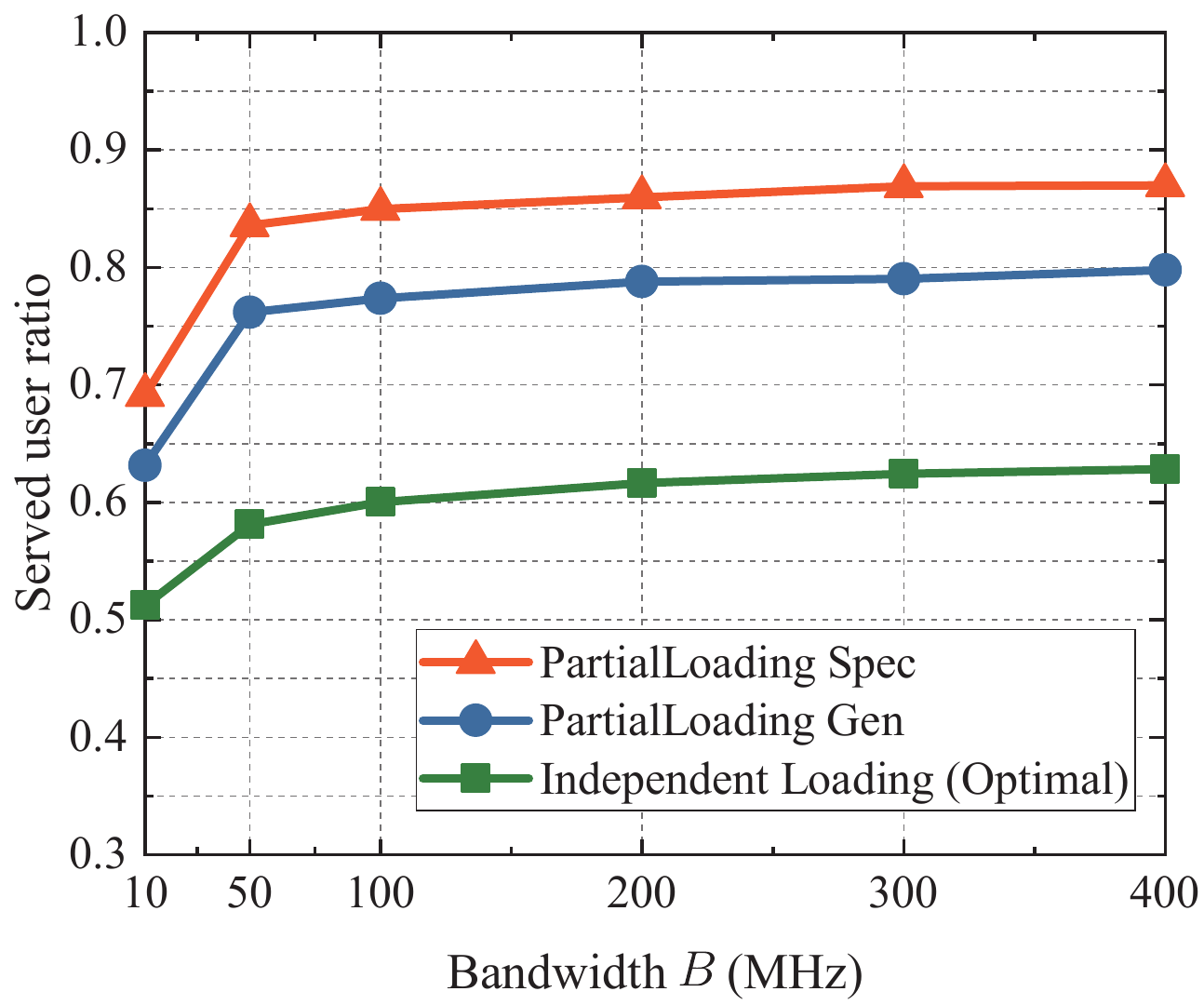}\label{fig:spec_bw}}
	\quad
	\subfigure[Served user ratio vs. $K$.]{\includegraphics[height=2.9cm, keepaspectratio]{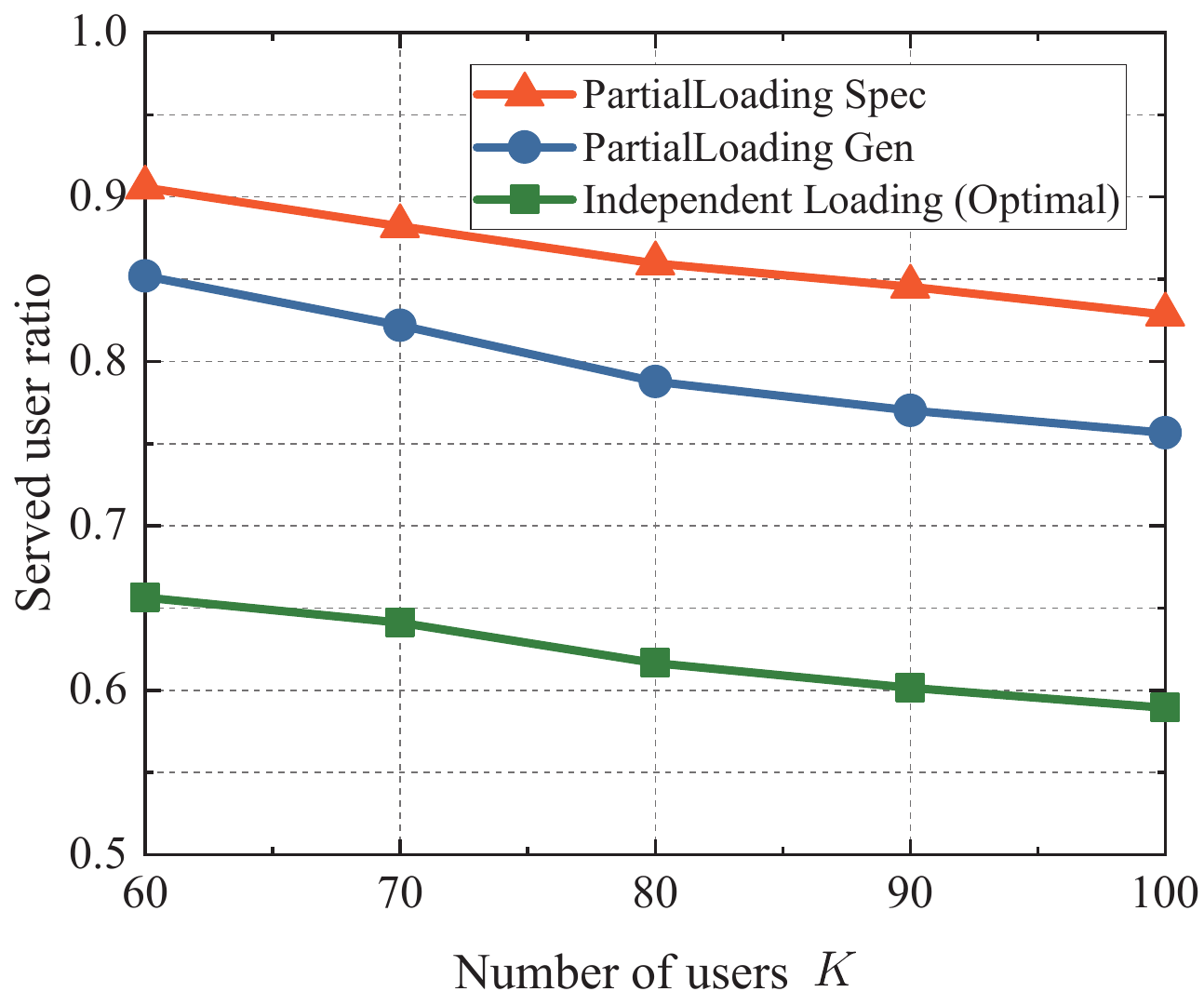}\label{fig:spec_user}}
        \\
        \subfigure[Served user ratio vs. $\bar{T}$.]{\includegraphics[height=2.9cm, keepaspectratio]{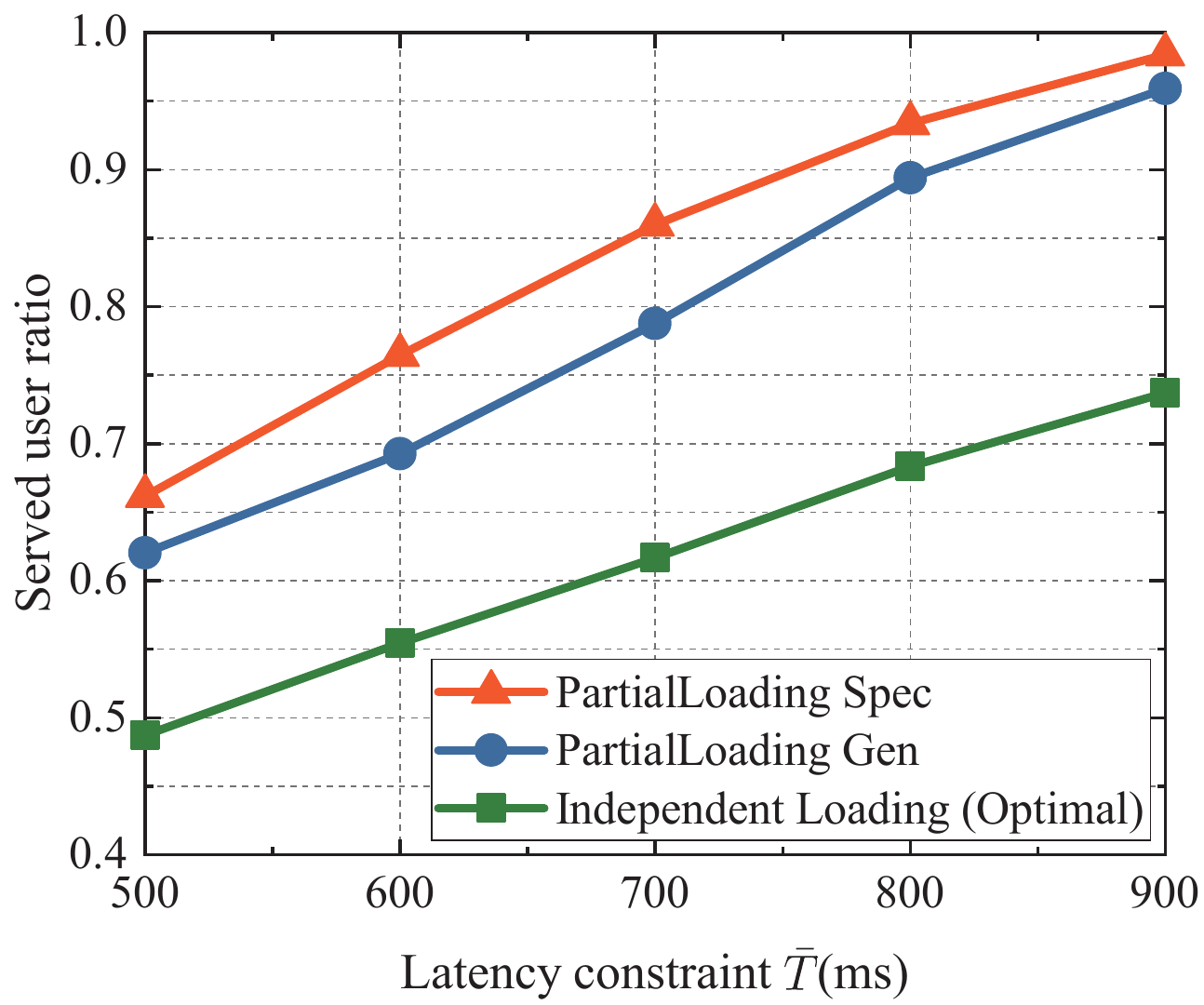}\label{fig:spec_ddl}}
	\quad
	\subfigure[Served user ratio vs. $\theta$.]{\includegraphics[height=2.9cm, keepaspectratio]{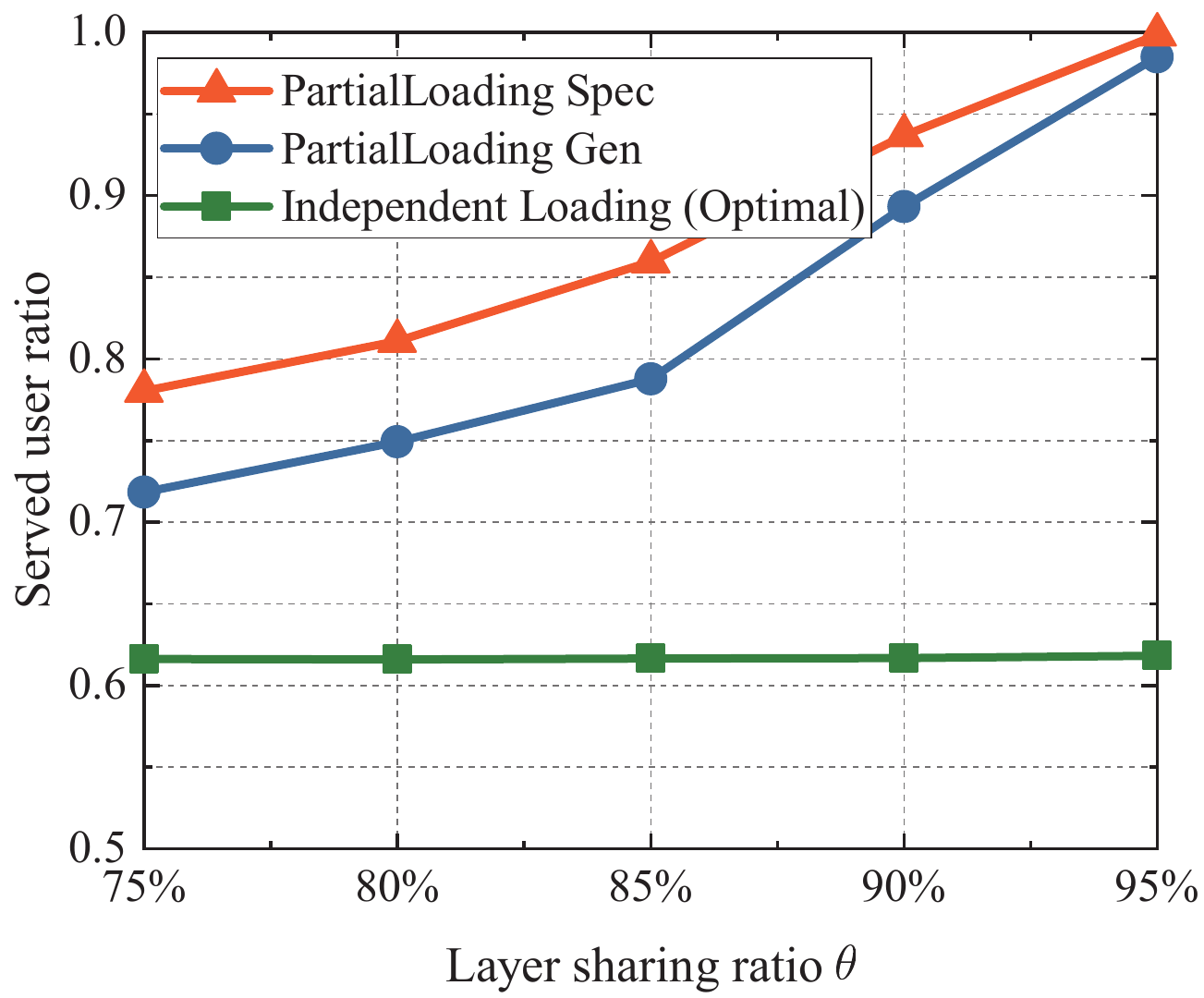}\label{fig:spec_ratio}}
    \vspace{-3pt}
 \caption{Served user ratio in the BS case, where $B$, $K$, $\bar{T}$, and $\theta$ are respectively set to 200 MHz, 80, 700 ms, and 85\% by default.}
 \vspace{-10pt}\label{fig:special}
\end{figure}

Fig. \ref{fig:spec_user} illustrates the served user ratio for varying $K$ in the edge network. As $K$ increases, the served user ratio decreases due to the communication and computing resource limitations of the edge server, indicating that the system’s capacity has been reached. Nevertheless, the edge server can still serve nearly 83\% and 76\% of users with PartialLoading Spec and PartialLoading Gen, respectively. This demonstrates that the proposed approaches still perform well even with a high density of users. Moreover, PartialLoading Spec and PartialLoading Gen outperform Independent Loading, improving the served user ratio by about 24\% and 18\%, respectively.

In Fig. \ref{fig:spec_ddl}, we study the impact of $\bar{T}$ on the performance. As expected, the served user ratio increases proportionally with $\bar{T}$. This figure also shows a nearly linear relationship between the served user ratio and $\bar{T}$. Furthermore, PartialLoading Spec and PartialLoading Gen improve the served user ratio by about 22\% and 18\%, respectively, over Independent Loading.

We finally analyze the effect of $\theta$ 
in Fig. \ref{fig:spec_ratio}. PartialLoading Spec and PartialLoading Gen enhance the served user ratio by about 26\% and 21\% than Independent Loading. Besides, the served user ratio increases at a faster rate as $\theta$ grows. The reason is that the size of the top layers in the utilized models is generally larger than that of the bottom layers. In the BS case, a higher $\theta$ implies that more top layers are shared and do not need to be loaded. Therefore, the increasing trend of the served user ratio becomes more pronounced.

\subsection{Performance in the General Case}
In this subsection, we evaluate the performance of PartialLoading Gen in the general case, where the shared parameter blocks can appear at arbitrary positions within AI models. Since PartialLoading Spec is specifically designed for the BS case, in this subsection, we only compare PartialLoading Gen with Independent Loading.

Similar to the above subsection, we analyze the served user ratio of PartialLoading Gen under varying values of $B$, $K$, $\bar{T}$, and $\theta$ in Fig. \ref{fig:general}. In Figs. \ref{fig:general_bw}, \ref{fig:general_user}, \ref{fig:general_ddl}, and \ref{fig:general_ratio}, trends similar to those in Fig. \ref{fig:special} can be observed. In these figures, PartialLoading Gen improves the served user ratio by about 36\%, 38\%, 36\%, and 39\%, respectively, compared with Independent Loading. Moreover, in Fig. \ref{fig:general_ratio}, due to the arbitrary position of shared parameter blocks, the increasing trend of the served user ratio remains stable as $\theta$ grows.
\begin{figure}[t]
    \centering
	\subfigure[Served user ratio vs. $B$.]{\includegraphics[height=2.9cm, keepaspectratio]{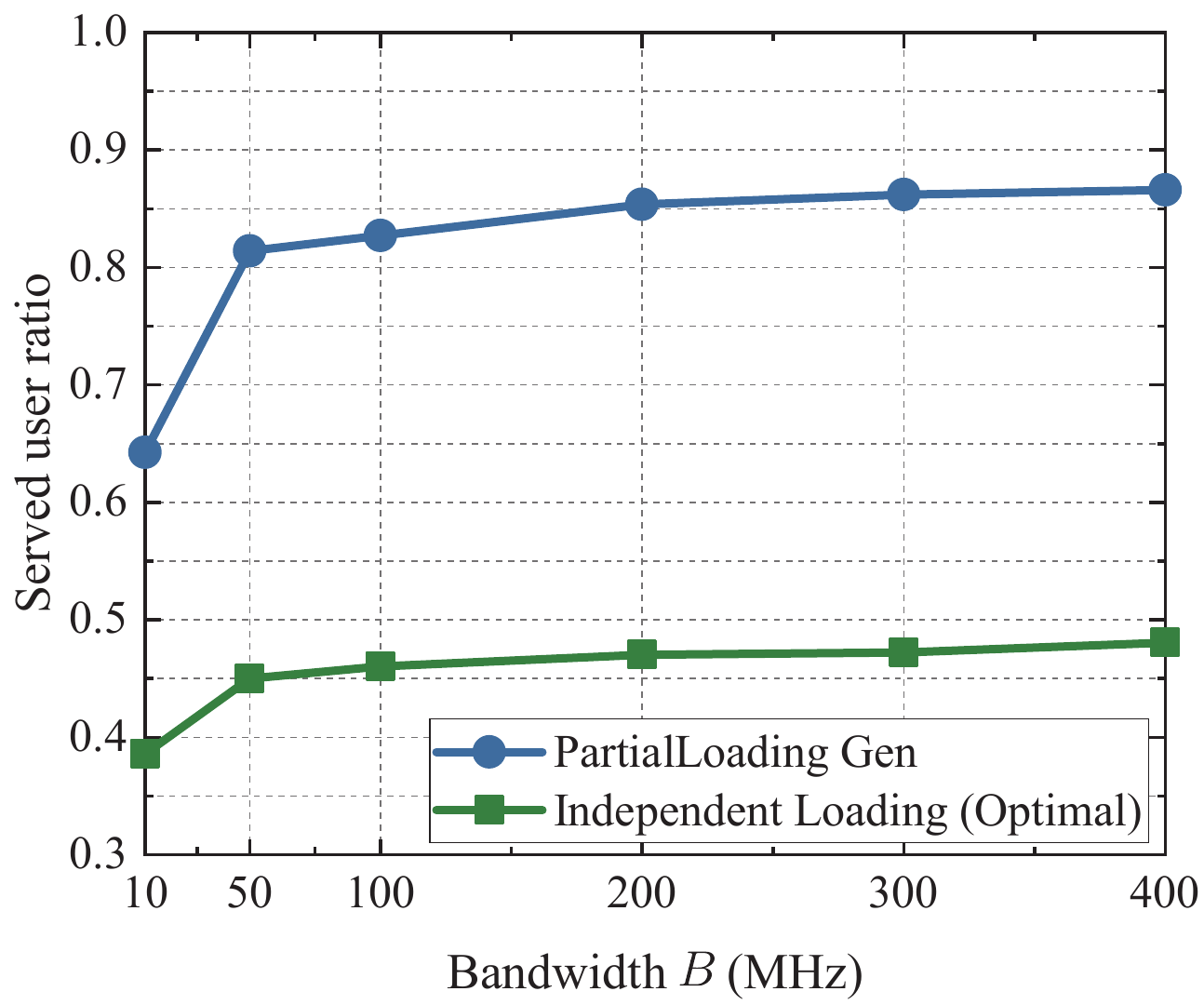}\label{fig:general_bw}}
	\quad
	\subfigure[Served user ratio vs. $K$.]{\includegraphics[height=2.9cm, keepaspectratio]{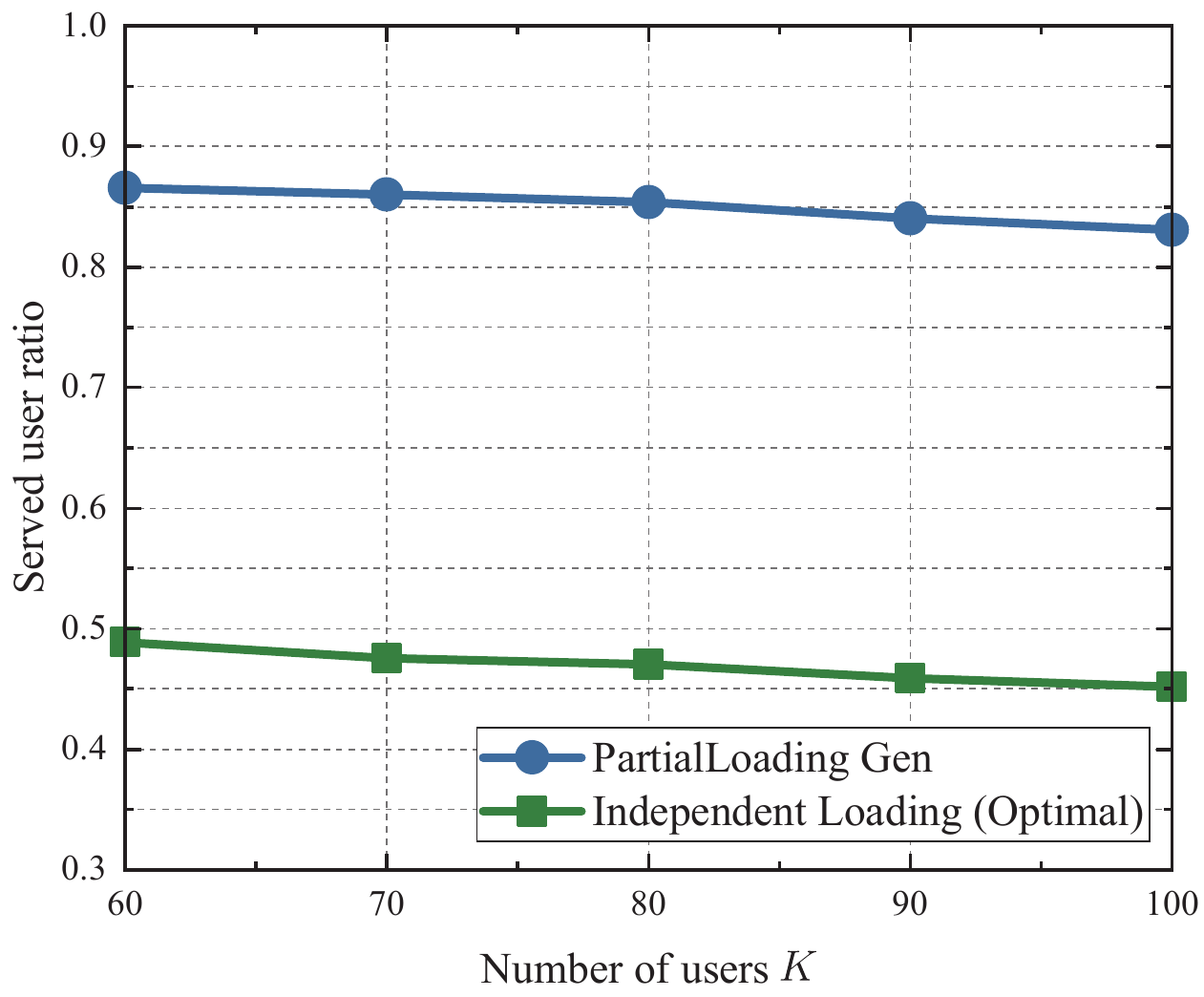}\label{fig:general_user}}
        \\
        \subfigure[Served user ratio vs. $\bar{T}$.]{\includegraphics[height=2.9cm, keepaspectratio]{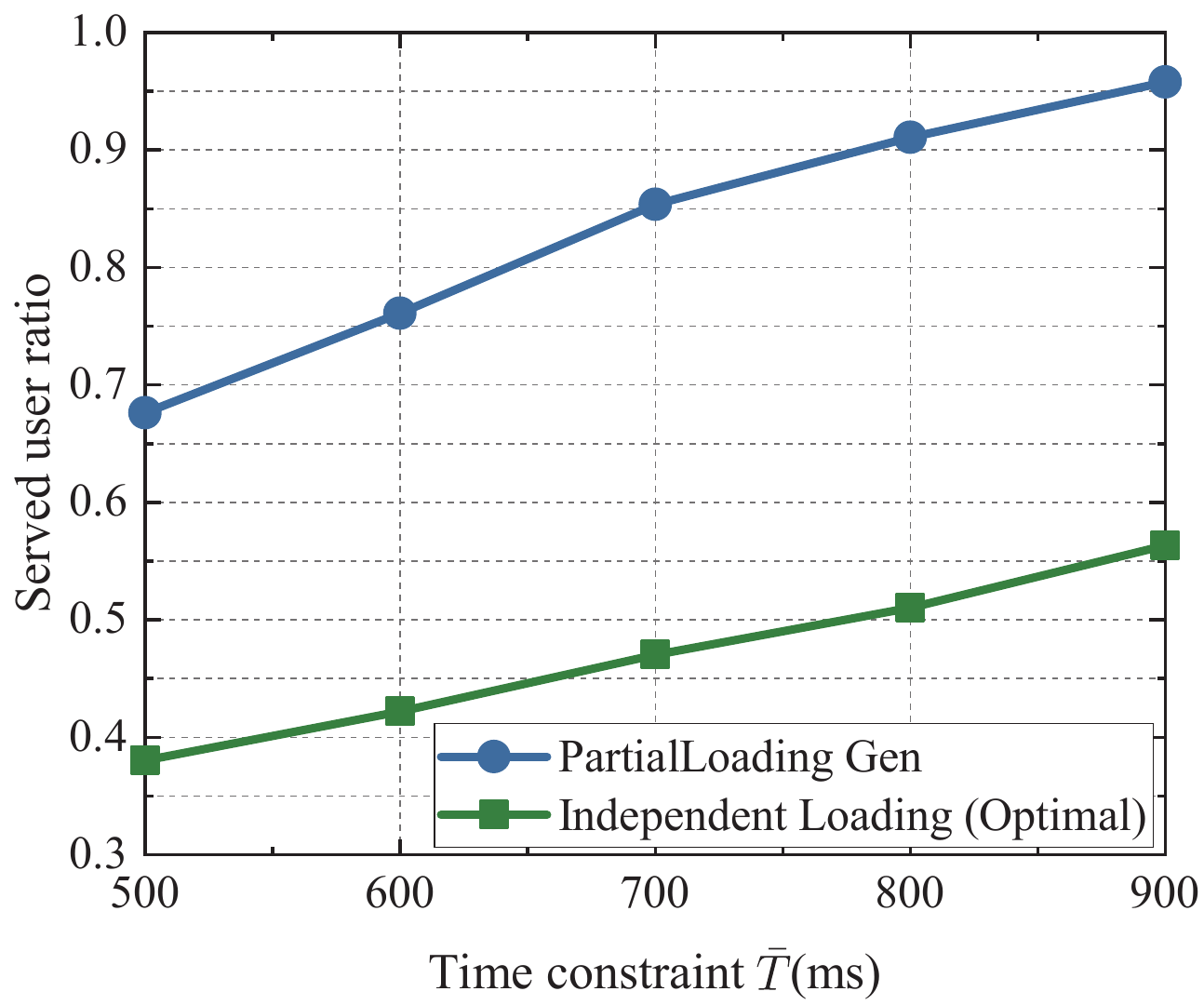}\label{fig:general_ddl}}
	\quad
	\subfigure[Served user ratio vs. $\theta$.]{\includegraphics[height=2.9cm, keepaspectratio]{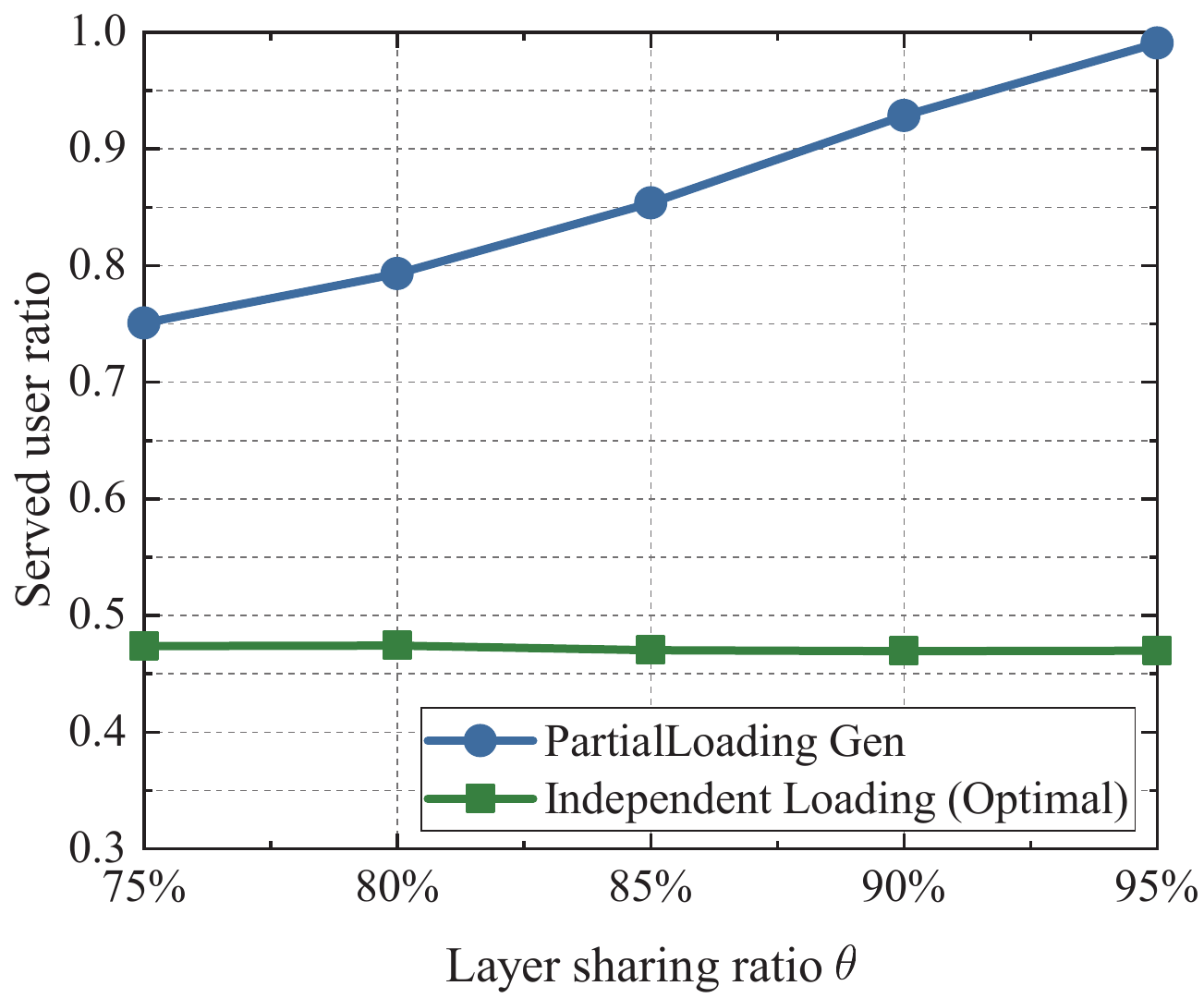}\label{fig:general_ratio}}
    \vspace{-3pt}
 \caption{Served user ratio in the general case, where the default values of $B$, $K$, $\bar{T}$, and $\theta$ are the same as those in Fig. \ref{fig:special}.}
 \vspace{-10pt}\label{fig:general}
\end{figure}

\subsection{Ablation Study}
This subsection evaluates the impact of spectrum bandwidth allocation on the served user ratio in the proposed framework through ablation experiments. Fig. \ref{fig:ablation} compares the performance of the proposed PartialLoading Spec and PartialLoading Gen algorithms with their counterparts under equal bandwidth allocation in both the BLS and general cases. In the setup of algorithms with equal bandwidth allocation, the total bandwidth $B$ is evenly divided into $r\in\{5, 10, 20\}$ sub-channels, with each sub-channel assigned to a single user. The bandwidth allocated to user $k$ in batch $n$ in \eqref{eq_bw} is updated to $B'_{n,k} = x_{n,k}B_{k}$, where $B_{k}=\frac{B}{r}$, satisfying $\sum\limits_{k\in\mathcal{K}}x_{n,k}B_{k} \le B, \ \forall n\in\mathcal{N}$. Then, the corresponding user scheduling is determined using $B'_{n,k}$ for the BS case and general case by PartialLoading Spec and PartialLoading Gen, respectively. 

Figs. \ref{fig:spec_bw_ablation} and \ref{fig:spec_user_ablation} demonstrate that the proposed PartialLoading Spec consistently outperforms the algorithms with equal bandwidth allocation across varying $B$ and $K$ in the BS case. Specifically, in Fig. \ref{fig:spec_bw_ablation}, PartialLoading Spec achieves average served user ratio improvements of approximately 9.9\%, 16.9\%, and 26.2\% over the equal-bandwidth-allocation algorithms with $r=5$, $r=10$, and $r=20$, respectively. Similarly, in Fig. \ref{fig:spec_user_ablation}, the respective improvements are approximately 3.9\%, 7.7\%, and 15.2\%. 

Figs. \ref{fig:general_bw_ablation} and \ref{fig:general_user_ablation} exhibit trends similar to those in Figs. \ref{fig:spec_bw_ablation} and \ref{fig:spec_user_ablation}, demonstrating PartialLoading Gen consistently outperforms the equal-bandwidth-allocation algorithms across varying $B$ and $K$ in the general case. In Fig. \ref{fig:general_bw_ablation}, the respective improvements are about 9.6\%, 16.1\%, and 25.6\%, while in Fig. \ref{fig:general_user_ablation}, they are about 3.4\%, 7.7\%, and 16.1\%.

\begin{figure}[t]
    \centering
	\subfigure[Served user ratio vs. $B$ in the BS case.]{\includegraphics[height=2.9cm, keepaspectratio]{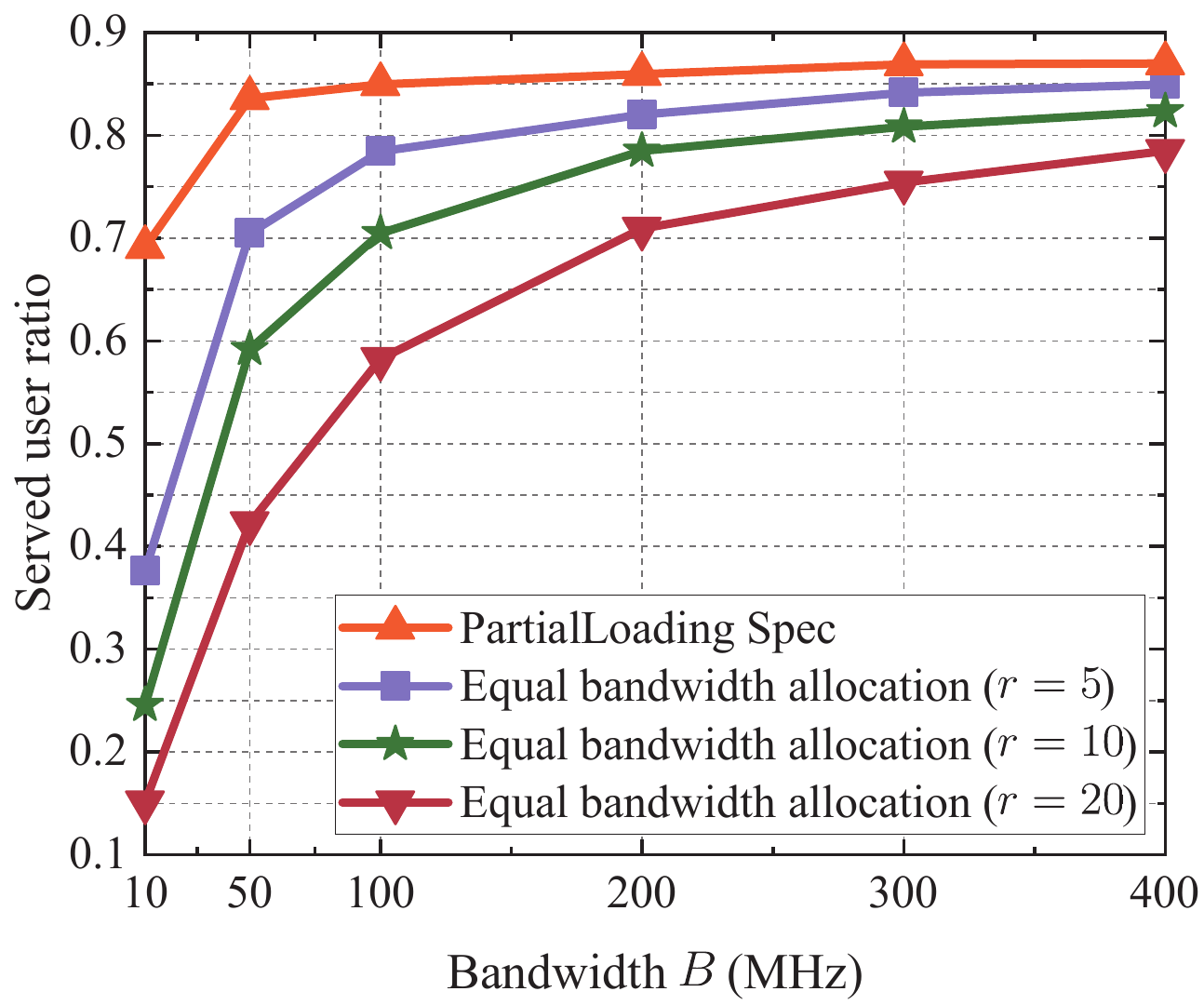}\label{fig:spec_bw_ablation}}
	\quad
	\subfigure[Served user ratio vs. $K$ in the BS case.]{\includegraphics[height=2.9cm, keepaspectratio]{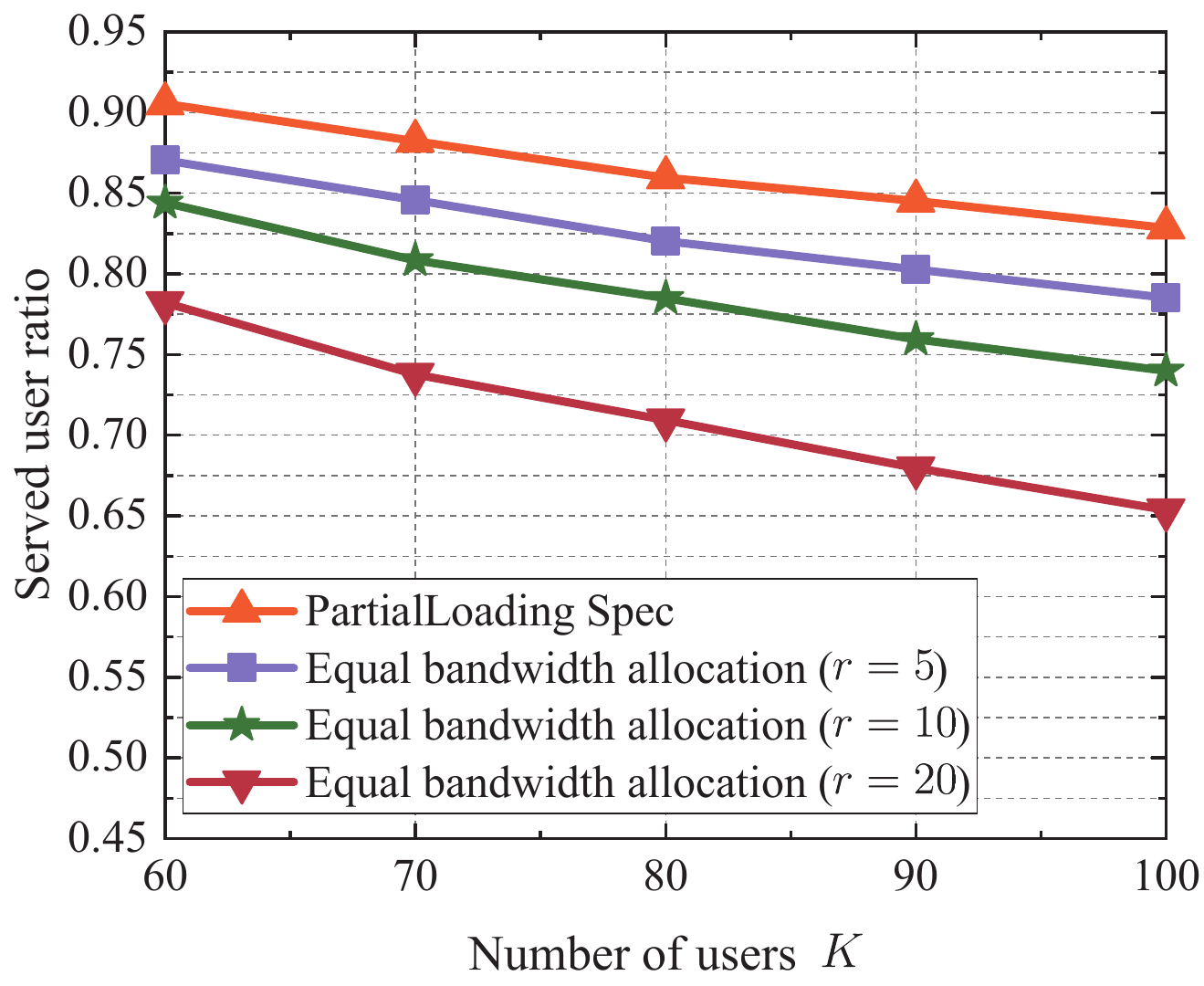}\label{fig:spec_user_ablation}}
\\
        \subfigure[Served user ratio vs. $B$ in the general case.]{\includegraphics[height=2.9cm, keepaspectratio]{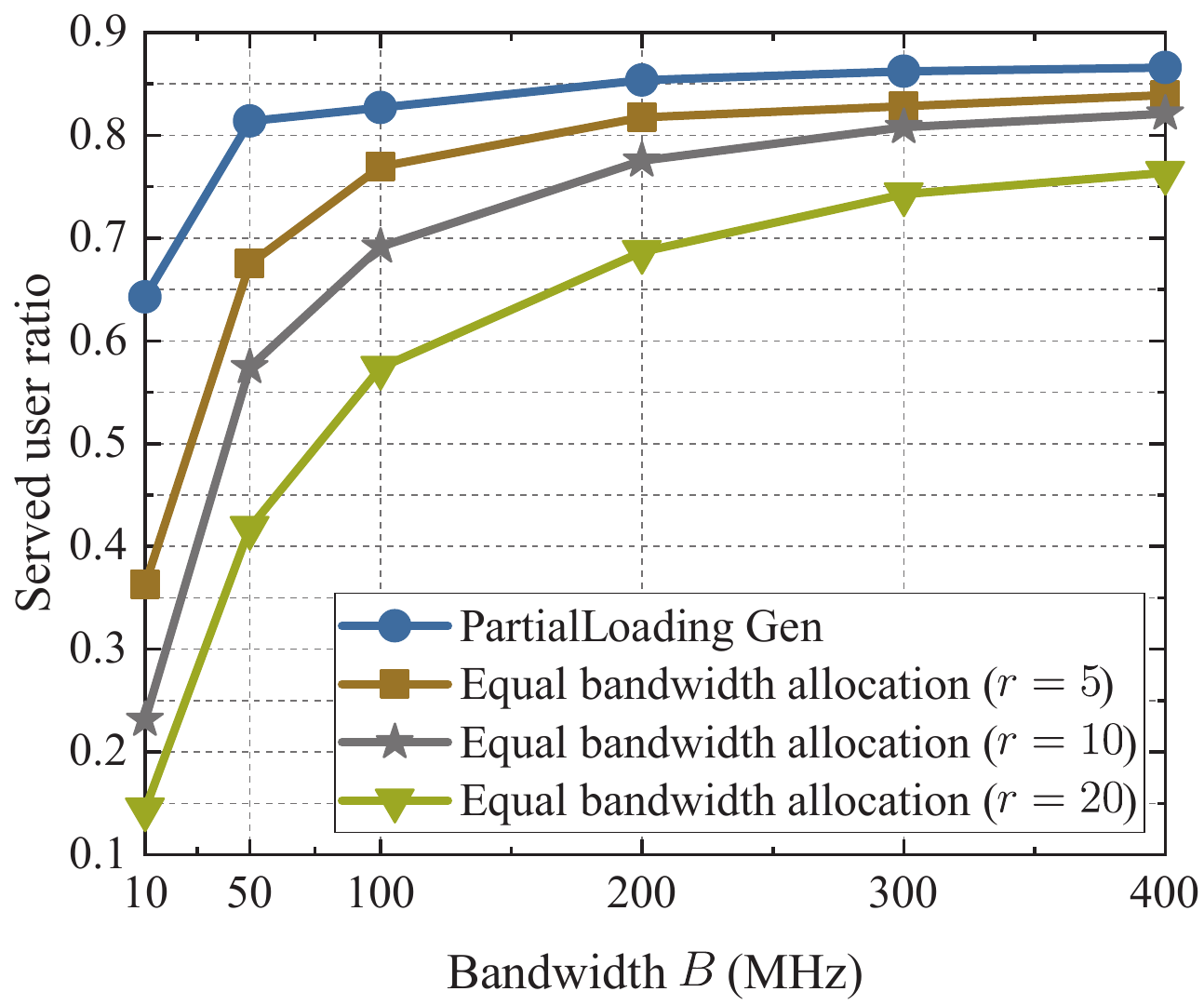}\label{fig:general_bw_ablation}}
	\quad
	\subfigure[Served user ratio vs. $K$ in the general case.]{\includegraphics[height=2.9cm, keepaspectratio]{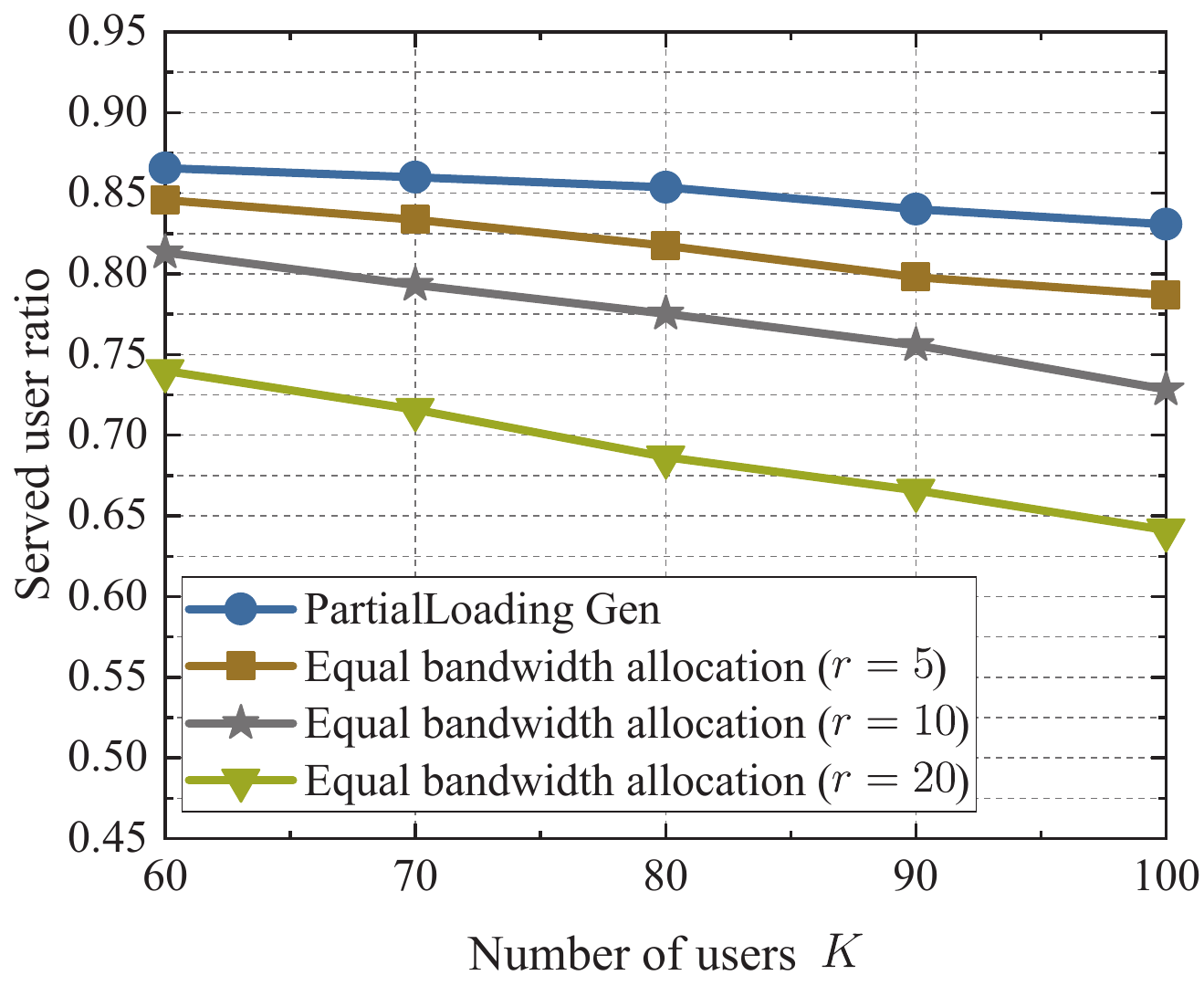}\label{fig:general_user_ablation}}
    \vspace{-3pt}
 \caption{Ablation study for bandwidth allocation, where the default values of $B$, $K$, $\bar{T}$, and $\theta$ are the same as those in Fig. \ref{fig:special}. In the equal-bandwidth-allocation algorithms, $r$ represents the number of sub-channels into which the total bandwidth $B$ is evenly divided, with each sub-channel assigned to a single user.
 }
 \vspace{-5pt}\label{fig:ablation}
\end{figure}

\subsection{Running Time Comparisons}
In this subsection, we evaluate the efficiency of the proposed algorithms by comparing the served user ratio and average running time of the proposed algorithms against the exhaustive search. To ensure the feasible execution of the exhaustive search, we consider a small problem scale with $K$, $I$, and $\bar{T}$ set to 20, 5, and 200 ms, respectively. 

Fig. \ref{fig:spec_optimal} compares the performance among PartialLoading Gen, PartialLoading Spec, and the exhaustive search in the BS case. It illustrates that the served user ratio of PartialLoading Spec matches that of the exhaustive search, validating the optimality of the PartialLoading Spec. However, the running time of PartialLoading Spec is only about 5.6 ms, making it nearly 3,681 times faster than the exhaustive search, where the time complexity of the exhaustive search grows exponentially with $I$ and $\bar{T}$. Additionally, the served user ratio of PartialLoading Gen is only about 3.7\% lower than the exhaustive search, yet it achieves approximately 3,749 times faster execution. 

Similarly, in Fig. \ref{fig:general_optimal}, we compare the performance between PartialLoading Gen and the exhaustive search in the general case. The served user ratio of PartialLoading Gen only degrades by about 4.3\%. In contrast, it runs in only about 12.5 ms and accelerates the decision process by about 2,478 times.

\begin{figure}[t]
    \centering
	\subfigure[
    Results in the BS case.]{\includegraphics[height=2.8cm, keepaspectratio]{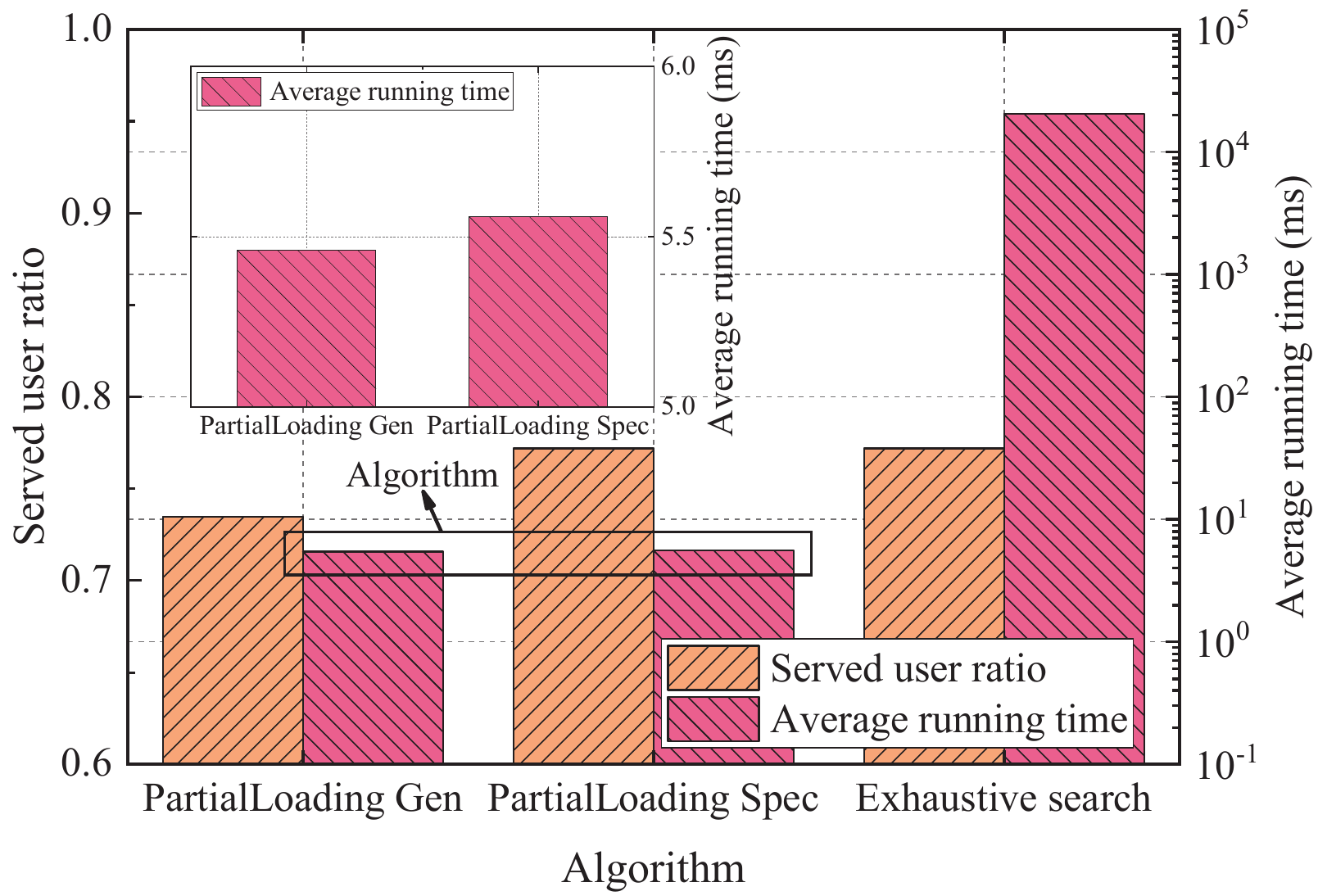}\label{fig:spec_optimal}}
	\quad
	\subfigure[
    Results in the general case.]{\includegraphics[height=2.8cm, keepaspectratio]{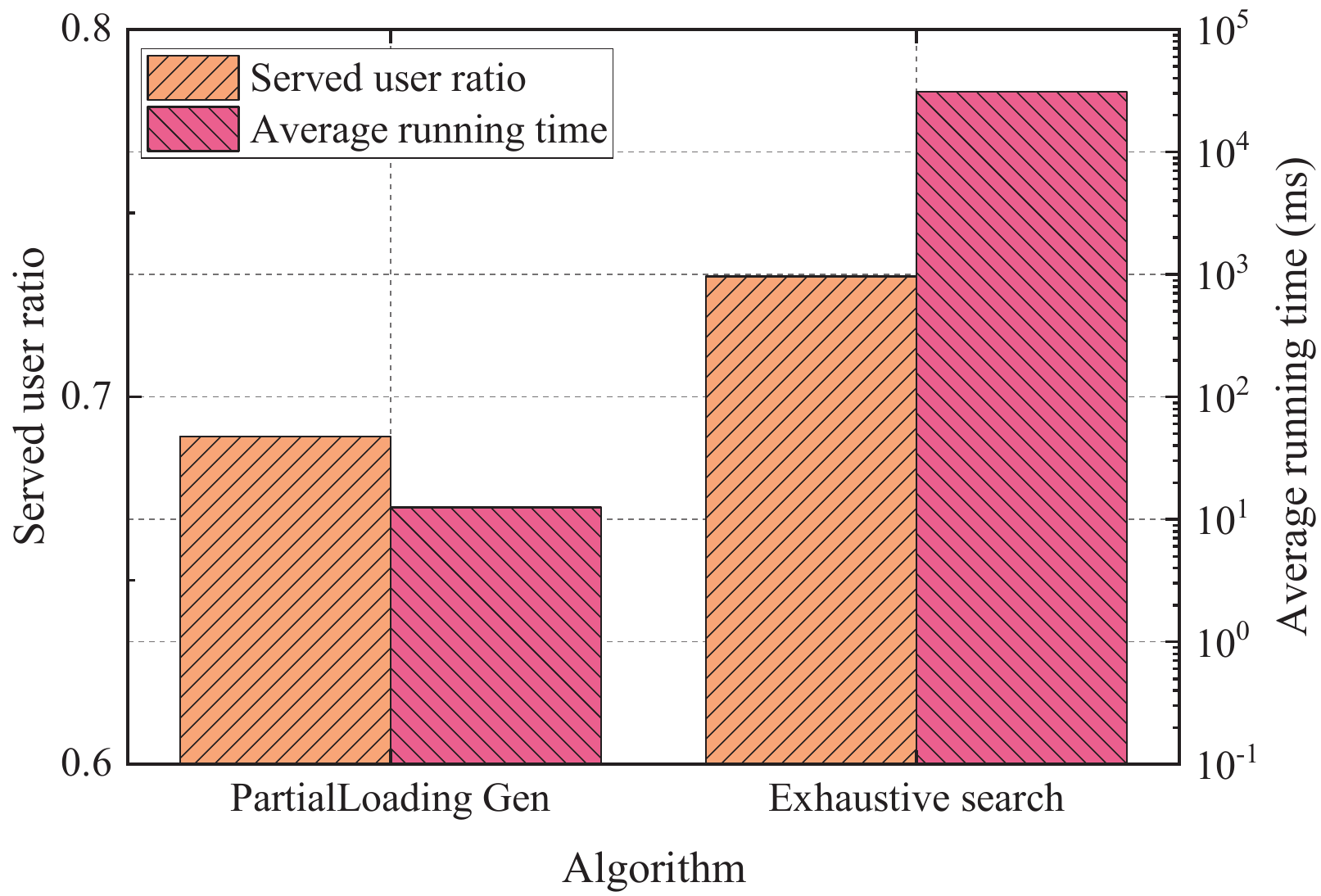}\label{fig:general_optimal}}
    \vspace{-3pt}
 \caption{Comparisons of served user ratio and average running time for different algorithms, with default values of $B$, and $\theta$ as in Fig. \ref{fig:special}.}
 \vspace{-10pt}\label{fig:optimal}
\end{figure}
\section{Conclusions}
In this paper, we have investigated the multi-user scheduling problem in parameter-sharing edge inference and proposed a PartialLoading framework to maximize inference task throughput. By leveraging shared parameter blocks to avoid redundant model loading across consecutive inference batches, the proposed framework optimizes task throughput by scheduling users and allocating spectrum bandwidth under latency and resource constraints. To simplify the solution approach, we have decomposed the original problem into two sub-problems, i.e., user scheduling and bandwidth allocation sub-problems, without compromising the optimality of the original problem. For the bandwidth allocation sub-problem, we have derived the optimal closed-form expression for spectrum bandwidth allocation given user scheduling decisions. For the user-scheduling sub-problem, we first investigate the backbone-sharing case and design a dynamic programming-based algorithm to find the optimal solution in polynomial time. Then, we have addressed the general case by developing a greedy heuristic procedure. Both the proposed algorithms have demonstrated significant task throughput improvements compared with traditional frameworks without exploiting parameter-sharing. By pioneering user scheduling for parameter-sharing edge inference, we hope this work offers insights into further improving the efficiency of edge inference systems.

\bibliographystyle{IEEEtran}
\bibliography{IEEEabrv,reference}

\begin{thebibliography}{10}
\providecommand{\url}[1]{#1}
\csname url@samestyle\endcsname
\providecommand{\newblock}{\relax}
\providecommand{\bibinfo}[2]{#2}
\providecommand{\BIBentrySTDinterwordspacing}{\spaceskip=0pt\relax}
\providecommand{\BIBentryALTinterwordstretchfactor}{4}
\providecommand{\BIBentryALTinterwordspacing}{\spaceskip=\fontdimen2\font plus
\BIBentryALTinterwordstretchfactor\fontdimen3\font minus \fontdimen4\font\relax}
\providecommand{\BIBforeignlanguage}[2]{{%
\expandafter\ifx\csname l@#1\endcsname\relax
\typeout{** WARNING: IEEEtran.bst: No hyphenation pattern has been}%
\typeout{** loaded for the language `#1'. Using the pattern for}%
\typeout{** the default language instead.}%
\else
\language=\csname l@#1\endcsname
\fi
#2}}
\providecommand{\BIBdecl}{\relax}
\BIBdecl

\bibitem{ITU-R-M.2160-0}
{ITU Radiocommunication Sector (ITU-R)}, ``Framework and overall objectives of the future development of {IMT} for 2030 and beyond,'' International Telecommunication Union (ITU), Recommendation M.2160-0, Nov. 2023, recommendation ITU-R M.2160-0 (11/2023).

\bibitem{qu2024mobile}
G.~Qu, Q.~Chen, W.~Wei, Z.~Lin, X.~Chen, and K.~Huang, ``Mobile edge intelligence for large language models: A contemporary survey,'' \emph{{IEEE} Commun. Surveys Tuts.}, pp. 1--42, early access 2025.

\bibitem{zhang2019mobile}
J.~Zhang and K.~B. Letaief, ``Mobile edge intelligence and computing for the internet of vehicles,'' \emph{Proc. {IEEE}}, vol. 108, no.~2, pp. 246--261, Feb. 2019.

\bibitem{chen2024space}
Q.~Chen, Z.~Wang, X.~Chen, J.~Wen, D.~Zhou, S.~Ji, M.~Sheng, and K.~Huang, ``Space-ground fluid {AI} for {6G} edge intelligence,'' \emph{arXiv preprint arXiv:2411.15845}, 2024.

\bibitem{lim2022realizing}
W.~Y.~B. Lim, Z.~Xiong, D.~Niyato, X.~Cao, C.~Miao, S.~Sun, and Q.~Yang, ``Realizing the metaverse with edge intelligence: A match made in heaven,'' \emph{{IEEE} Wireless Commun.}, vol.~30, no.~4, pp. 64--71, Aug. 2022.

\bibitem{dai2019industrial}
W.~Dai, H.~Nishi, V.~Vyatkin, V.~Huang, Y.~Shi, and X.~Guan, ``Industrial edge computing: Enabling embedded intelligence,'' \emph{{IEEE} Ind. Electron. Mag.}, vol.~13, no.~4, pp. 48--56, Dec. 2019.

\bibitem{amin2020edge}
S.~U. Amin and M.~S. Hossain, ``Edge intelligence and internet of things in healthcare: A survey,'' \emph{{IEEE} Access}, vol.~9, pp. 45--59, Dec. 2020.

\bibitem{han2024hermes}
X.~Han, Z.~Cai, Y.~Zhang, C.~Fan, J.~Liu, R.~Ma, and R.~Buyya, ``Hermes: Memory-efficient pipeline inference for large models on edge devices,'' in \emph{Proc. IEEE Int. Conf. Computer Design (ICCD)}, Milan, Italy, Nov. 2024, pp. 454--461.

\bibitem{3gpp.22.874}
3GPP, ``3rd generation partnership project; {Technical} specification group services and system aspects; {Study} on traffic characteristics and performance requirements for {AI/ML} model transfer in {5GS}; ({Release} 18),'' {3rd Generation Partnership Project (3GPP)}, Technical Specification (TS) 22.874, Dec. 2021, version 18.2.0.

\bibitem{9134426}
Y.~Shi, K.~Yang, T.~Jiang, J.~Zhang, and K.~B. Letaief, ``Communication-efficient edge {AI}: Algorithms and systems,'' \emph{{IEEE} Commun. Surveys Tuts.}, vol.~22, no.~4, pp. 2167--2191, 4th Quart. 2020.

\bibitem{9311935}
J.~Shao and J.~Zhang, ``Communication-computation trade-off in resource-constrained edge inference,'' \emph{{IEEE} Commun. Mag.}, vol.~58, no.~12, pp. 20--26, Dec. 2020.

\bibitem{he2016deep}
K.~He, X.~Zhang, S.~Ren, and J.~Sun, ``Deep residual learning for image recognition,'' in \emph{Proc. IEEE Conf. Comput. Vis. Pattern Recognit. (CVPR)}, Las Vegas, NV, USA, Jun. 2016, pp. 770--778.

\bibitem{krizhevsky2009learning}
A.~Krizhevsky, G.~Hinton \emph{et~al.}, ``Learning multiple layers of features from tiny images,'' Apr. 2009.

\bibitem{tenser2023}
\BIBentryALTinterwordspacing
TenserFlow, ``Transfer learning \& fine-tuning,'' 2023. [Online]. Available: \url{https://www.tensorflow.org/guide/keras/transfer_learning#introduction}
\BIBentrySTDinterwordspacing

\bibitem{zhuang2020comprehensive}
F.~Zhuang, Z.~Qi, K.~Duan, D.~Xi, Y.~Zhu, H.~Zhu, H.~Xiong, and Q.~He, ``A comprehensive survey on transfer learning,'' \emph{Proc. {IEEE}}, vol. 109, no.~1, pp. 43--76, Jan. 2020.

\bibitem{qu2024trimcaching}
G.~Qu, Z.~Lin, Q.~Chen, J.~Li, F.~Liu, X.~Chen, and K.~Huang, ``{TrimCaching}: Parameter-sharing edge caching for {AI} model downloading,'' \emph{arXiv preprint arXiv:2404.14204}, 2024.

\bibitem{Guo_2019_CVPR}
Y.~Guo, H.~Shi, A.~Kumar, K.~Grauman, T.~Rosing, and R.~Feris, ``Spottune: {Transfer} learning through adaptive fine-tuning,'' in \emph{Proc. IEEE Conf. Comput. Vis. Pattern Recognit. (CVPR)}, Long Beach, CA, USA, Jun. 2019.

\bibitem{10.5555/2969033.2969197}
J.~Yosinski, J.~Clune, Y.~Bengio, and H.~Lipson, ``How transferable are features in deep neural networks?'' in \emph{Proc. Int. Conf. Neural Inf. Process. Syst.}, Montreal, Canada, Dec. 2014, pp. 3320--3328.

\bibitem{hu2021lora}
E.~J. Hu, Y.~Shen, P.~Wallis, Z.~Allen-Zhu, Y.~Li, S.~Wang, L.~Wang, and W.~Chen, ``{LoRA}: Low-rank adaptation of large language models,'' in \emph{Proc. Int. Conf. Learn. Represent. (ICLR)}, Apr. 2022, pp. 1--13.

\bibitem{liu2023resource}
Z.~Liu, Q.~Lan, and K.~Huang, ``Resource allocation for multiuser edge inference with batching and early exiting,'' \emph{{IEEE} J. Sel. Areas Commun.}, vol.~41, no.~4, pp. 1186--1200, Apr. 2023.

\bibitem{shi2022multiuser}
W.~Shi, S.~Zhou, Z.~Niu, M.~Jiang, and L.~Geng, ``Multiuser co-inference with batch processing capable edge server,'' \emph{{IEEE} Trans. Wireless Commun.}, vol.~22, no.~1, pp. 286--300, Jan. 2022.

\bibitem{cang2024joint}
Y.~Cang, M.~Chen, and K.~Huang, ``Joint batching and scheduling for high-throughput multiuser edge {AI} with asynchronous task arrivals,'' \emph{{IEEE} Trans. Wireless Commun.}, vol.~23, no.~10, pp. 13\,782--13\,795, Oct. 2024.

\bibitem{zhang2024edge}
X.~Zhang, J.~Liu, Z.~Xiong, Y.~Huang, G.~Xie, and R.~Zhang, ``Edge intelligence optimization for large language model inference with batching and quantization,'' \emph{arXiv preprint arXiv:2405.07140}, 2024.

\bibitem{ma2024efficient}
Y.~Ma, K.~Gao, and C.~Zhang, ``Efficient latency optimization in batch processing server systems for multi-user co-inference,'' in \emph{Proc. IEEE/CIC Int. Conf. Commun. China Workshops (ICCC Workshops)}, Hangzhou, China, Aug. 2024, pp. 475--480.

\bibitem{yang2019computation}
Q.~Yang, X.~Luo, P.~Li, T.~Miyazaki, and X.~Wang, ``Computation offloading for fast {CNN} inference in edge computing,'' in \emph{Proc. Conf. Res. Adapt. Converg. Syst.}, Chongqing China, Sep. 2019, pp. 101--106.

\bibitem{she2024dynamic}
Y.~She, T.~Shi, J.~Wang, and B.~Liu, ``Dynamic batching and early-exiting for accurate and timely edge inference,'' in \emph{Proc. IEEE Veh. Technol. Conf. (VTC2024-Spring)}, Singapore, Jun. 2024, pp. 1--6.

\bibitem{yao2022loading}
M.~Yao, L.~Chen, J.~Zhang, J.~Huang, and J.~Wu, ``Loading cost-aware model caching and request routing for cooperative edge inference,'' in \emph{Proc. IEEE Int. Conf. Commun. (ICC)}, Seoul, South Korea, May 2022, pp. 2327--2332.

\bibitem{ogden2021many}
S.~S. Ogden, G.~R. Gilman, R.~J. Walls, and T.~Guo, ``Many models at the edge: Scaling deep inference via model-level caching,'' in \emph{Proc. IEEE Int. Conf. Auton. Comput. Self-Organizing Syst. (ACSOS)}, Washington, DC, USA, Sep. 2021, pp. 51--60.

\bibitem{padmanabhan2022gemel}
A.~Padmanabhan, N.~Agarwal, A.~Iyer, G.~Ananthanarayanan, Y.~Shu, N.~Karianakis, G.~H. Xu, and R.~Netravali, ``Gemel: Model merging for memory-efficient, real-time video analytics at the edge,'' in \emph{Proc. USENIX Symp. Netw. Syst. Des. Implement. (NSDI)}, Boston, MA, USA, Apr. 2023, pp. 973--994.

\bibitem{kwon2022yono}
Y.~D. Kwon, J.~Chauhan, and C.~Mascolo, ``{YONO}: Modeling multiple heterogeneous neural networks on microcontrollers,'' in \emph{Proc. ACM/IEEE Int. Conf. Inf. Process. Sensor Netw. (IPSN)}, Milano, Italy, May 2022, pp. 285--297.

\bibitem{ren2019collaborative}
J.~Ren, G.~Yu, Y.~He, and G.~Y. Li, ``Collaborative cloud and edge computing for latency minimization,'' \emph{{IEEE} Trans. Veh. Technol.}, vol.~68, no.~5, pp. 5031--5044, May 2019.

\bibitem{li2021adaptive}
M.~Li, J.~Gao, L.~Zhao, and X.~Shen, ``Adaptive computing scheduling for edge-assisted autonomous driving,'' \emph{{IEEE} Trans. Veh. Technol.}, vol.~70, no.~6, pp. 5318--5331, Jun. 2021.

\bibitem{10682995}
Y.~She, T.~Shi, J.~Wang, and B.~Liu, ``Dynamic batching and early-exiting for accurate and timely edge inference,'' in \emph{Proc. IEEE Veh. Technol. Conf. (VTC)}, Singapore, Singapore, Jun. 2024, pp. 1--6.

\bibitem{10549973}
Z.~Zhang, Y.~Zhao, H.~Li, and J.~Liu, ``{BCEdge}: {SLO}-aware {DNN} inference services with adaptive batch-concurrent scheduling on edge devices,'' \emph{{IEEE} Trans. Netw. Service Manag.}, vol.~21, no.~4, pp. 4131--4145, Aug. 2024.

\bibitem{9708995}
Z.~Yang, K.~Nahrstedt, H.~Guo, and Q.~Zhou, ``{DeepRT}: A soft real time scheduler for computer vision applications on the edge,'' in \emph{Proc. IEEE/ACM Symp. Edge Comput. (SEC)}, San Jose, CA, USA, Dec. 2021, pp. 271--284.

\bibitem{nvidiagpu}
\BIBentryALTinterwordspacing
NVIDIA, ``Nvidia {AMPERE} {GA102} {GPU} architecture,'' 2020. [Online]. Available: \url{https://www.nvidia.com/content/PDF/nvidia-ampere-ga-102-gpu-architecture-whitepaper-v2.pdf}
\BIBentrySTDinterwordspacing

\bibitem{linux}
R.~Love, \emph{Linux Kernel Development (3rd Edition)}, 3rd~ed., ser. Developer's Library.\hskip 1em plus 0.5em minus 0.4em\relax Addison-Wesley, 2010.

\bibitem{xu2022igniter}
F.~Xu, J.~Xu, J.~Chen, L.~Chen, R.~Shang, Z.~Zhou, and F.~Liu, ``{iGniter}: Interference-aware {GPU} resource provisioning for predictable {DNN} inference in the cloud,'' \emph{{IEEE} Trans. Parallel Distrib. Syst.}, vol.~34, no.~3, pp. 812--827, Mar. 2023.

\bibitem{huang2018multi}
G.~Huang, D.~Chen, T.~Li, F.~Wu, L.~van~der Maaten, and K.~Weinberger, ``Multi-scale dense networks for resource efficient image classification,'' in \emph{Proc. Int. Conf. Learn. Represent. (ICLR)}, Vancouver, Cananda, Apr. 2018, pp. 1--18.

\bibitem{wang2024joint}
H.~Wang, T.~Li, M.~Zhang, Q.~Li, H.~Cui, Y.~Jiang, and Z.~Yuan, ``Joint configuration optimization and {GPU} allocation for multi-tenant real-time video analytics on resource-constrained edge,'' \emph{{IEEE} Trans. Mobile Comput.}, early access 2024.

\bibitem{wu2023graft}
J.~Wu, L.~Wang, Q.~Jin, and F.~Liu, ``Graft: Efficient inference serving for hybrid deep learning with {SLO} guarantees via {DNN} re-alignment,'' \emph{{IEEE} Trans. Parallel Distrib. Syst.}, vol.~35, no.~2, pp. 280--296, Feb. 2024.

\bibitem{allahverdi2008survey}
A.~Allahverdi, C.~T. Ng, T.~E. Cheng, and M.~Y. Kovalyov, ``A survey of scheduling problems with setup times or costs,'' \emph{Eur. J. Oper. Res.}, vol. 187, no.~3, pp. 985--1032, Jun. 2008.

\bibitem{brucker1996single}
P.~Brucker and M.~Y. Kovalyov, ``Single machine batch scheduling to minimize the weighted number of late jobs,'' \emph{Math. Methods Oper. Res.}, vol.~43, pp. 1--8, Feb. 1996.

\bibitem{howard2018universal}
J.~Howard and S.~Ruder, ``Universal language model fine-tuning for text classification,'' in \emph{Proc. 56th Annu. Meet. Assoc. Comput. Linguistics (ACL)}, Melbourne, Australia, Jul. 2018, pp. 328--339.

\bibitem{ding2023parameter}
N.~Ding, Y.~Qin, G.~Yang, F.~Wei, Z.~Yang, Y.~Su, S.~Hu, Y.~Chen, C.-M. Chan, W.~Chen \emph{et~al.}, ``Parameter-efficient fine-tuning of large-scale pre-trained language models,'' \emph{Nat. Mach. Intell.}, vol.~5, no.~3, pp. 220--235, Mar. 2023.

\bibitem{he-etal-2021-effectiveness}
R.~He, L.~Liu, H.~Ye, Q.~Tan, B.~Ding, L.~Cheng, J.~Low, L.~Bing, and L.~Si, ``On the effectiveness of adapter-based tuning for pretrained language model adaptation,'' in \emph{Proc. 59th Annu. Meeting Assoc. Comput. Linguistics and 11th Int. Joint Conf. Natural Lang. Process.}, Aug. 2021, pp. 2208--2222.

\bibitem{fu2023effectiveness}
Z.~Fu, H.~Yang, A.~M.-C. So, W.~Lam, L.~Bing, and N.~Collier, ``On the effectiveness of parameter-efficient fine-tuning,'' in \emph{Proc. AAAI Conf. Artif. Intell. (AAAI)}, Washington, DC, USA, Feb. 2023, pp. 12\,799--12\,807.

\end{thebibliography}
\newpage
\clearpage
\setcounter{page}{1}

\appendix
Before presenting the detailed proof, we introduce the following statements, which apply to Appendices \ref{proof:proposition_2_1} through \ref{proof:proposition_3_3}. A user scheduling sequence ${\bf{X}} = \left\{{\bf{X}}_{1},\dots,{\bf{X}}_{N}\right\}$ can be partitioned into multiple disjoint sub-sequences. Denote the $a$-th sub-sequence as $\Pi_{a} = \left\{{\bf{X}}_{n},{\bf{X}}_{n+1},\dots,{\bf{X}}_{n'}\right\}$, where $1\le n\le n'\le N$. Moreover, let $c\left(\Pi_{a}\right)$ be the completion time of $\Pi_{a}=\left\{\dots, {\bf{X}}_{n}\right\}$, where $c\left(\Pi_{a}\right)=\sum\limits_{n'=1}^{n}t_{n'}$.
\begin{appendices}
\section{Proof of Proposition \ref{proposition_1}}\label{proof:proposition_1}
We first prove the first statement by showing that minimizing $t_{n}^{\text{up}}$ is a sufficient condition for optimal $y_{n,k}$. Given a user scheduling ${\bf{X}}$, suppose that minimizing $t_{n}^{\text{up}}$ leads to a bandwidth allocation ${\bf{Y}}$, which is not optimal under ${\bf{X}}$. This implies that the total uploading time of the users in ${\bf{X}}$ with the optimal bandwidth allocation is no shorter than that with ${\bf{Y}}$. However, this contradicts the definition of the optimal bandwidth allocation. This is because ${\bf{Y}}$ yields the minimum total uploading time for the users in ${\bf{X}}$, and the total model loading and inference computing time are independent of the bandwidth allocation, implying that allocating bandwidth according to ${\bf{Y}}$ serves at least the same number of users as the optimal bandwidth allocation. Therefore, minimizing $t_{n}^{\text{up}}$ guarantees the optimality of the bandwidth allocation for the users in ${\bf{X}}$, completing the proof of the first statement. 
    
    Now we prove the second statement and derive \eqref{optimal_uploading} and \eqref{optimal_bandwidth}. Let $\dot{\mathcal{K}}_{n}=\left\{k\mid x_{n,k}=1\right\}$ be the set of scheduled users in batch $n$. 
    
    If $\left|\dot{\mathcal{K}}_{n}\right|\ge1$, then $t_{n}^{\text{up}}
    =\mathop{\max}\limits_{k\in\mathcal{K}}\left\{t_{n,k}^{\text{up}}\right\}
    =\mathop{\max}\limits_{k\in\dot{\mathcal{K}}_{n}}\left\{t_{n,k}^{\text{up}}\right\}
    \ge\frac{\sum\limits_{k\in\dot{\mathcal{K}}_{n}}t_{n,k}^{\text{up}}}{\left|{\dot{\mathcal{K}}_{n}}\right|}
    \ge \sqrt[\left|{\dot{\mathcal{K}}_{n}}\right|]{\prod\limits_{k\in\dot{\mathcal{K}}_{n}}t_{n,k}^{\text{up}}}$. The second inequality holds with equality if and only if $t_{n,k}^{\text{up}} = t_{n,k'}^{\text{up}}$ for all $k, k' \in \dot{\mathcal{K}}_{n}$, i.e., all users in $\dot{\mathcal{K}}_{n}$ have the same uploading time.
    On the one hand, for all $k \in \dot{\mathcal{K}}_{n}$, since $t_{n,k}^{\text{up}} = \frac{D_{k}}{y_{n,k}B\bar{R}_{k}}$, the equality $\frac{\sum\limits_{k\in\dot{\mathcal{K}}_{n}}t_{n,k}^{\text{up}}}{\left|{\dot{\mathcal{K}}_{n}}\right|}
    =\sqrt[\left|{\dot{\mathcal{K}}_{n}}\right|]{\prod\limits_{k\in\dot{\mathcal{K}}_{n}}t_{n,k}^{\text{up}}}$ holds if and only if $y_{n,k} = \frac{D_{k}}{B\bar{R}_{k}\sum\limits_{k'\in\dot{\mathcal{K}}_{n}}\frac{D_{k'}}{B\bar{R}_{k'}}}= \frac{D_{k}}{B\bar{R}_{k}\sum\limits_{k'\in\mathcal{K}}\frac{x_{n,k'}D_{k'}}{B\bar{R}_{k'}}}$. Substituting this into $t_{n,k}^{\text{up}}$, the minimum uploading time is given by $t_{n,\min}^{\text{up}}=\sqrt[\left|{\dot{\mathcal{K}}_{n}}\right|]{\prod\limits_{k\in\dot{\mathcal{K}}_{n}}t_{n,k}^{\text{up}}}=t_{n,k}^{\text{up}}= \frac{D_{k}}{\frac{D_{k}}{B\bar{R}_{k}\sum\limits_{k'\in\mathcal{K}}\frac{x_{n,k'}D_{k'}}{B\bar{R}_{k'}}}B\bar{R}_{k}}=\sum\limits_{k'\in\mathcal{K}}\frac{x_{n,k'}D_{k'}}{B\bar{R}_{k'}}$.
    On the other hand, for all $k\in \mathcal{K}\setminus\dot{\mathcal{K}}_{n}$, since $x_{n,k} = 0$, we have $y_{n,k} = 0$, which is constrained by \eqref{const_7}. 
    
    If $\left|\dot{\mathcal{K}}_{n}\right|=0$, it indicates that no users are scheduled in batch $n$, i.e., $x_{n,k}=0, \ \forall k\in\mathcal{K}$. Then, $t_{n}^{\text{up}}=0$ and $y_{n,k}=0$, which are consistent with \eqref{optimal_uploading} and \eqref{optimal_bandwidth}, respectively. This completes the proof of the second statement and concludes the proof.
\section{Proof of Theorem \ref{proposition_2_1}}\label{proof:proposition_2_1}
    The proof of the first statement proceeds as follows. Suppose the optimal user scheduling is $\left\{\dots,\Pi_{a},\Pi_{a+1},\Pi_{a+2},\Pi_{a+3},\dots\right\}$, where users in $\Pi_{a}$ and $\Pi_{a+2}$ request the same AI model $i$, users in $\Pi_{a+3}$ request model $i_{3}$, and users in $\Pi_{a+1}=\left\{{\bf{X}}_{n_{1}},\dots,{\bf{X}}_{n_{2}}\right\}$ request at least one AI model other than model $i$. Moreover, let models $i_{1}$ and $i_{2}$ denote the models requested in batches $n_{1}$ and $n_{2}$, respectively, where $i_{3}\ne i$, $i_{1}\ne i$, and $i_{2}\ne i$. Now consider exchanging $\Pi_{a+1}$ and $\Pi_{a+2}$ in the original user scheduling. Denote the completion time of $\Pi_{a+3}$ in the new user scheduling as $c'\left(\Pi_{a+3}\right)$. We claim that $c'\left(\Pi_{a+3}\right)\le c\left(\Pi_{a+3}\right)$. The reasoning is as follows. In the original user scheduling $\left\{\dots,\Pi_{a},\Pi_{a+1},\Pi_{a+2},\Pi_{a+3},\dots\right\}$, the data size of parameter blocks that need to be loaded in the first batch of $\Pi_{a+1}$, $\Pi_{a+2}$, and $\Pi_{a+3}$ are, respectively: $\sum\limits_{j\in \mathcal{J}_{i_{1}}\setminus\mathcal{J}_{i}}S_{j}$, $\sum\limits_{j\in \mathcal{J}_{i}\setminus\mathcal{J}_{i_{2}}}S_{j}$, and $\sum\limits_{j\in \mathcal{J}_{i_{3}}\setminus\mathcal{J}_{i}}S_{j}$. 
    In the new user scheduling $\left\{\dots,\Pi_{a},\Pi_{a+2},\Pi_{a+1},\Pi_{a+3},\dots\right\}$, the data size of parameter blocks that need to be loaded in the first batch of $\Pi_{a+2}$, $\Pi_{a+1}$, and $\Pi_{a+3}$ are, respectively: 0, $\sum\limits_{j\in \mathcal{J}_{i_{1}}\setminus\mathcal{J}_{i}}S_{j}$, and $\sum\limits_{j\in \mathcal{J}_{i_{3}}\setminus\mathcal{J}_{i_{2}}}S_{j}$. For any parameter block $ j\in \mathcal{J}_{i_{3}}\setminus\mathcal{J}_{i_{2}}$, if $j\in\mathcal{J}_{i}$, then $j\in\mathcal{J}_{i}\setminus\mathcal{J}_{i_{2}}$, as $j\notin\mathcal{J}_{i_{2}}$. Conversely, if $j\notin\mathcal{J}_{i}$, then $j\in\mathcal{J}_{i_{3}}\setminus\mathcal{J}_{i}$, as $j\in\mathcal{J}_{i_{3}}$. Thus, for all $j\in\mathcal{J}_{i_{3}}\setminus\mathcal{J}_{i_{2}}$, we have $j\in \left(\mathcal{J}_{i}\setminus\mathcal{J}_{i_{2}}\right)\cup\left(\mathcal{J}_{i_{3}}\setminus\mathcal{J}_{i}\right)$. This implies that $\mathcal{J}_{i_{3}}\setminus\mathcal{J}_{i_{2}}\subseteq\left(\mathcal{J}_{i}\setminus\mathcal{J}_{i_{2}}\right)\cup\left(\mathcal{J}_{i_{3}}\setminus\mathcal{J}_{i}\right)$. Consequently, we have: $\sum\limits_{j\in \mathcal{J}_{i_{1}}\setminus\mathcal{J}_{i}}S_{j}+\sum\limits_{j\in \mathcal{J}_{i}\setminus\mathcal{J}_{i_{2}}}S_{j}+\sum\limits_{j\in \mathcal{J}_{i_{3}}\setminus\mathcal{J}_{i}}S_{j}\ge\sum\limits_{j\in \mathcal{J}_{i_{1}}\setminus\mathcal{J}_{i}}S_{j}+\sum\limits_{j\in \mathcal{J}_{i_{3}}\setminus\mathcal{J}_{i_{2}}}S_{j}$. This inequality represents that fewer parameter blocks need to be loaded in the new user scheduling after exchanging $\Pi_{a+1}$ and $\Pi_{a+2}$ in the original user scheduling. As a result, the completion time $c'(\Pi_{a+3})$ of $\Pi_{a+3}$ in the new scheduling is reduced, implying that more users could be served. This contradicts the optimality of the assumption, completing the proof of the first statement.
    
    Now, we prove the second statement. Suppose in the optimal user scheduling, $\Pi_{a} = \left\{{\bf{X}}_{n_{1}},\dots,{\bf{X}}_{n},\dots,{\bf{X}}_{n_{2}}\right\}$ consists exclusively of batches loading model $i$, and users in ${\bf{X}}_{n}$ are not the $\left|{\bf{X}}_{n}\right|$ users with the lowest $p_{k}$ in $\mathcal{K}_{i}\setminus\bigcup\limits_{n'=n_{1}}^{n-1}\dot{\mathcal{K}}_{n'}$, where $\dot{\mathcal{K}}_{n'}$ is the set of users scheduled in batch $n'$. Then, we can adjust ${\bf{X}}_{n}$ by selecting the $\left|{\bf{X}}_{n}\right|$ users with the lowest $p_{k}$ from $\mathcal{K}_{i}\setminus\bigcup\limits_{n'=n_{1}}^{n-1}\dot{\mathcal{K}}_{n'}$. This adjustment maintains $t_{n}^{\text{load}}$ and $t_{n}^{\text{comp}}$ unchanged, as $t_{n}^{\text{load}}$ is 0, and $t_{n}^{\text{comp}}$ depends only on the batch size $\left|{\bf{X}}_{n}\right|$. However, this adjustment reduces $t_{n}^{\text{up}}$. Consequently, more users could be served in batch $n$ in the new user scheduling, which contradicts the optimality of the assumption. This completes the proof of the second statement and concludes the proof.
\section{Proof of Theorem \ref{proposition_2_2}}\label{proof:proposition_2_2}
    Suppose in the optimal user scheduling, ${\bf{X}}_{n}$ and ${\bf{X}}_{n+1}$ are non-empty, and users in ${\bf{X}}_{n}$ and ${\bf{X}}_{n+1}$ request the same model $i$. Additionally, assume that $A\left(\left|{\bf{X}}_{n}\right|+1\right)\le Q$, which indicates that batch $n$ can accommodate more than $\left|{\bf{X}}_{n}\right|$ users without exceeding the GPU memory capacity. Based on these assumptions, we can move some users from batch $n+1$ to batch $n$ within the memory constraint, resulting in ${\bf{X}}_{n}$ and ${\bf{X}}_{n+1}$ being updated to ${\bf{X}}'_{n}$ and ${\bf{X}}'_{n+1}$. Denote the data uploading time of the users in ${\bf{X}}'_{n}$ and ${\bf{X}}'_{n+1}$ by ${t'}_{n}^{\text{up}}$ and ${t'}_{n+1}^{\text{up}}$, respectively. Similarly, denote the total latency of the users in ${\bf{X}}'_{n}$ and ${\bf{X}}'_{n+1}$ by ${t'}_{n}$ and ${t'}_{n+1}$, respectively. Since ${\bf{X}}'_{n}\cup{\bf{X}}'_{n+1}$ schedules the same users as ${\bf{X}}_{n}\cup{\bf{X}}_{n+1}$, we can derive that ${t'}_{n}^{\text{up}}+{t'}_{n+1}^{\text{up}} = t_{n}^{\text{up}}+t_{n+1}^{\text{up}}$ and $\mu_i\sum\limits_{x_{n,k}\in {\bf{X}}_n\cup{\bf{X}}_{n+1} }x_{n,k} = \mu_i\sum\limits_{x_{n,k}\in {\bf{X}}'_{n}\cup{\bf{X}}'_{n+1} }x_{n,k}$. If ${\bf{X}}'_{n+1}$ becomes empty after the adjustment, then batch $n+1$ is eliminated, and the reduction in total latency is: $t_{n} + t_{n+1} - {t'}_{n} - {t'}_{n+1}=\beta_i$. Otherwise, if ${\bf{X}}'_{n+1}$ is not empty after the adjustment, then the total latency remains unchanged, satisfying: $t_{n} + t_{n+1} - {t'}_{n} - {t'}_{n+1}=0$. Therefore, in either case, the total latency follows that $t_{n+1} + t_{n} \ge {t'}_{n+1} + {t'}_{n}$. The inequality indicates that more users could be served when some users in batch $n+1$ are moved to batch $n$ in the original user scheduling under the given assumption, which contradicts the optimality of the assumption. This completes the proof.

\section{Proof of Theorem \ref{proposition_3}}\label{proof:proposition_3}
    Suppose the optimal user scheduling is $\left\{\dots,\Pi_{a},\Pi_{a+1},\Pi_{a+2},\Pi_{a+3},\dots\right\}$, where all users in $\Pi_{a}$ and $\Pi_{a+2}$ request AI models from the same model cluster $m$, users in $\Pi_{a+3}$ request models from model cluster $m_{3}$, and users in $\Pi_{a+1}=\left\{{\bf{X}}_{n_{1}},\dots, {\bf{X}}_{n_{2}}\right\}$ request models from at least one model cluster excluding cluster $m$. Additionally, the users in batch $n_{1}$ and $n_{2}$ request models from clusters $m_{1}$ and $m_{2}$, respectively, where $m_{3}\ne m$, $m_{1}\ne m$, and $m_{2}\ne m$. 
    Now, consider exchanging $\Pi_{a+1}$ with $\Pi_{a+2}$ in the original user scheduling. The completion time of $\Pi_{a+3}$ in the new user scheduling is denoted by $c'\left(\Pi_{a+3}\right)$. $c'\left(\Pi_{a+3}\right)$ follows that $c'\left(\Pi_{a+3}\right)\le c\left(\Pi_{a+3}\right)$. This is because the model loaded in the last batch of $\Pi_{a}$ and the model loaded in the first batch of $\Pi_{a+2}$ belong to the same model cluster, allowing backbone sharing across these two models. Such sharing could reduce the model loading time in the new user scheduling. Additionally, if $m_{2}=m_{3}$, $c'\left(\Pi_{a+3}\right)$ can be reduced further. Therefore, more users could be served, if $\Pi_{a+1}$ is exchanged with $\Pi_{a+2}$ in the original user scheduling $\left\{\dots,\Pi_{a},\Pi_{a+1},\Pi_{a+2},\Pi_{a+3},\dots\right\}$, which contradicts the optimality of the assumption. This completes the proof. 

\section{Proof of Corollary \ref{proposition_3_3}}\label{proof:proposition_3_3}
The first statement comes from the fact that the loading order of clusters does not impact loading time, as there is no backbone sharing between models from different clusters. 

The proof of the second statement is as follows. 
    Consider $\Pi=\left\{\dots,\Pi_{a},\Pi_{a+1},\Pi_{a+2},\Pi_{a+3},\dots\right\}$, where users in $\Pi$ request models from the same model cluster $m$, and let $\mathcal{I}'_{m}$, with $I'_{m}$ models, be the set of all requested models in $\Pi$. Based on Theorem \ref{proposition_2}, users requesting the same model are batched in the same or consecutive batches. Thus, the total data size of layers to be loaded for $\Pi$ is $\sum\limits_{i=1}^{I'_{m}}\sum\limits_{l=1}^{L_{i}}S'_{i,l}-\dot{S}_{m}$, where $\sum\limits_{i=1}^{I'_{m}}\sum\limits_{l=1}^{L_{i}}S'_{i,l}$ is the total data size of all models in $\mathcal{I}'_{m}$, with $S'_{i,l}$ being the data size of layer $l$ of model $i$, and $\dot{S}_{m}$ is the total data size of layers not requiring loading due to the backbone sharing. Since the first term is constant and independent of the model loading order, a model loading order for the users in $\Pi$ that maximizes $\dot{S}_{m}$ minimizes the total model loading time, thereby preserving the optimality of the user scheduling solution. Therefore, in the rest of the proof, we aim to determine a model loading order for the users in $\Pi$ that maximizes $\dot{S}_{m}$.
    
    To analyze the impact of model loading order on $\dot{S}_{m}$, consider exchanging two neighboring sub-sequences, say $\Pi_{a+1}$ and $\Pi_{a+2}$, in $\Pi$. This exchange affects the total data size of layers not requiring loading in $\Pi_{a+1}$, $\Pi_{a+2}$, and $\Pi_{a+3}$. Suppose that users in $\Pi_{a}$, $\Pi_{a+1}$, $\Pi_{a+2}$, and $\Pi_{a+3}$ request models $i$, $i_{1}$, $i_{2}$, and $i_{3}$, respectively, with $l_{i_{1}}\le l_{i_{2}}$ (The conclusion of the following derivation also holds when $l_{i_{1}}\ge l_{i_{2}}$, with the proof omitted.). The total data size of layers not requiring loading in $\Pi_{a+1}$, $\Pi_{a+2}$, and $\Pi_{a+3}$ before the exchange is $\dot{S}_{m,1}=\sum\limits_{l=0}^{\min\left\{l_{i},l_{i_{1}}\right\}}\tilde{S}_{m,l}+\sum\limits_{l=0}^{l_{i_{1}}}\tilde{S}_{m,l}+\sum\limits_{l=0}^{\min\left\{l_{i_{2}}, l_{i_{3}}\right\}}\tilde{S}_{m,l}$, where $\tilde{S}_{m,l}$ is the data size of the $l$-th shared bottom layer of models in model cluster $m$, with $l_{0} = 0$ and $\tilde{S}_{m,0} = 0$. Here, the three terms in $\dot{S}_{m,1}$ respectively correspond to the data size of layers not requiring loading in $\Pi_{a+1}$, $\Pi_{a+2}$, and $\Pi_{a+3}$. After exchanging $\Pi_{a+1}$ and $\Pi_{a+2}$ in $\Pi$, the new user scheduling becomes $\Pi'=\left\{\dots,\Pi_{a},\Pi_{a+2},\Pi_{a+1},\Pi_{a+3},\dots\right\}$, and the total data size of layers not requiring loading in $\Pi_{a+2}$, $\Pi_{a+1}$, and $\Pi_{a+3}$ is $\dot{S}_{m,2}=\sum\limits_{l=0}^{\min\left\{l_{i},l_{i_{2}}\right\}}\tilde{S}_{m,l}+\sum\limits_{l=0}^{l_{i_{1}}}\tilde{S}_{m,l}+\sum\limits_{l=0}^{\min\left\{l_{i_{1}}, l_{i_{3}}\right\}}\tilde{S}_{m,l}$. Then, the difference between $\dot{S}_{m,1}$ and $\dot{S}_{m,2}$ is given by $\dot{S}_{m,1}-\dot{S}_{m,2}=\sum\limits_{l=0}^{\min\left\{l_{i},l_{i_{1}}\right\}}\tilde{S}_{m,l}+\sum\limits_{l=0}^{\min\left\{l_{i_{2}}, l_{i_{3}}\right\}}\tilde{S}_{m,l}-\sum\limits_{l=0}^{\min\left\{l_{i},l_{i_{2}}\right\}}\tilde{S}_{m,l}-\sum\limits_{l=0}^{\min\left\{l_{i_{1}}, l_{i_{3}}\right\}}\tilde{S}_{m,l}$. Recalling that $l_{i_{1}}\le l_{i_{2}}$ in the assumption, we derive the following case analysis for this difference. If $l_{i}\le l_{i_{3}}$, we analyze three possible sub-cases. In the first sub-case, when $l_{i}\le l_{i_{1}}$, we have $\dot{S}_{m,1}-\dot{S}_{m,2}=\sum\limits_{l=0}^{l_{i}}\tilde{S}_{m,l}+\sum\limits_{l=0}^{\min\left\{l_{i_{2}}, l_{i_{3}}\right\}}\tilde{S}_{m,l}-\sum\limits_{l=0}^{l_{i}}\tilde{S}_{m,l}-\sum\limits_{l=0}^{\min\left\{l_{i_{1}}, l_{i_{3}}\right\}}\tilde{S}_{m,l}=\sum\limits_{l=0}^{\min\left\{l_{i_{2}}, l_{i_{3}}\right\}}\tilde{S}_{m,l}-\sum\limits_{l=0}^{\min\left\{l_{i_{1}}, l_{i_{3}}\right\}}\tilde{S}_{m,l}$. Next, we can derive that
    \begin{equation}
        \dot{S}_{m,1}-\dot{S}_{m,2}=
    \begin{cases}
    \begin{aligned}
        &\sum\limits_{l=0}^{l_{i_{3}}}\tilde{S}_{m,l}-\sum\limits_{l=0}^{l_{i_{3}}}\tilde{S}_{m,l}=0, \ \text{if}\ l_{i_{3}}\le l_{i_{1}},\\
        &\sum\limits_{l=0}^{l_{i_{3}}}\tilde{S}_{m,l}-\sum\limits_{l=0}^{l_{i_{1}}}\tilde{S}_{m,l}\ge 0,\ \text{if} \ l_{i_{1}}\le l_{i_{3}}\le l_{i_{2}},\\
        &\sum\limits_{l=0}^{l_{i_{2}}}\tilde{S}_{m,l}-\sum\limits_{l=0}^{l_{i_{1}}}\tilde{S}_{m,l}\ge 0,\ \text{if} \  l_{i_{2}}\le l_{i_{3}}.   
    \end{aligned}
    \end{cases}.
    \end{equation}
    Thus, $\dot{S}_{m,1}-\dot{S}_{m,2}\ge 0$ consistently holds in this sub-case.
    In the second sub-case, when $l_{i_{1}}\le l_{i}\le l_{i_{2}}$, we have $\dot{S}_{m,1}-\dot{S}_{m,2}=\sum\limits_{l=0}^{l_{i_{1}}}\tilde{S}_{m,l}+\sum\limits_{l=0}^{\min\left\{l_{i_{2}}, l_{i_{3}}\right\}}\tilde{S}_{m,l}-\sum\limits_{l=0}^{l_{i}}\tilde{S}_{m,l}-\sum\limits_{l=0}^{\min\left\{l_{i_{1}},l_{i_{3}}\right\}}\tilde{S}_{m,l}$. 
    Since $l_{i_{1}}\le l_{i}\le l_{i_{2}}$ holds in this sub-case and $l_{i}\le l_{i_{3}}$ holds in the case overall, we have $\sum\limits_{l=0}^{\min\left\{l_{i_{1}},l_{i_{3}}\right\}}\tilde{S}_{m,l}=\sum\limits_{l=0}^{l_{i_{1}}}\tilde{S}_{m,l}$. Furthermore, we can conclude that $\dot{S}_{m,1} - \dot{S}_{m,2} \ge 0$ in the second sub-case.
    In the third sub-case, when $l_{i_{2}} \le l_{i}$, we can derive that $\dot{S}_{m,1}-\dot{S}_{m,2}=\sum\limits_{l=0}^{l_{i_{1}}}\tilde{S}_{m,l}+\sum\limits_{l=0}^{l_{i_{2}}}\tilde{S}_{m,l}-\sum\limits_{l=0}^{l_{i_{2}}}\tilde{S}_{m,l}-\sum\limits_{l=0}^{l_{i_{1}}}\tilde{S}_{m,l}=0$ based on $l_{i}\le l_{i_{3}}$ for the overall case. Consequently, $\dot{S}_{m,1}-\dot{S}_{m,2}\ge 0$ always holds when $l_{i}\le l_{i_{3}}$. This indicates that scheduling users in $\Pi$ in ascending order of $l_{i}$ of the requested models maximizes $\dot{S}_{m}$. Similarly, $\dot{S}_{m,2}-\dot{S}_{m,1}\ge 0$ always holds if $l_{i}\ge l_{i_{3}}$. This implies that scheduling users in $\Pi$ in descending order of $l_{i}$ of the requested models maximizes $\dot{S}_{m}$. Since we set $l_{0}$ to 0, scheduling users in ascending order of $l_{i}$ of the requested models preserves the optimality of the user scheduling. 
    This completes the proof of the second statement and concludes the proof of Corollary \ref{proposition_3_3}.

\end{appendices}

\end{document}